\newenvironment{breakablealgorithm}
{
	\begin{center}
		\refstepcounter{algorithm}
		\hrule height.8pt depth0pt \kern2pt
		\renewcommand{\caption}[2][\relax]{
			{\raggedright\textbf{\ALG@name~\thealgorithm} ##2\par}%
			\ifx\relax##1\relax 
			\addcontentsline{loa}{algorithm}{\protect\numberline{\thealgorithm}##2}%
			\else 
			\addcontentsline{loa}{algorithm}{\protect\numberline{\thealgorithm}##1}%
			\fi
			\kern2pt\hrule\kern2pt
		}
	}{
		\kern2pt\hrule\relax
	\end{center}
}
\def\argmin{\mathop{\rm argmin}}
\newcommand{\E}{{\rm E}}
\newcommand{\var}{{\rm Var}}
\renewcommand{\P}{{\rm P}}
\newcommand{\T}{{\rm T}}
\begin{document}

\title{Distributed nonparametric regression imputation for missing response problems with large-scale data}

\author{\name Ruoyu Wang \email wangruoyu17@mails.ucas.edu.cn \\
       \addr Academy of Mathematics and Systems Science\\
       Chinese Academy of Sciences\\
       Beijing, 100190, China
       \AND
       \name Miaomiao Su \email smm@amss.ac.cn \\
       \addr Academy of Mathematics and Systems Science\\
       Chinese Academy of Sciences\\
       Beijing, 100190, China
       \AND
       \name Qihua Wang \email qhwang@amss.ac.cn\\
       \addr Academy of Mathematics and Systems Science\\
       Chinese Academy of Sciences\\
       Beijing, 100190, China}

\editor{}

\maketitle

\begin{abstract}
	Nonparametric regression imputation is commonly used in missing data analysis.
	However, it suffers from the ``curse of dimension".  
	The problem can be alleviated by the explosive sample size in the era of big data, while the large-scale data size presents some challenges on the storage of data and the calculation of estimators. 
	These challenges make the classical nonparametric regression imputation methods no longer applicable. 
	This motivates us to develop two distributed nonparametric regression imputation methods. 
	One is based on kernel smoothing and the other on the sieve method. 
	The kernel-based distributed imputation method has extremely low communication cost and the sieve-based distributed imputation method can accommodate more local machines. To illustrate the proposed imputation methods,  response mean estimation is considered.
	Two distributed nonparametric regression imputation estimators are proposed for the response mean, which are proved to be asymptotically normal with asymptotic variances achieving the semiparametric efficiency bound. 
	The proposed methods are evaluated through simulation studies and are illustrated by a real data analysis.
\end{abstract}

\begin{keywords}
	Distributed data, Divide and conquer, Kernel method, Missing data,  Sieve method
\end{keywords}

\section{Introduction}

Missing data is a common issue that practitioners may face in data analysis. 
A typical example of missing data is the missing response problem such as nonresponse in sample surveys and dropout in clinical trials. 
For missing response problems, the nonparametric regression imputation methods, which can often produce robust and efficient estimates \citep{Cheng1994nonparametric, hahn1998role}, are commonly used to deal with missing values.
However, the nonparametric regression imputation methods suffer from the ``curse of dimension",  which deteriorates their finite sample performance since the sample size for achieving a given estimation accuracy needs to increase exponentially as the dimension of the covariate vector.
This problem restricts the application of the nonparametric regression imputation methods to high dimension regression problems because of the limitation of sample size. We conduct a detailed discussion of the ``curse of dimension" in nonparametric regression imputation estimation in Section \ref{subsec: kernel method}.
In the big data era, however, the sample size is extremely large in some cases.
On the one hand, the large sample size makes it possible to achieve the desired estimation accuracy using the nonparametric regression imputation methods when the dimension of the covariate vector is large.
On the other hand, it may be infeasible to keep the large-scale data set in memory or even store all the data on a single computer when the data size is too large \citep{fan2014challenges, wang2016statistical}. And computations of the nonparametric regression imputation estimations are usually time-consuming or even infeasible. 
These motivate us to develop distributed nonparametric regression imputation methods to solve the computation problem while retaining the good theoretical properties of the classical nonparametric methods and mitigating the so-called curse of dimension. 
To our knowledge, this problem has not been investigated in the literature. 
In this paper, we consider the estimation problem in the presence of missing responses when data are stored distributionally on different machines, and develop two distributed nonparametric regression imputation methods by focusing on the estimation of response mean. 

To reduce the computational burden, we adopt the divide and conquer strategy.
The main idea of this strategy is to calculate some summary statistics on each local machine and then aggregate the results from local machines to get a final estimate.
By doing so, we can substantially reduce the computing time and alleviate the computer memory requirements. 
The divide and conquer approach was first studied by \cite{Mcdonald2009dist} for multinomial regression.  
\cite{Zhang2013dist} rigorously showed that in parametric estimation problems with fully observed data, the divide and conquer approach generally has greater efficiency than the naive approach that uses only the sample on a single machine. 
\cite{Lee2017dist} and \cite{battey2018distributed} extended this approach to sparse linear model in the high-dimensional setting. \cite{tang2020distributed} further investigated the application of the divide and conquer strategy in sparse generalized linear models. Distributed inference problem with non-smooth loss functions, e.g., quantile regression, was studied via this strategy by \cite{Volgushev2019quantile}. 

In this paper, we extend the divide and conquer strategy to missing response problems. We first develop a kernel-based distributed imputation (KDI) approach to estimate the response mean.
Under some mild conditions, it is shown that the resulting estimator is asymptotically normal and has the same asymptotic variance as the classical nonparametric kernel regression imputation estimator \citep{Cheng1994nonparametric}, which achieves the semiparametric efficiency bound (see Appendix A for a brief introduction of the bound). 
The KDI method needs very little communication between machines because only a real value is required to transmit from each local machine to the central machine. 
As discussed in \cite{Zhang2013dist}, communication between different machines may be prohibitively expensive, and the difference in communication complexity between different algorithms can be significant.
Thus distributed estimation methods that require fairly limited communication are of more interest. 
Clearly, our KDI method is just this type.

However, the KDI method introduces additional bias if the number of machines is too large, which is a common drawback of one-shot communication approaches \citep{jordan2019communication}. 
To overcome this problem, we propose an alternative multi-round imputation method based on another nonparametric method, sieve method \citep{newey1994asymptotic,ai2003efficient,chen2007large,chen2012estimation, belloni2015some}, and call this method the sieve-based distributed imputation (SDI) method. 
Compared to the KDI method, the SDI method can accommodate distributed systems with more machines and thus has the potential of further reducing computing time. 
Under certain conditions, the SDI estimator of the response mean is also asymptotically normal and the asymptotic variance achieves the semiparametric efficiency bound. 
However, the SDI method needs more communication than the KDI.
If the communication cost is high, the KDI method is recommended and one should limit the number of local machines to avoid the additional bias of this method.
If the communication cost is low, the  SDI method is recommended and one can use a lot of local machines to reduce computing time.

The rest of this paper is organized as follows. In Section \ref{sec: kernel}, we present the KDI estimator and derive its theoretical properties.
In Section \ref{sec: sieve}, we propose the SDI method and present its theoretical properties. 
Simulation results are given in Section \ref{sec: sim}. 
As an illustration, a real data analysis is provided in Section 5. 
All proofs are relegated to the Appendix.

\section{Kernel-Based One-shot Method}\label{sec: kernel} 
\subsection{Methodology}\label{subsec: kernel method}

Let $Y$ be the response variable and $X$ the $d$-dimension completely observed covariate vector. Suppose we have i.i.d. incomplete observations $\{(\delta_{i}, Y_{i}, X_{i}): i=1,\dots,N\}$, 
where $\delta_{i}=1$ if $Y_{i}$ is observed and $\delta_{i} = 0$ otherwise. 
A classical nonparametric method to estimate the response mean $\mu$ is the nonparametric kernel regression imputation method due to \cite{Cheng1994nonparametric}. The main idea of regression imputation is to impute the missing response $Y$ by its conditional mean $m(X)=\E[Y\mid X]$. Throughout this paper, we assume that response is missing at random (MAR).
That is, $Y\Perp\delta\mid X$. Under MAR,  $\E[Y\mid X]=\E[Y\mid X,\delta=1]$. Thus to estimate $\mu$, \cite{Cheng1994nonparametric} first estimate $m(x)=\E[Y\mid X=x]$ by
\begin{equation*}\label{est: m1}
	\hat{m}_{\mathbb{K}}(x)=\frac{\sum^N_{i=1}K_{h}(X_i-x)\delta_iY_i}{\sum^N_{i=1}K_{h}(X_i-x)\delta_i}
\end{equation*} 
and then the final estimator of $\mu$ is given by 
\begin{equation}\label{est: no truct}
	\hat{\mu}_{\mathbb{K}}= N^{-1}\sum^N_{i=1}\{\delta_iY_i+(1-\delta_i)\hat{m}_{\mathbb{K}}(X_i)\},
\end{equation}
where $K_{h}(\cdot) = h^{-d}K(\cdot/h)$, $K(\cdot)$ is some kernel function and $h$ is a bandwidth sequence that decreases to zero as $N\to\infty$. 

As pointed out previously, the estimator $\hat{\mu}_{\mathbb{K}}$ suffers from the ``curse of dimension" problem when the dimension of $X$ is high. The nonparametric regression imputation is regarded reliable only when the number of covariates is very small in some existing works \citep{Hu2012core,chen2017finite} due to this problem. Fortunately, this problem can be mitigated by the large sample size 
in the big data era. Next, we give some heuristic discussions. 
Under some mild conditions, $\hat{\mu}_{\mathbb{K}} - \mu$ admits the following decomposition
\begin{equation}\label{eq: N}
	\hat{\mu}_{\mathbb{K}} - \mu = \psi_{N} + R_{N}
\end{equation}
where $\psi_{N}=N^{-1}\sum_{i=1}^{N}\left\{\delta_{i}Y_{i}/\pi(X_{i}) + (\pi(X_{i}) -\delta_{i})m(X_{i})/\pi(X_{i})- \mu\right\}$ and $\pi(x) = P(\delta=1\mid X=x)$. The term $R_{N} = O_{P}\left(1 /(Nh^{d}) + h^{q}\right)$ where $q$ is a quantity that indicates the smoothness of some specific functions determined by the underlying data generation process. Convergence rate of the term $\psi_{N}$ is dimension free while that of the term $R_{N}$ does depend on $d$. To discuss the impact of dimension, we focus on $R_{N}$. With the optimal choice of the bandwidth that minimizes the convergence rate of $R_{N}$, we have 
$R_{N} = O_{P}\left(a_{N}\right)$ where $a_{N} = (1 / N)^{q/(d+q)}$. Let $\epsilon > 0$ be the required accuracy. For any given sample size $N$ and $q$, we investigate what values can be taken for $d$ such that $a_{N} \leq \epsilon$.
By straightforward calculations, $a_{N} \leq \epsilon$ is equivalent to 
\begin{equation}\label{req: dim}
	d \leq q\left(\frac{\log N}{\log \epsilon^{-1}} - 1\right),
\end{equation} which establishes the upper bound of $d$ given $q$, $\epsilon$ and $N$. 
For example, if $q = 10$, $\epsilon = 0.01$ and $N = 200$, \eqref{req: dim} restricts $d$ to take $1$ only. 
This implies that we can control the magnitude of $R_{N}$ to the given accuracy $\epsilon = 0.01$ only in the case of one-dimensional covariate when the sample size is $200$.   
However, if $q=10$, $\epsilon=0.01$ as before and the sample size $N = 200000$, the inequality $\eqref{req: dim}$ allows $d \leq 16$. 
This demonstrates the crucial role of sample size in applying the nonparametric regression imputation method to problems with high dimensional covariate vectors. The larger the sample size is, the larger $d$ is allowed to take.

However, when $N$ is extremely large, the calculation of $\hat{\mu}_{\mathbb{K}}$ is problematic. The computing time of the estimation process in \eqref{est: no truct} is $\Theta(N^2)$, which is extremely long when $N$ is large. In this paper, we say the computing time is $\Theta(a_{N})$ for some positive sequence $a_{N}$ if the computing time belongs to $[C^{-1}a_{N}, Ca_{N}]$ for some constant $C > 1$. A similar notation is used for the communication complexity. 

Besides the computational issue, it may be infeasible to keep the whole data set in memory or even store all the data on a single computer when $N$ is extremely large. Throughout this paper, we assume the samples are evenly distributed on $L$ machines. Here we assume $N$ is divisible by $L$ for simplicity and denote $n = N/L$.
Suppose $\{(\delta_{i}, Y_{i}, X_{i}): i=n(l-1) + 1,\dots,nl\}$ is stored on the $l$-th machine for $l=1,\dots,L$. 

To reduce the computing time and accommodate the distributed data set, we propose the KDI method to estimate $\mu$. 
The algorithm for 
the kernel-based distributed imputation method is then presented as follows.
\begin{breakablealgorithm}
	\caption{Algorithm for the KDI method}\label{al:kernel}
	\begin{algorithmic}[1]
		\For{$l = 1,\dots,L$}
		\State Calculate 
		\[\hat{\mu}^{(l)}_{\mathbb{K}}
		=  n^{-1}\sum^{nl}_{i=n(l - 1)+1}\{\delta_iY_i+(1-\delta_i)\hat{m}^{(l)}_{\mathbb{K}}(X_i)\}\]
		on the $l$-th machine for $l=1,\dots,L$ in parallel, where
		\[\hat{m}^{(l)}_{\mathbb{K}}(x)=\frac{\sum^{nl}_{i= n(l-1) + 1}K_{h}(X_i-x)\delta_iY_i}{\sum^{nl}_{i=n(l-1)+1}K_{h}(X_i-x)\delta_i};\]
		\State Transmit $\hat{\mu}^{(l)}_{\mathbb{K}}$ to the first machine;
		
		\EndFor
		\State Calculate $\tilde{\mu}_{\mathbb{K}} = L^{-1}\sum_{l=1}^{L}\hat{\mu}^{(l)}_{\mathbb{K}}$ on the first machine;
		\State
		\Return $\tilde{\mu}_{\mathbb{K}}$.
	\end{algorithmic}
	\label{algo: kernel}
\end{breakablealgorithm}

The computing time of the KDI estimation is $\Theta(n^{2}) = \Theta(N^{2} / L^{2})$, which is significantly faster than the conventional kernel regression imputation estimation in \eqref{est: no truct}. Moreover, the computation of the estimator $\tilde{\mu}_{\mathbb{K}}$ has a low communication complexity of order $\Theta(L)$ because we only need to transmit a real number from each local machine to the first machine in the KDI method.

\subsection{Theoretical Properties}\label{subsec: kernel theory}
Next, we establish the asymptotic properties of the KDI estimator. For convenience, we next let $C$ be a generic positive constant that may be different in different places. For any positive sequences $a_{N}$ and $b_{N}$, let $a_{N} \asymp b_{N}$ denote $C^{-1}b_{N} \leq a_{N} \leq Cb_{N} $ for some $C > 1$.
The establishment of the asymptotic normality and efficiency of $\tilde{\mu}_{\mathbb{K}}$ is nontrivial since we allow the number of machines $L$ to diverge as $N\to \infty$. A careful analysis of the error term is needed to make sure that the summation of many negligible terms is still negligible. 
Let $f(\cdot)$ be the probability density of $X$ and $\sigma^{2}(x)$ the variance of $Y$ conditional on $X = x$. We need the following conditions to establish asymptotic results.
\begin{itemize}
	\item [(C.1)] $\pi(x)$, $f(x)$ and $m(x)$ have bounded partial derivatives up to order $q>0$.
	\item [(C.2)] $\inf_x\pi(x)>0$.
	\item [(C.3)] $\inf_x f(x)>0$. 
	\item [(C.4)] $\sup_x\sigma^{2}(x) < \infty$.
	\item [(C.5)] $K(\cdot)$ is a Lipschitz continuous kernel function of order no smaller than $q$ with compact support.
	\item [(C.6)] $h^{q}\log N$ and $L(\log N)^{2}/(Nh^{d})$ are bounded  as $N\to \infty$.
	\item [(C.7)] $q > d$, $\sqrt{N}h^{q} \to 0$ and $L/(\sqrt{N}h^{d}) \to 0$ as $N\to \infty$.
\end{itemize}
Conditions (C.1),(C.3),(C.4) and (C.5) are all standard conditions in the literature of nonparametric regression \citep{hansen2008uniform,Li2011dim,Li2017mean,Ma2019efficient}. Conditions (C.1), (C.3) and (C.4) are required to establish the convergence rate of the kernel estimators for $m(\cdot)$ and $\pi(\cdot)$. Condition (C.1) is a general smoothness condition. The quantity $q$ determines how restrictive (C.1) is and a detailed discussion for $q$ will be conducted later. The infimum and supremum in Conditions (C.2), (C.3), and (C.4) are taken over the support of $X$. Condition (C.2) requires that the response of units with any covariate values can be observed with positive probability. It is crucial for the identification and $\sqrt{N}$-consistent estimation of $\mu$ \citep{Khan2010irregular}. Condition (C.2) can be satisfied in many problems and is widely adopted in the literature of nonparametric missing data methods \citep{Wang2002ELMissing,Hirano2003efficient,Hu2012core, chan2016globally}. Condition (C.3) can be easily satisfied when the covariate has bounded support. 
By truncating the denominators in the estimator, we may relax Condition (C.3) \citep{Cheng1994nonparametric, Wang2002ELMissing}. However, the truncation procedure introduces some extra tuning parameters. For this reason, we do not adopt the truncation strategy. 
Condition (C.4) requires the conditional variance of the response to be bounded, which is a mild condition on the data distribution. Regularity conditions on the kernel imposed in (C.5) are for the convenience of establishing probability and moment bounds and can be satisfied by many kernels such as the Epanechnikov kernel and the tricube kernel. Condition (C.6) and (C.7) are restrictions imposed on the bandwidth $h$ and the number of machines $L$. Condition (C.6) is used to establish the asymptotic expansion of $\tilde{\mu}_{\mathbb{K}}$. Condition (C.7) further imposes restrictions on $L$ and $h$ to ensure the asymptotic normality. Sufficient conditions for Conditions (C.6) and (C.7) will be discussed later.
\begin{theorem}\label{thm:kernel}
	Under conditions (C.1)-(C.6), if $N/L\to \infty$, we have 
	\begin{equation}\label{res: kernel rate}
		\tilde{\mu}_{\mathbb{K}}-\mu=\psi_{N} + O_{P}\left(\frac{L}{Nh^{d}}+ h^{q}\right),
	\end{equation}
	where $\psi_{N}=N^{-1}\sum_{i=1}^{N}\left\{\delta_{i}Y_{i}/\pi(X_{i}) + (\pi(X_{i}) -\delta_{i})m(X_{i})/\pi(X_{i})- \mu\right\}$. If we further assume (C.7), we have
	\begin{equation}\label{res: KDI AN}
		\sqrt{N}(\tilde{\mu}_{\mathbb{K}}-\mu)\stackrel{d}{\rightarrow}N(0,\var[\psi]),
	\end{equation}
	where
	\begin{equation}
		\psi=\frac{\delta Y}{\pi(X)} + \frac{\pi(X) -\delta}{\pi(X)}m(X).
	\end{equation}
\end{theorem}

Equality \eqref{res: kernel rate} establishes the asymptotic expansion of $\tilde{\mu}_{\mathbb{K}}$. 
The first term $\psi_{N}$ is $\sqrt N$-asymptotically normal and has the same asymptotic variance as $\hat{\mu}_{\mathbb{K}}$. Hence we have $\psi_{N} = O_{P}(1/\sqrt{N})$ in terms of the convergence rate, which is irrelevant to $h$ and $L$. 
Some calculations can show that the convergence rate of the second term at the right-hand-side of \eqref{res: kernel rate} is minimized if 
\begin{equation}\label{eq: opt bandwidth}
	h \asymp \left(\frac{L}{N}\right)^{\frac{1}{d+q}}.
\end{equation} 
With the bandwidth rate given in \eqref{eq: opt bandwidth}, the convergence rate of $\tilde{\mu}_{\mathbb{K}}$ equals to 
\[O_{P}\left(\frac{1}{\sqrt{N}} + \left(\frac{L}{N}\right)^{\frac{q}{d+q}}\right).\]
It can be seen that the number of machines $L$ contributes to this convergence rate. The minimax rate $1/\sqrt{N}$ provided in Appendix A can be achieved if $q\geq d$ and $L$ is not too large, specifically, if $L\leq C N^{1/2-d/(2q)}$ for some constant $C$.

The asymptotic normality result \eqref{res: KDI AN} follows from Equation \eqref{res: kernel rate} provided the second term at right-hand-side of \eqref{res: kernel rate} is of order $o_{P}(1/\sqrt{N})$. It is straightforward to verify $\var[\psi]$ equals to the semiparametric efficiency bound provided in Appendix A.
When $L$ is too large, the second term at the right-hand-side of \eqref{res: kernel rate} may not get the required convergence rate. 
Thus some extra restrictions on $L$ are imposed in Condition (C.7) to ensure the asymptotic efficiency of $\tilde{\mu}_{\mathbb{K}}$.  The bandwidth satisfying (C.7) exists as long as $L$ satisfies $L / N^{1/2-d/(2q)} \to 0$. If $q$ in Condition (C.1) is sufficiently large, the requirement on $L$ is close to $L / \sqrt{N} \to 0$, which are also required by many existing divide and conquer methods to ensure their theoretical properties \citep{Zhang2013dist,Lee2017dist,battey2018distributed}.

\section{Sieve-Based Multi--round Method}\label{sec: sieve}
\subsection{Methodology}
In the previous section, we discuss the KDI method. 
The method has good asymptotic properties with very low communication cost. However,
according to the discussion in the last section, the number of machines cannot be too large to ensure the asymptotic efficiency of $\tilde{\mu}_{\mathbb{K}}$.
This may restrict the application of this method and limit the role of the KDI method in solving storage and computing problems with large-scale data. In fact, when we average over multiple nonlinear estimators to obtain the aggregated estimator, strict restrictions on the machine number are almost inevitable to ensure the $\sqrt{N}$-consistency \citep{Zhang2013dist,Lee2017dist,battey2018distributed,jordan2019communication}.
To relax the restriction on the machine number, we discard the one-shot averaging strategy and propose a multi-round SDI method in this section. Next, we propose the multi-round method based on the sieve method.

Sieve method is a widely used nonparametric method and has been used to impute the missing counterfactual value in the causal inference literature \citep{hahn1998role,imbens2005mean}. 
Nevertheless, the investigation on the application of the sieve method to data stored in a distributed manner is still rare in the literature.

Let $\{\mathcal{V}_{k}\}_{k=1}^{\infty}$ be a nested sequence of finite-dimensional function classes such that 
\[\mathcal{V}^2(\mathbf{R}^d) = \{f(x): E[f^2(X)] < \infty\} = \bigcup_{k=1}^{\infty} \mathcal{V}_{k}.\]
For any positive integer $K$, let $\{v_1(x), \dots, v_{K}(x)\}$ be a set of bases of $\mathcal{V}_{K}$ with $v_{1}(x)=1$ and let $V_{K}(x) = (v_1(x),\dots,v_{K}(x))^{\T}$. 
The linear combination of $\{v_1(x), \dots, v_{K}(x)\}$ can approximate any function in $\mathcal{V}^{2}(\mathbf{R}^d)$ if $K$ is large. To circumvent the theoretical difficulty brought by the approximation error, we let $K$ increase to infinity as $N\to \infty$. We use $\hat{m}_{\mathbb{S}}(x) = V_{K}(x)^{\T}\hat{\beta}$ to estimate $m(x)$, where
\begin{equation}\label{prob:sieve}
	\hat{\beta} = \arg\min_{\beta}\frac{1}{N}\sum_{i=1}^{N}\delta_{i}(Y_{i} - V_{K}(X_{i})^{\T}\beta)^2.
\end{equation}
Then $\mu$ can be estimated by
\[\hat{\mu}_{\mathbb{S}} = \frac{1}{N}\sum^N_{i=1}\{\delta_iY_i+(1-\delta_i)\hat{m}_{\mathbb{S}}(X_i)\}.\]
$\hat{\beta}$ has the explicit form 
\[\hat{\beta} = \hat{\Sigma}^{-1} \hat{\Gamma},\]
where $\hat{\Gamma} = \sum_{i=1}^{N}\delta_{i}Y_{i}V_{K}(X_{i})/N$ and $\hat{\Sigma} = \sum_{i=1}^{N}\delta_{i}V_{K}(X_{i})V_{K}(X_{i})^{\T} / N$. The computing time of $\hat{\mu}_{\mathbb{S}}$ is $\Theta(NK^{2} + K^{3})$ when processing all data on a single machine. When data are stored in a distributed manner, we can first calculate 
\[\sum_{i=n(l - 1) + 1}^{nl}\delta_{i}Y_{i}V_{K}(X_{i}),\]
on the $l$-th machine and transmit the resulting $K$-dimensional vector to the first machine, where $n = N/L$ is the number of observations in each machine. 
Then $\hat{\Gamma}$ can be gotten by summing up these vectors and dividing the summations by $N$. However, to calculate $\hat{\Sigma}$, we need to transmit $K\times K$ matrices and the transmission of $K\times K$ matrices has a high communication cost if $K$ is large. Most of the time, $K$ is taken to be of the polynomial order of $N$ to ensure the approximation accuracy of the basis functions. Thus $K$ is usually large when $N$ is large. Motivated by the communication efficient algorithms in the parametric model literature \citep{shamir2014communication,jordan2019communication,fan2021communication}, we propose a multi-round algorithm with a low communication cost to approximate $\hat{\beta}$.
For arbitrary $\beta^{\dag}$, we have by some simple algebra
\begin{align*}
	&\frac{1}{N}\sum_{i=1}^{N}\delta_{i}(Y_{i} - V_{K}(X_{i})^{\T}\beta)^2\\ 
	&= (\beta - \beta^{\dag})^{\T}\hat{\Sigma}(\beta - \beta^{\dag}) - 2(\beta - \beta^{\dag})^{\T}(\hat{\Gamma} - \hat{\Sigma}\beta^{\dag}) + \beta^{\dag\T}\hat{\Sigma}\beta^{\dag} -2\beta^{\dag\T}\hat{\Gamma} + \frac{1}{N}\sum_{i=1}^{N}\delta_{i}Y_{i}^2.
\end{align*}
Thus
\[
\hat{\beta} = \arg\min_{\beta} \hat{D}(\beta,\beta^{\dag}),
\]
where 
\[\hat{D}(\beta,\beta^{\dag}) = (\beta - \beta^{\dag})^{\T}\hat{\Sigma}(\beta - \beta^{\dag}) - 2(\beta - \beta^{\dag})^{\T}(\hat{\Gamma} - \hat{\Sigma}\beta^{\dag}).\]

Let $\tilde{\Sigma} = n^{-1}\sum_{i=1}^{n}\delta_{i}V_{K}(X_{i})V_{K}(X_{i})^{\T}$. 
Since samples are independent and identically distributed across different machines, $\tilde{\Sigma}$ is expected to be close to $\hat{\Sigma}$. 
We hence use 
\[
\tilde{D}(\beta,\beta^{\dag}) = (\beta - \beta^{\dag})^{\T}\tilde{\Sigma}(\beta - \beta^{\dag}) - 2(\beta - \beta^{\dag})^{\T}(\hat{\Gamma} - \hat{\Sigma}\beta^{\dag})
\]
to approximate $\hat{D}(\beta,\beta^{\dag})$. Note that $\tilde{\Sigma}$ can be calculated with the data in the first machine solely. Moreover, we can see that $\hat{\Sigma}\beta^{\dag}$ is a vector and the calculation of $\hat{\Sigma}\beta^{\dag}$ only needs transmission of $K$-dimensional vectors by noting  $\hat{\Sigma}\beta^{\dag}=\sum_{l=1}^{L} \hat {\Sigma}_{l}\beta^{\dag}$ where $\hat{\Sigma}_{l}=\sum_{i=n(l-1) + 1}^{nl}\delta_{i}V_{K}(X_{i})V_{K}(X_{i})^{\T}$.
Thus to calculate $\tilde{D}(\beta, \beta^{\dag})$, we need not to transmit $K\times K$ matrices. Let $\|\cdot\|$ be the Euclid/spectral norm of a vector/matrix. 
Then $|\hat{D}(\beta, \beta^{\dag}) - \tilde{D}(\beta, \beta^{\dag})| \leq \|\tilde{\Sigma} - \hat{\Sigma}\|\|\beta - \beta^{\dag}\|^2$. Thus if $\beta$ is close to $\beta^{\dag}$, the approximation performs well even if $\tilde{\Sigma}$ is not that close to $\hat{\Sigma}$. 
In order to get the desired approximation rate, we start at an initial value $\beta_{0}$, and minimize $\tilde{D}(\beta, \beta_{0})$ over a small ball around $\beta_{0}$. 
By the equivalence of the constrained optimization and the penalized optimization, we minimize the following loss function
\begin{equation}\label{eq:iter}
	\tilde{D}(\beta,\beta_{0}) + \alpha\|\beta - \beta_{0}\|^2,
\end{equation}
where $\alpha$ is a tuning parameter. Let $\beta_{1}$ be the minimum point of \eqref{eq:iter}, $\beta_{2}$ the minimum point of $\tilde{D}(\beta,\beta_{1}) + \alpha\|\beta - \beta_{1}\|^2$ and let $\beta_{t}$ be defined in the same way for $t > 2$.
It is not hard to show that $\beta_{t}$ has the recursive relation
\begin{align*}
	\beta_{t}& = \beta_{t-1} + (\tilde{\Sigma} + \alpha I)^{-1}(\hat{\Gamma} - \hat{\Sigma}\beta_{t-1})\\
	& =  \beta_{t-1} + (\tilde{\Sigma} + \alpha I)^{-1}\left(
	\hat{\Gamma} - \frac{1}{N}\sum_{i=1}^{N}\delta_{i}V_{K}(X_{i})V_{K}(X_{i})^{\T}\beta_{t-1}
	\right).
\end{align*}
We use $\tilde{\beta} = \beta_{T}$ as the approximation of $\hat{\beta}$ with $T$ being a properly chosen integer and propose the following estimator for $\mu$ 
\[
\tilde{\mu}_{\mathbb{S}} = \frac{1}{N}\sum^N_{i=1}\{\delta_iY_i+(1-\delta_i)\tilde{m}_{\mathbb{S}}(X_i)\},
\]
where $\tilde{m}_{\mathbb{S}}(x) = V_{K}(x)^{\T}\tilde{\beta}$.
These procedures are summarized in Algorithm \ref{al:sieve}.
\begin{breakablealgorithm}
	\caption{Algorithm for the SDI method}\label{al:sieve}
	\begin{algorithmic}[1]
		\State Initialize $\beta_{0} = (0,\dots,0)^{\T}$;
		\State Calculate 
		\[z_{0,l} = \sum_{i=n(l-1) + 1}^{nl}\delta_{i}Y_{i}V_{K}(X_{i})\]
		on the $l$-th machine for $l=1,\dots,L$ in parallel and transmit the result to the first machine;
		\State Calculate 
		\[\Lambda_{l} = \sum_{i=n(l-1) + 1}^{nl}\delta_{i}V_{K}(X_{i})V_{K}(X_{i})^{\T}\]
		on the $l$-th machine for $l=1,\dots,L$ in parallel;
		\State Calculate $\Psi = (n^{-1}\Lambda_{1} + \alpha I)^{-1}$ and $\hat{\Gamma} = N^{-1}\sum_{l=1}^{L}z_{0,l}$ on the first machine;
		\For{$t = 1,\dots,T$}
		\State Calculate 
		\[z_{t,l} = \Lambda_{l}\beta_{t-1}\]
		on the $l$-th machine for $l=1,\dots,L$ in parallel and transmit the result to the first machine;
		\State	Update $\beta_{t} = \beta_{t-1} + \Psi(\hat{\Gamma} - N^{-1}\sum_{l=1}^{L}z_{t,l})$ and transmit $\beta_{t}$ to each local machine;
		\EndFor
		\State Calculate 
		\[\hat{\mu}_{\mathbb{S}}^{(l)} = \frac{1}{n} \sum_{i=n(l-1) + 1}^{nl}\{\delta_iY_i+(1-\delta_i)V_{K}(X_{i})^{\T}\tilde{\beta}\}\]
		on the $l$-th machine for $l=1,\dots,L$ in parallel and transmit the result to the first machine;
		\State Calculate $\tilde{\mu}_{\mathbb{S}} = L^{-1} \sum_{l=1}^{L}\hat{\mu}_{\mathbb{S}}^{(l)}$ on the first machine;
		\State
		\Return $\tilde{\mu}_{\mathbb{S}}$.
	\end{algorithmic}
\end{breakablealgorithm}
The proposed SDI method has a computing time of order $\Theta(NK^{2} / L  + K^{3} + TNK/L)$ and a communication complexity of order $\Theta(TLK)$. In the next section, we show that taking  $T\asymp \log N$ is often sufficient to ensure the statistical accuracy.  In this case, the computing time of the SDI estimator $\tilde{\mu}_{\mathbb{S}}$ is much shorter than that of $\hat{\mu}_{\mathbb{S}}$ if $L$ is large. Moreover, the communication complexity of the SDI method is also relatively low as we do not require to transmit $K\times K$ matrices.

\subsection{Theoretical Properties}\label{subsec: sieve theory}
Next, we establish the theoretical properties of the SDI estimator and show that the SDI method can accommodate more machines compared to the KDI method while keeping the theoretical properties.

To show the large sample properties, we need to introduce some conditions.
We first define $\zeta_{K} = \sup_{x}\|V_{K}(x)\|$. The quantity $\zeta_{K}$ is important in theoretical development and can be determined by the basis functions. 
We have $\zeta_{K} \leq C \sqrt{K}$ if the basis functions are tensor products of univariate B-spline, Chebyshev polynomial, trigonometric polynomial or wavelet bases. Also we have $\zeta_{K} \leq C\sqrt{K}$ for tensor products of power series if the support of $X$ is contained in $[-1,1]^{d}$. 
See \cite{newey1994asymptotic} and \cite{chen2007large} for more results on this quantity. 
For any symmetric matrix $A$, denote its largest and smallest eigenvalues by $\sigma_{\rm max}(A)$ and $\sigma_{\rm min}(A)$, respectively. Let $\Sigma = \E [\delta V_{K}(X)V_{K}(X)^{\T}]$. 
Then we are ready to introduce the following technical conditions.

\begin{itemize}
	\item [(C.8)] There are some universal constants $C_{\rm L}$ and $C_{\rm H}$ such that
	$C_{\rm L} \leq \sigma_{\rm min}(\Sigma) \leq \sigma_{\rm max}(\Sigma) \leq C_{\rm H}$.
	\item [(C.9)] (i) $E[(m(X))^{2}]\leq \infty$; (ii) there exists some constant $r>0$ such that for any $K$ there is some $\bar{\beta}$ and $\bar{\gamma}$ satisfying $\E[(m(X) - V_{K}(X)^{\T}\bar{\beta})^2] \leq C K^{-2r}$ and $\E[(\pi(X)^{-1} - V_{K}(X)^{\T}\bar{\gamma})^2]\leq C K^{-2r}$.
	\item[(C.10)] $\zeta_{K}^{2}\log K/N \to 0$.
\end{itemize}
Let $C_{N,K} = 1 - (1 - \min\{\alpha,1/\log K\})/(1 + 2\alpha C_{\rm L})$ where $\alpha$ is the tuning parameter for the penalty in \eqref{eq:iter} and assume without loss of generality that $\log K > 1$. 
\begin{itemize}
	\item [(C.11)] (i)$\zeta_{K}^{4}/N \to 0$, $NK^{-4r} \to 0$; 
	
	(ii)$T\log C_{N,K}^{-1} - 0.5\log N - \log \zeta_{K} \to \infty$,
	where $T$ is the number of iterations which can be seen as a tuning parameter.
\end{itemize}
Condition (C.8) is a widely-used condition in the literature of sieve method \citep{newey1994asymptotic,ai2003efficient,belloni2015some}. The lower bound on the minimum eigenvalue in (C.8) requires the basis functions not to be too co-linear, which is important for establishing the optimization property of the iterative algorithm in the SDI method. The upper bound on the maximum eigenvalue is needed to establish the convergence rate of $\tilde{\Sigma}$ and $\hat{\Sigma}$ to their population counterpart $\Sigma$. Condition (C.9)(i) is a mild moment condition required to establish the probability bound in the proof. Condition (C.9)(ii) imposes some restrictions on the approximation error. The constant $r$ appearing in (C.9) and (C.11) depends on the dimension of the covariate vector, the smoothness of $m(\cdot)$ and $\pi(\cdot)$, and the basis functions.
Under Conditions (C.1), (C.2) and (C.3), Condition (C.9)(ii) is satisfied with $r = q/d$ if the basis functions are tensor products of polynomial functions, B-splines, trigonometric polynomial functions or wavelet bases \citep{lorentz1986approximation, chen2007large}. See \cite{chen2007large} for more results on some common basis functions. Condition (C.11)(i) is related to the number of basis functions $K$. Conditions (C.10) and (C.11)(i) are related to the number of basis functions $K$. As discussed above, we have $\zeta_{K}\leq C\sqrt{K}$ for many commonly used basis functions. Then Condition (C.10) only requires $K\log K /N \to 0$, which is a mild constraint on $K$. If $\zeta_{K}\leq C\sqrt{K}$, a sufficient condition for Condition (C.11)(i) is $K = o(\sqrt{N})$ and $K^{-4r} = o(1/N)$. The condition $K^{-4r} = o(1/N)$ can be easily satisfied as long as $K$ is not too small and $r$ is moderately large. The requirement on upper bound of the number of basis functions $K = o(\sqrt{N})$ is considerably weaker than that in the existing literature on mean estimation problem using sieve method. For example, \cite{hahn1998role} requires $K = o(N^{1/7})$ and \cite{chan2016globally} requires $K = o(N^{1/11})$. Condition (C.11)(ii) require the number of iterations not to be too small. If $\alpha$ is bounded as $N$ increases, it is not hard to verify that there is some constant $C > 0$ such that  $\log C_{N,K}^{-1} \geq C$ for all $N$. Then Condition (C.11)(ii) can be satisfied if we take $T = C^{-1}\log N + 2C^{-1}\log \zeta_{K}$. This combined with Condition (C.11)(i) implies that it is sufficient to meet Condition (C.11)(ii) if we increase the number of iterations with the sample size at a logarithm rate. Then we are ready to state the theoretical result for the SDI method. For two positive sequences $a_{N}$ and $b_{N}$, we denote $a_{N}\asymp b_{N}$ if $C^{-1}b_{N} \leq a_{N} \leq C b_{N}$ for any $N$ and some constant $C > 1$.
\begin{theorem}\label{thm: sieve}
	If $\alpha \asymp \log^2K  \sqrt{(L\zeta_{K}^{2})/N}$, then under Conditions (C.2), (C.4), (C.8), (C.9) and (C.10), we have 
	\begin{equation}\label{res: sieve rate}
		\tilde{\mu}_{\mathbb{S}}-\mu=\psi_{N} + O_{P}\left(\frac{\zeta_{K}^{2}}{N} + \frac{1}{K^{2r}}\right) + O_{P}\left(\zeta_{K}C_{N,K}^{T}\right),
	\end{equation}
	where $\psi_{N}=N^{-1}\sum_{i=1}^{N}\left\{\delta_{i}Y_{i}/\pi(X_{i}) + (\pi(X_{i}) -\delta_{i})m(X_{i})/\pi(X_{i})- \mu\right\}$. If we further assume that (C.11) holds, we have
	\begin{equation}\label{res: SDI AN}
		\sqrt{N}(\tilde{\mu}_{\mathbb{S}}-\mu)\stackrel{d}{\rightarrow}N(0,\var[\psi]),
	\end{equation}
	where
	\begin{equation}
		\psi=\frac{\delta Y}{\pi(X)} + \frac{\pi(X) -\delta}{\pi(X)}m(X).
	\end{equation}
\end{theorem} 
The equality \eqref{res: sieve rate} presents the asymptotic expansion of $\tilde{\mu}_{\mathbb{S}}$. We have $\psi_{N} = O_{P}(1/\sqrt{N})$ as in Theorem \ref{thm:kernel}. The second term at the right-hand-side of \eqref{res: sieve rate} depends on $K$, $N$, $\zeta_{K}$ and is independent of $L$.
Under Conditions (C.1), (C.2) and (C.3), we have $\zeta_{K} \leq C\sqrt{K}$ and Condition (C.9)(ii) is satisfied with $r = q/d$ for the widely used basis functions listed before Theorem \ref{thm: sieve}. Then the second term at the right-hand-side of \eqref{res: sieve rate} is of order $O_{P}(K/N + 1/K^{2q/d})$. This rate is minimized if $K\asymp N^{d/(d+2q)}$ and the corresponding rate is $O_{P}(N^{-2q/(d+2q)})$ which is no slower than $O_{P}(1/\sqrt{N})$ if $q \geq d/2$. The third term at the right-hand-side of \eqref{res: sieve rate} depends on $K$ and $L$ through $\zeta_{K}$ and $C_{N,K}$ (Note that $C_{N,K}$ depends on $\alpha \asymp \log^2(K)  \sqrt{(L\zeta_{K}^{2})/N}$). However, for any $K$ and $L$, we always have $C_{N,K} < 1$ which implies the quantity $\zeta_{K}C_{N,K}^{T}$ can be made arbitrarily small by choosing $T$ sufficiently large. If we take
\begin{equation}\label{eq: number of iterations}
	T \geq \frac{0.5 \log N + \log \zeta_{K}}{\log C_{N,K}^{-1}},
\end{equation} 
then we have $\zeta_{K}C_{N,K}^{T} \leq 1/\sqrt{N}$ and hence the third term is of order $O_{P}(1/\sqrt{N})$. 

In summary, the minimax rate $1/\sqrt{N}$ provided in Appendix A can be achieved by $\tilde{\mu}_{\mathbb{S}}$ if (a) $\zeta_{K} \leq C\sqrt{K}$ and Condition (C.9)(ii) is satisfied with $r = q/d$; (b) $q \geq d/2$; (c) $K\asymp N^{d/(d+2q)}$ and the number of iterations $T$ satisfies \eqref{eq: number of iterations}. 

Because $\psi_{N}$ is $\sqrt N$-asymptotically normal with asymptotic variance achieving the semiparametric efficiency bound, $\tilde{\mu}_{\mathbb{S}}$ is asymptotically efficient when the second and third terms at the right-hand-side of \eqref{res: sieve rate} are both of order $o_{P}(1/\sqrt{N})$.
The second term is $o_{P}(1/\sqrt{N})$ under (C.11)(i) and the third term is $o_{P}(1/\sqrt{N})$ provided the restriction on $K$ and $T$ presented in Condition (C.11)(ii) is satisfied. If $\zeta_{K} \leq C\sqrt{K}$, Condition (C.9)(ii) is satisfied with $r = q/d$ and $K\asymp N^{d/(d+2q)}$, then Condition (C.11)(i) is satisfied as long as $q > d/2$. 
If the number of machine $L$ satisfies
\begin{equation}\label{eq: L restriction SDI}
	L \leq CN^{\frac{2q}{d+2q + 1}},
\end{equation} 
then $\alpha \asymp \log^2(K)  \sqrt{(L\zeta_{K}^{2})/N}$ converges to zero and hence is bounded.
According to the discussion before Theorem \ref{thm: sieve}, 
it suffices to take $T$ proportionally to $\log N$ to fulfill Condition (C.11)(ii) if $\zeta_{K} \leq C\sqrt{K}$, $K\asymp N^{d/(d+2q)}$ and $L$ satisfies \eqref{eq: L restriction SDI}. When $q$ is large, the restriction \eqref{eq: L restriction SDI} is close to $L \leq C N $ which is a mild condition on $L$.

Notice that both the KDI method and SDI method can achieve the minimax rate and are asymptotically efficient under certain conditions. However, the KDI method requires much stronger restrictions on $L$ compared to the SDI method. The reason may be 
that  $\hat{\mu}_{\mathbb{K}}^{(l)}$ for $l=1,2,\cdots,L$ are not linearly additive. The bias of the resulting estimator depends on the biases of $\hat{\mu}_{\mathbb{K}}^{(l)} $ for $=1,2,\cdots, L$ by their average. 
The bias of  $\hat{\mu}_{\mathbb{K}}^{(l)} $ and $\tilde{\mu}_{\mathbb{K}}$ is large if $L$ is too large because in this case the sample size on every machine is too small. 

In addition, if $\zeta_{K} \leq C\sqrt{K}$, Condition (C.9)(ii) is satisfied with $r = q/d$ and $K\asymp N^{d/(d+2q)}$, then the SDI can achieve the semiparametric efficiency bound under weaker smoothness conditions compared to the KDI method ($q > d/2$ for SDI v.s. $q > d$ for KDI). This is thanks to the moment condition satisfied by the limitation of $\tilde{\beta}$. Next, we explain this phenomenon in detail. Let $\beta^{*} = \mathop{\arg\min}_{\beta}\E[\delta(Y - V_{K}(X)^{\T}\beta)^{2}]$. The proof of Theorem \ref{thm: sieve} can show that $\tilde{\beta}$ converges to $\beta^{*}$ and $\tilde{\mu}_{\mathbb{S}} - \mu$ has the following expansion
\begin{equation}\label{eq: SDI bias expansion}
	\tilde{\mu}_{\mathbb{S}} - \mu = \psi_{N} + O_{P}\left(\frac{K}{N} + \frac{1}{K^{\frac{q}{d}}}\sqrt{\frac{K}{N}}\right) + E[V_{K}(X)^{\T}\beta^{*} - m(X)]
\end{equation}
provided $\zeta_{K} \leq C\sqrt{K}$ and Condition (C.9)(ii) is satisfied with $r = q/d$. Here $E[V_{K}(X)^{\T}\beta^{*} - m(X)]$ is the bias term caused by using basis functions to approximate the true conditional mean function. A natural idea to bound the approximation error is to invoke Jensen's inequality, i.e.,
\[
E[V_{K}(X)^{\T}\beta^{*} - m(X)] \leq \sqrt{E[(V_{K}(X)^{\T}\beta^{*} - m(X))^{2}]}.
\]
Note that, under Condition (C.2), we have 
\begin{equation}\label{eq: approx bound}
	\begin{aligned}
		E[(V_{K}(X)^{\T}\beta^{*} - m(X))^{2}] &= E\left[\frac{\delta}{\pi(X)} (V_{K}(X)^{\T}\beta^{*} - m(X))^{2}\right]\\
		& \leq C E[\delta (V_{K}(X)^{\T}\beta^{*} - m(X))^{2}]\\ &\leq CE[\delta (V_{K}(X)^{\T}\bar{\beta} - m(X))^{2}]\\ &\leq CE[ (V_{K}(X)^{\T}\bar{\beta} - m(X))^{2}]
	\end{aligned}
\end{equation}
for the $\bar{\beta}$ in Condition (C.9)(ii). This implies $E[V_{K}(X)^{\T}\beta^{*} - m(X)] \leq C K^{-q/d}$
if Condition (C.2) holds and Condition (C.9)(ii) is satisfied with $r = q/d$.
Combining this bound with \eqref{eq: SDI bias expansion}, it will conclude that $q > d$ is required to assure the asymptotic efficiency. This requirement is the same as that of the KDI method. However, we find that $\beta^{*}$ satisfies the moment condition $E[\delta V_{K}(X)(Y - V_{K}(X)^{\T}\beta^{*})] = 0$ which implies 
\begin{equation}\label{eq: SDI moment cond}
	E[\delta V_{K}(X)(m(X) - V_{K}(X)^{\T}\beta^{*})] = 0.
\end{equation}
Next, we show how this moment condition can sharpen the bound on the approximation error and weaken the smoothness condition required for asymptotic efficiency. According to \eqref{eq: SDI moment cond}, we have
\[
E[\delta\gamma^{\T}V_{K}(X)(m(X) - V_{K}(X)^{\T}\beta^{*})] = 0
\]
for any $\gamma$. Thus we have
\begin{equation}\label{eq: SDI bias bound}
	\begin{aligned}
		E[V_{K}(X)^{\T}\beta^{*} - m(X)] & =    E\left[\frac{\delta}{\pi(X)}(V_{K}(X)^{\T}\beta^{*} - m(X))\right]  \\
		& = E\left[\delta\left(\frac{1}{\pi(X)} - V_{K}(X)^{\T}\gamma\right)(V_{K}(X)^{\T}\beta^{*} - m(X))\right]\\
		&\quad - E[\delta\gamma^{\T}V_{K}(X)(m(X) - V_{K}(X)^{\T}\beta^{*})] \\
		& = E\left[\delta\left(\frac{1}{\pi(X)} - V_{K}(X)^{\T}\gamma\right)(V_{K}(X)^{\T}\beta^{*} - m(X))\right] \\
		& \leq \sqrt{E\left[\delta\left(\frac{1}{\pi(X)} - V_{K}(X)^{\T}\gamma\right)^{2}\right]}\sqrt{E\left[(V_{K}(X)^{\T}\beta^{*} - m(X))^{2}\right]}\\
		& \leq C\sqrt{E\left[\left(\frac{1}{\pi(X)} - V_{K}(X)^{\T}\gamma\right)^{2}\right]}\sqrt{E\left[(V_{K}(X)^{\T}\bar{\beta} - m(X))^{2}\right]}
	\end{aligned}    
\end{equation}
for any $\gamma$ under Condition (C.2) according to \eqref{eq: approx bound}. By the arbitrariness of $\gamma$, we have  
\[ 
\begin{aligned}
	E[V_{K}(X)^{\T}\beta^{*} - m(X)] & \leq \sqrt{E\left[\left(\frac{\delta}{\pi(X)} - V_{K}(X)^{\T}\bar{\gamma}\right)^{2}\right]}\sqrt{E\left[(V_{K}(X)^{\T}\bar{\beta} - m(X))^{2}\right]} \\
	&\leq CK^{-\frac{2q}{d}}
\end{aligned}
\]
if Conditions (C.2) holds and Condition (C.9)(ii) is satisfied with $r = q/d$. Combining this with \eqref{eq: SDI bias expansion}, we conclude that $q > d/2$ is sufficient to assure the asymptotic efficiency of the SDI method. 

Inequality \eqref{eq: SDI bias bound} also implies that the bias caused by the approximation error is small if one of $m(x)$ and $1/\pi(x)$ can be approximated well by the basis functions used. The phenomenon can benefit the finite sample bias of $\tilde{\mu}_{\mathbb{S}}$ and may provide some insights for practitioners on which basis functions should be used. See Section \ref{sec: SDI approximation error} for more simulation evidence.

\section{A Distributed Tuning Parameter Selection Procedure}\label{sec: tuning selection}
For the implementation of the proposed methods, we recommend taking the bandwidth $h=c(L/N)^{1/(2d+1)}$ in the KDI method and the number of basis functions $K=\lceil cN^{d/(2d+1)}\rceil$ in the SDI method. Here $\lceil\cdot\rceil$ means round up. We choose these rates because the conditions of Theorems \ref{thm:kernel} and \ref{thm: sieve} can be satisfied under mild conditions on $L$ and the smoothness of $m(\cdot)$, $\pi(\cdot)$ if such rates are used according to the discussion behind the aforementioned theorems.

The asymptotic normality results of the estimators $\tilde{\mu}_{\mathbb{K}}$ and $\tilde{\mu}_{\mathbb{S}}$ imply that the tuning parameters $h$ and $K$ affect only the second order term in the asymptotic expansion of the estimators. Therefore, the value of tuning parameters has no effect on the convergence rate and asymptotic distribution of the estimators as long as conditions such as (C.7) and (C.11) are satisfied. 
This phenomenon is also discussed in \cite{Cheng1994nonparametric,Wang2002ELMissing} and other papers. 
Thus, the choice of the constant $c$ is not that crucial from the asymptotic point of view.
However, in practice, the performance of the proposed estimators may be sensitive to the constant $c$ when the sample size is not sufficiently large and the dimension of covariates is relatively high, as can be seen in the simulation in Section \ref{subsec: sim tuning}.
Thus we next provide a selection procedure for $h$ and $K$. In the following, we use $c$ to denote a generic tuning parameter whose value may be different in different places and let $\mathcal{C}$ be the candidate set of $c$.
Cross validation (CV) is one of the most commonly used methods to select tuning parameters. But it suffers from several problems when directly applying the CV method to the problem considered here.

The first problem is caused by missing data. When applying the CV method, we first divide the data set into two parts which is called the ``training data" and the ``test data", respectively. Let $I_{tr}$ and $I_{te}$ be the index set of training data and test data, respectively. For each $c\in\mathcal{C}$, suppose $\hat{m}_{tr,c}(x)$ is the imputation function based on the training data using the tuning parameter $c$. For the KDI method,  $\hat{m}_{tr,c}(x)$ is obtained by the Nadaraya-Watson kernel regression similar to $\hat{m}_{\mathbb{K}}(x)$. For the SDI method, $\hat{m}_{tr,c}(x)$ is obtained by the sieve least squares regression similar to $\hat{m}_{\mathbb{S}}(x)$. Then we calculate the criterion based on the squared prediction error
\begin{equation}\label{eq: classical CV}
	|I_{te}|^{-1}\sum_{i\in I_{te}}(Y_{i}-\hat{m}_{tr,c}(X_{i}))^{2},
\end{equation} 
where we use $|\cdot|$ to denote the cardinal of a set. The classical CV method repeats the data splitting procedure for $\kappa$ times and chooses the $c\in \mathcal{C}$ that minimizes \eqref{eq: classical CV} on average.

For missing responses problems, the values of $Y$ corresponding to $\delta=0$ are missing. Hence the criterion function \eqref{eq: classical CV} is unavailable in practice. The most direct idea is to work with the observed version of \eqref{eq: classical CV}
\begin{equation}\label{eq: classical CV 1}
	|I_{te,1}|^{-1}\sum_{i\in I_{te,1}}(Y_{i}-\hat{m}_{tr,c}(X_{i}))^{2},
\end{equation} 
where $I_{te,1} = \{i: i \in I_{te}, \delta_{i} = 1\}$ is the index set of complete cases in the test data.
Nevertheless, the accuracy of missing data imputation is what we really concern. Similar to \eqref{eq: classical CV}, the imputation accuracy can be characterized by
\begin{equation}\label{eq: classical CV 0}
	|I_{te,0}|^{-1}\sum_{i\in I_{te,0}}(Y_{i}-\hat{m}_{tr,c}(X_{i}))^{2},
\end{equation} 
where $I_{te,0} = \{i: i \in I_{te}, \delta_{i} = 0\}$.
Clearly, the criterion \eqref{eq: classical CV 0} is unobservable because $Y_{i}$ is unavailable for $i\in I_{te,0}$. To resolve this problem, we integrate out the unobserved responses by taking conditional expectation and try to approximate \eqref{eq: classical CV 0} with observed data by approximating its conditional expectation. 
Conditional on the training data and $\{(\delta_{i}, X_{i})\}_{i \in I_{te}}$, the expectation of \eqref{eq: classical CV 0} is 
\begin{equation}\label{eq: CV expe 0}
	\begin{aligned}
		&|I_{te,0}|^{-1}\sum_{i\in I_{te,0}}\left[(m(X_{i})-\hat{m}_{tr,c}(X_{i}))^{2} + \sigma^{2}(X_{i})\right] \\
		&=\int \left[(m(x)-\hat{m}_{tr,c}(x))^{2} + \sigma^{2}(x)\right]d\hat{F}_{te,0}(x).
	\end{aligned}
\end{equation}
where $\hat{F}_{te,0}(x)=|I_{te,0}|^{-1}\sum_{i\in I_{te,0}} 1\{X_{i}\leq x\}$ is the empirical covariate distribution function for $i \in I_{te,0}$.
This conditional expectation usually differs from that of \eqref{eq: classical CV 1}, which is 
\begin{equation*}
	\begin{aligned}
		\int \left[(m(x)-\hat{m}_{tr,c}(x))^{2} + \sigma^{2}(x)\right]d\hat{F}_{te,1}(x)
	\end{aligned}
\end{equation*}
with $\hat{F}_{te,1}(x)= |I_{te,1}|^{-1}\sum_{i\in I_{te,1}} 1\{X_{i}\leq x\}$, because $\hat{F}_{te,0} \neq \hat{F}_{te,1}$ in general. 
To get a better approximation, we consider a weighted version of \eqref{eq: classical CV 1} 
\begin{equation}\label{eq: classical CV weight}
	|I_{te,1}|^{-1}\sum_{i\in I_{te,1}}w_{i}(Y_{i}-\hat{m}_{tr,c}(X_{i}))^{2},
\end{equation}  
where $w_{i}$ is the weight for the $i$-th observation with $i\in I_{te,1}$. 
Conditional on the training data and $\{(\delta_{i}, X_{i})\}_{i \in I_{te}}$, the expectation of \eqref{eq: classical CV weight} is 
\begin{equation}\label{eq: CV weight expe}
	\begin{aligned}
		\int \left[(m(x)-\hat{m}_{tr,c}(x))^{2} + \sigma^{2}(x)\right]d\hat{F}_{te,1}^{w}(x)
	\end{aligned}
\end{equation}
where $\hat{F}_{te,1}^{w}(x)= |I_{te,1}|^{-1}\sum_{i\in I_{te,1}} w_{i}1\{X_{i}\leq x\}$ is the weighted empirical covariate distribution function for $i \in I_{te,1}$. \eqref{eq: CV weight expe} is close to \eqref{eq: CV expe 0} if $\hat{F}_{te,1}^{w}$ is close to $\hat{F}_{te,0}$.
This motivates us to choose the weights such that $\hat{F}_{te,1}^{w}(x)$ has moments identical to those of $\hat{F}_{te,0}(x)$. We use the first and second moments in practice for computation consideration. For $s=0,1$, let $i_{1}^{(s)},\dots,i_{|I_{te,s}|}^{(s)}$ be the indices in $I_{te,s}$ arranged in an ascending order.
Let $U_{s}$ be the matrix whose $k$-th row is 
$$(1, X_{i_{k}^{(s)}1}, \dots, X_{i_{k}^{(s)}d}, X_{i_{k}^{(s)}1}^{2}, X_{i_{k}^{(s)}1}X_{i_{k}^{(s)}2}, \dots,X_{i_{k}^{(s)}1}X_{i_{k}^{(s)}d},X_{i_{k}^{(s)}2}^{2},X_{i_{k}^{(s)}2}X_{i_{k}^{(s)}3},\dots,X_{i_{k}^{(s)}d}^{2})$$
where $X_{i_{k}^{(s)}j}$ is the $j$-th element of $X_{i_{k}^{(s)}}$ for $k = 1,\dots, |I_{te,s}|$ and $s = 0,1$. Let $\bar{U}_{0}=|I_{te,0}|^{-1}U_{0}^{\T} 1_{te,0}$ where $1_{te,0}$ is a $|I_{te,0}|$-dimensional vector of $1$'s. 
Then the first and second moments of $\hat{F}_{te,1}^{w}$ and $\hat{F}_{te,0}$ are identical if and only if $U_{1}^{\T}w = \bar{U}_{0}$ where $w = (w_{i_{1}}, \dots, w_{i_{|I_{te,1}|}})^{\T}$ is the vector of weights. The linear system $U_{1}^{\T}w = \bar{U}_{0}$ might have multiple solutions. In practice, we do not want the weights to be too large or too small. Thus among the weight vectors that solve $U_{1}^{\T}w = \bar{U}_{0}$, we prefer the one that is closest to the uniform weight $w_{\rm uni} = |I_{te,1}|^{-1}1_{te,1}$. Here $1_{te,1}$ is a $|I_{te,1}|$-dimensional vector of $1$'s. This motivates us to  obtain the weights via solving the following optimizing problem
\begin{equation}\label{eq: weight origin}
	\begin{aligned}
		&\min_{w}\|w-w_{\rm uni}\|^{2}\\
		&s.t.\quad U_{1}^{\T}w = \bar{U}_{0}.
	\end{aligned}
\end{equation}
On the other hand, $ U_{1}^{\T}w = \bar{U}_{0}$ may have no solution and the feasible set of \eqref{eq: weight origin} may be empty. To mitigate, we relax the constrain in \eqref{eq: weight origin} to $w \in \argmin\|U_{1}^{\T}w-\bar{U}_{0}\|^{2}$. It is straightforward to verify that $w$ is a minimum point of $\|U_{1}^{\T}w-\bar{U}_{0}\|^{2}$ if and only if $U_{1}U_{1}^{\T}w = U_{1}\bar{U}_{0}$ and this leads to the problem 
\begin{equation}\label{eq: weight general}
	\begin{split}
		&\min_{w}\|w-w_{\rm uni}\|^{2}\\
		&s.t. \quad U_{1}U_{1}^{\T}w = U_{1}\bar{U}_{0}.
	\end{split}
\end{equation}
It is easy to verify that \eqref{eq: weight general} is equivalent to \eqref{eq: weight origin} if the feasible set of \eqref{eq: weight origin} is non-empty.
By some algebra, the solution of \eqref{eq: weight general} has the following closed form
\begin{equation}\label{eq: solution weight general}
	w_{\rm uni}-U_{1}U_{1}^{\T}(U_{1}U_{1}^{\T}U_{1}U_{1}^{\T})^{-}(U_{1}U_{1}^{\T}w_{\rm uni}-U_{1}\bar{U}_{0}),
\end{equation}
where $A^{-}$ denotes the generalized inverse of $A$ for any matrix $A$. In practice, $U_{1}^{\T}U_{1}$ is usually invertible as long as $|I_{te,1}| > d(d+2)/2$, i.e., $U_{1}$ has more rows than columns. If $U_{1}^{\T}U_{1}$ is invertible, some algebra can show that the expression \eqref{eq: solution weight general} can be further simplified to $w_{\rm uni}-U_{1}(U_{1}^{\T}U_{1})^{-1}(U_{1}^{\T}w_{\rm uni}-\bar{U}_{0})$. In this case, calculation of the weights is fast with a computing time of order $\Theta(N)$. 

Besides the problem caused by missing data, both \eqref{eq: classical CV 1} and \eqref{eq: classical CV weight} are computationally cumbersome in the distributed data setting. 
To show this fact, we consider \eqref{eq: classical CV 1} for illustration. 
Suppose $\hat{m}_{tr,c}(x)$
is obtained by the kernel regression based on the training data.
Then for each $c\in\mathcal{C}$, the computing time and communication complexity for obtaining \eqref{eq: classical CV 1} are $\Theta(\kappa N^2/L)$ and $\Theta(\kappa LN)$, respectively, where $\kappa$ is the number of repeated splittings in cross validation. If $\hat{m}_{tr,c}(x)$
is obtained by the sieve least squares regression based on the training data, the computing time and communication complexity  become $\Theta(\kappa NK/L + \kappa K^{3})$ and $\Theta(\kappa LK^{2})$, respectively. By plugging in $K = cN^{d/(2d + 1)}$, the computing time and communication complexity are $\Theta(\kappa N^{(3d+1)/(2d+1)}/L + \kappa N^{3d/(2d+1)})$ and $\Theta(\kappa LN^{2d/(2d+1)})$, respectively. These computing and communication  costs are all high if $N$ and $\kappa$ are large.

In this section, we propose a distributed cross validation method to reduce the communication cost and the computing time. 
Specifically, we split the data on the $l$-th machine into training and test data for $l = 1,\dots, L$. Let $I_{tr}^{(l)}$ and $I_{te}^{(l)}$ be the index set of training data and test data on the $l$-th machine. $I_{te,1}^{(l)}$, $I_{te,0}^{(l)}$, $U_{1}^{(l)}$ , $U_{0}^{(l)}$ and $\bar{U}_{0}^{(l)}$ are defined in the same way as $I_{te,1}$, $I_{te,0}$, $U_{1}$ , $U_{0}$ and $\bar{U}_{0}$ with the whole data set replaced by the data on the $l$-th machine. Then for each $c\in \mathcal{C}$,
we calculate the criterion
\begin{equation*}\label{eq: dist CV prop}
	Q_{l}(c) = |I_{te, 1}^{(l)}|^{-1}\sum_{i\in I_{te,1}^{(l)}}w_{i}^{(l)}\left(Y_{i}-\hat{m}_{tr,c}^{(l)}(X_{i})\right)^{2},
\end{equation*}
and choose the constant $c$ which minimizes $L^{-1}\sum_{l = 1}^{L}Q_{l}(c)$.
Here the weights $w^{(l)}_{i}$ are obtained via equation \eqref{eq: solution weight general} with 
$w_{\rm uni}$, $U_{1}$ , $U_{0}$ and $\bar{U}_{0}$ replaced by $w_{\rm uni}^{(l)}$, $U_{1}^{(l)}$, $U_{0}^{(l)}$ and $\bar{U}_{0}^{(l)}$, respectively, where $w_{\rm uni}^{(l)} = |I_{te,1}^{(l)}|^{-1}1_{te,1}^{(l)}$ and $1_{te,1}^{(l)}$ is a $|I_{te,1}|$-dimensional vector of $1$'s.
For the KDI method,
$\hat{m}_{tr,c}^{(l)}(x)$ is obtained via the kernel regression based on the training data on the $l$-th machine with bandwidth $c|I_{tr}^{(l)}|^{-1/(2d + 1)}$. For the SDI method,   $\hat{m}_{tr,c}^{(l)}(x)$ is obtained by the sieve least squares regression based on the training data on the $l$-th machine with $\lceil c|I_{tr}^{(l)}|^{d/(2d + 1)}\rceil$ basis functions. At the beginning of this section, we recommend the rates for $h$ and $K$ according to some theoretical consideration. Thus we determine the rate of the tuning parameters used in $\hat{m}_{tr,c}^{(l)}(x)$ according to the size of the training data and select the constant in front of the rate here.

In classical CV, the data splitting procedure is repeated many times to make the selection more stable. In the proposed distributed cross validation procedure, we split the data and calculate the criterion on each machine in parallel. The criteria calculated on different machines are independent and identically distributed. Thus by taking the average over different machines, we obtain a stable criterion and we do not need to repeat the data splitting procedure. In our exploratory simulation study, little benefit is observed if we repeat the data splitting procedure on each machine for several times.  

Next, we summarize the above procedures and propose a CV method for the selection of $c$ for the KDI method in Algorithm \ref{al:dist cv KDI}. We refer to the proposed novel CV method as the {\itshape distributed weighted cross validation} (DWCV) algorithm. 
In the following algorithm, we use half of the data on each machine as the training data and half as the test data.
\begin{breakablealgorithm}
	\caption{DWCV algorithm for the KDI method}\label{al:dist cv KDI}
	\begin{algorithmic}[1]
		\State \textbf{Initialization}: a candidate set $\mathcal{C} = \{c_{1},\dots,c_{\mathcal{J}}\}$;
		\State For $l = 1,\dots, L$, split data on the $l$-th machine into training data of size $n/2$ with index set $I_{tr}^{(l)}$ and test data of size $n/2$ with index set $I_{te}^{(l)}$;
		\For{$j=1,\dots,\mathcal{J}$};
		\State  For each machine,
		\If{$U_{1}^{(l)\T}U_{1}^{(l)}$ is invertible}
		\State calculate the weight vector
		\[w^{(l)} = w_{\rm uni}^{(l)}-U_{1}^{(l)}(U_{1}^{(l)\T}U_{1}^{(l)})^{-1}(U_{1}^{(l)\T}w_{\rm uni}^{(l)} - \bar{U}_{0}^{(l)}),\]
		\newpage
		\Else
		\State calculate the weight vector
		\[w^{(l)} = 
		w_{\rm uni}^{(l)}-U_{1}^{(l)}U_{1}^{(l)\T}(U_{1}^{(l)}U_{1}^{(l)\T}U_{1}^{(l)}U_{1}^{(l)\T})^{-}(U_{1}^{(l)}U_{1}^{(l)\T}w_{\rm uni}^{(l)}-U_{1}^{(l)}\bar{U}_{0}^{(l)});\]
		\EndIf
		\State Take $h_{tr} = c_{j}(2/n)^{1/(2d+1)}$;
		\State For each machine, calculate the weighted predicting error
		\[E_{j}^{(l)} = \sum_{i\in I_{te,1}^{(l)}}w_{i}^{(l)}\left(Y_{i}-\hat{m}_{tr,c}^{(l)}(X_{i})\right)^{2};\]	
		where $w_{i}^{(l)}$ is the $|\{k: k\leq i,\ k\in I_{te,1}^{(l)}\}|$-th component of $w^{(l)}$ and \[\hat{m}_{tr,c}^{(l)}(x)=\frac{\sum_{i\in I_{tr}^{(l)}}K_{h_{tr}}(X_i-x)\delta_iY_i}{\sum_{i\in I_{tr}^{(l)}}K_{h_{tr}}(X_i-x)\delta_i};\]  
		\State Transmit $E_{j}^{(l)}$ to the first machine and calculate $E_{j} = L^{-1}\sum_{l=1}^{L}E_{j}^{(l)}$;
		\EndFor
		\State  Select the tuning parameter $c_{j^{*}}$ with $j^{*} = \argmin_{j=1,\dots,\mathcal{J}}E_{j}$.
	\end{algorithmic}
\end{breakablealgorithm}
The DWCV algorithm avoids the high communication cost and the repetition of data splitting in classical CV. It also adjusts the bias caused by using observed data to evaluate the missing data imputation accuracy via a computationally simple weighting procedure. The computing time of the weight $w^{(l)}$ is of order $\Theta(N/L)$ if $U_{1}^{(l)\T}U_{1}^{(l)}$ is invertible and of order $\Theta(N^{3}/L^{3})$ otherwise. We focus on the former case in the computing time analysis because $U_{1}^{(l)\T}U_{1}^{(l)}$ is usually invertible as long as $n > d(d+2)$ and the relationship $n > d(d+2)$ is likely to hold for large-scale data set. For each $c\in \mathcal{C}$, the computing time and the communication complexity of the DWCV algorithm for the KDI method are of order $\Theta(N^{2} /L^{2})$ and $\Theta(L)$ if $U_{1}^{(l)\T}U_{1}^{(l)}$ is invertible for $l = 1,\dots, L$.  Thus, the DWCV algorithm for the KDI method has a much shorter computing time and a much lower communication cost compared to those of the CV method based on \eqref{eq: classical CV 1} and the kernel regression.  

A similar procedure as Algorithm \ref{al:dist cv KDI} can be used to select the constant $c$ for the SDI method. Here we make a modification for computation considerations. For the SDI method, the computing time and communication complexity are relevant to the number of basis functions $K$ and hence the constant $c$. A larger $c$ results in a longer computing time and a higher communication complexity.
See Table \ref{table: time tuning parameter} in the simulation section for an illustration.
Hence we prefer a smaller $c$ if a larger $c$ does not bring about a significant accuracy improvement. Thus in the DWCV algorithm for the SDI method, we evaluate the candidate constants from the smallest to the largest. We stop the loop and select the current $c$ if the weighted prediction error does not change sufficiently from the current $c$ to the next. The procedure is summarized in Algorithm  \ref{al:dist cv SDI}.

\begin{breakablealgorithm}
	\caption{DWCV algorithm for the SDI method}\label{al:dist cv SDI}
	\begin{algorithmic}[1]
		\State \textbf{Initialization}: a candidate set $\mathcal{C} = \{c_{1},\dots,c_{\mathcal{J}}\}$ with the numbers arranged in an ascending order;
		\State For $l = 1,\dots, L$, split data on the $l$-th machine into training data of size $n/2$ with index set $I_{tr}^{(l)}$ and test data of size $n/2$ with index set $I_{te}^{(l)}$;
		\For{$j=1\dots,\mathcal{J}$};
		\If{$U_{1}^{(l)\T}U_{1}^{(l)}$ is invertible}
		\State calculate the weight vector
		\[w^{(l)} = w_{\rm uni}^{(l)}-U_{1}^{(l)}(U_{1}^{(l)\T}U_{1}^{(l)})^{-1}(U_{1}^{(l)\T}w_{\rm uni}^{(l)} - \bar{U}_{0}^{(l)}),\]
		\Else
		\State calculate the weight vector
		\[w^{(l)} = 
		w_{\rm uni}^{(l)}-U_{1}^{(l)}U_{1}^{(l)\T}(U_{1}^{(l)}U_{1}^{(l)\T}U_{1}^{(l)}U_{1}^{(l)\T})^{-}(U_{1}^{(l)}U_{1}^{(l)\T}w_{\rm uni}^{(l)}-U_{1}^{(l)}\bar{U}_{0}^{(l)});\]
		\EndIf
		\State Take $K_{tr}=\lceil c_{j}(n/2)^{d/(2d+1)} \rceil$;
		\State For each machine, calculate 
		\[\hat{\Sigma}^{(l)} = \frac{2}{n}\sum_{i\in I_{tr}^{(l)}}\delta_{i}V_{K_{tr}}(X_{i})V_{K_{tr}}(X_{i})^{\T},\]
		and 
		\[
		\hat{\beta}^{(l)} =(\hat{\Sigma}^{(l)})^{-1}\frac{2}{n}\sum_{i\in I_{tr}^{(l)}}\delta_{i}Y_{i}V_{K_{tr}}(X_{i});
		\]
		\State For each machine, calculate the weighted predicting error
		\[E_{j}^{(l)} = \sum_{i\in I_{te,1}^{(l)}}w_{i}^{(l)}\left(Y_{i}-\hat{m}_{tr}^{(l)}(X_{i})\right)^{2},\]
		where $w_{i}^{(l)}$ is the $|\{k: k\leq i,\ k\in I_{te,1}^{(l)}\}|$-th component of $w^{(l)}$ and $\hat{m}_{tr,c}^{(l)}(x) = V_{K_{tr}}(x)^{\T}\hat{\beta}^{(l)}$;
		\State Transmit $E_{j}^{(l)}$ to the first machine and calculate $E_{j} = L^{-1}\sum_{l=1}^{L}E_{j}^{(l)}$;
		\State If $(E_{j-1}-E_{j})/E_{j-1}<0.01$, set $E_{k} = \infty$ for $k = j\dots, \mathcal{J}$ and break;
		\EndFor
		\State  Select the tuning parameter $c_{j^{*}}$ with $j^{*} = \argmin_{j=1,\dots,\mathcal{J}}E_{j}$.
	\end{algorithmic}
\end{breakablealgorithm}
For each $c\in \mathcal{C}$, the DWCV algorithm for the SDI method has a computing time and a communication complexity of order $\Theta((N/L)^{(3d+1)/(2d+1)} + (N/L)^{3d/(2d+1)})$ and $\Theta(L)$ when $U_{1}^{(l)\T}U_{1}^{(l)}$ is invertible for $l = 1,\dots, L$. It can be seen that the DWCV algorithm for the SDI method makes a great improvement on both the computing time and the communication complexity compared to the CV method based on \eqref{eq: classical CV 1} and the sieve least squares regression.

\section{Simulation Study}\label{sec: sim}
\subsection{Performance under Different $d$ and $L$}\label{subsec: sim main}
In this section, some simulation studies were conducted to evaluate the performance of the two proposed distributed nonparametric methods. 

The $d$-dimensional covariate $X$ is used to impute the missing outcome $Y$ in the simulation.
We consider two scenarios with different values of $d$ to evaluate the impact of the covariate dimension. First, a random vector $Z=(Z_1,Z_2,Z_3,Z_4,Z_5)$ is generated where $Z_j\sim N(0,1)$ for $j=1,2,3$ and $Z_j\sim U(-1,1)$ for $j=4, 5$. We set $X = Z$ in the first scenario while $X = (Z,Z^{\prime}, Z^{\prime\prime})$ in the second scenario, where $Z^{\prime}$ and $Z^{\prime\prime}$ are two independent duplications of $Z$. Clearly, $d=5$ in the first scenario and $d=15$ in the second.
In both scenarios, the response is generated from 
\begin{equation*}
	Y = 2+0.5Z_{1}+0.5Z_{2}+0.5Z_{3}+0.5Z_{4}+Z_{5}+\epsilon
\end{equation*}
with $\epsilon\sim N(0,1)$, and the missing mechanism is set to be \[P(\delta=1\mid Z) = 0.5\times\frac{\exp(Z_1+Z_2+Z_3+Z_4+Z_5)}{1+\exp(Z_1+Z_2+Z_3+Z_4+Z_5)}+0.5.\]
We fix the total sample size $N=2\times 10^{5}$ and vary the number of machines $L = 10, 20, 50, 100$, $200$ and $500$ to evaluate the effect of  machine number. The proposed distributed imputation estimators are compared with other three estimators: 
the single machine (SGM) estimators calculated by the classical kernel and sieve methods using the data on a single machine; the complete-case estimator given by the sample mean of observed responses; and the oracle estimator given by the sample mean of all responses.  The oracle estimator is infeasible in practice and here we use it as the gold standard. We generate $200$ Monte Carlo random samples to evaluate the performance of the different estimators in the two scenarios. In the simulation, the bias and standard error (SE) of the oracle estimator are $0.000$ and $0.003$, respectively. The complete-case estimator has a large bias, $0.208$, which makes its small SE, $0.004$, meaningless.

A kernel function of order $20$ based on Legendre Polynomial \citep{Berlinet1993kernel} is used to implement the kernel regression imputation method. Motivated by the rate given at the beginning of Section \ref{sec: tuning selection}, we use the bandwidth $h = c(L\log L/N)^{1/(2d+1)}$. Here we add a factor of $\log L$ as it can improve the finite sample performance of the KDI estimators. The constant $c$ is taken to be $1.3$ when $d = 5$ and $1.7$ when $d=15$. Further simulations on the selection of tuning parameters are presented in the next section.  The following figure contains the box-plots of kernel-based SGM estimators and KDI estimators with different covariate dimensions and numbers of machines.

\centerline{[Insert Figure \ref{fig: Linear_boxplot_kernel} about here.]}

The box-plots show that the performance of the KDI estimators is more stable compared to the kernel-based SGM estimators.
The bias and SE of these estimators are calculated in Table \ref{table: Linear_kernel}.

\centerline{[Insert Table \ref{table: Linear_kernel} about here.]}

In the first scenario ($d=5$), it can be seen that the kernel-based SGM estimators and KDI estimators have a much smaller bias compared to the complete-case estimator. 
The kernel-based SGM estimators have large SE especially when the number of machines is large since only a small subset of all data is used.
In contrast, the KDI estimators have comparable performance as the oracle estimator when the number of machines is no more than $100$. The SE of the KDI estimators remains small even though the number of machines is large. The bias is a little large when $L=500$.

In the second scenario ($d=15$), the SE of the kernel-based SGM estimators and the KDI estimators is similar to those in the first scenario. The bias, however, is larger than that in the first scenario especially when $L$ is large. Unlike the case of $d=5$, the KDI estimators have comparable performance as the oracle estimator only when $L = 10$. This is consistent with the previous discussion on the theoretical results.

For the SDI method, we use the tensor product of polynomials as basis functions and let $K = \lceil cN^{d/(2d+1)}\rceil$. We take the constant $c$ to be $0.5$ when $d=5$ and $0.9$ when $d=15$. Figure~\ref{fig: Linear_boxplot_sieve} shows the box-plots of sieve-based SGM estimators and SDI estimators with different covariate dimensions and numbers of machines.

\centerline{[Insert Figure \ref{fig: Linear_boxplot_sieve} about here.]}

It can be seen that the SDI estimators perform much better than the sieve-based SGM estimators in both scenarios especially when $L$ is $100$, $200$, and $500$. The bias and SE of these estimators are summarized in the following table. 

\centerline{[Insert Table \ref{table: Linear_sieve} about here.]}

The SDI estimators perform similarly to the oracle estimator in terms of bias and SE regardless of how large $L$ is in the first scenario ($d=5$). In the second scenario ($d=15$), the bias and SE of the sieve-based SGM and SDI estimators are similar to those in the first scenario. This implies the larger covariate dimension has little effect on the accuracy of the SDI estimators. Moreover, the bias of the SDI estimator is much smaller than that of the KDI estimators when $L$ is large in the second scenario. This illustrates the capacity of the SDI method to accommodate a large number of machines.

To further explore the performance of the proposed methods under different data generation processes (DGPs), we consider another DGP in which the response is generated from a more complex model
\begin{equation}\label{eq: mod nonlinear}
	Y=2+\sin(Z_1+Z_2+Z_3)+2\Phi^{-1}(0.5Z_4+0.5)+2\Phi^{-1}(0.5Z_5+0.5)+\epsilon
\end{equation}
where $\epsilon\sim N(0,1)$ and other simulation settings are the same as before.
To distinguish the two settings, we call the previous DGP the linear setting and call the DGP considered here the nonlinear setting. 
In the simulation, the bias and standard error (SE) of the oracle estimator are $0.000$ and $0.007$, respectively. The complete-case estimator still has a large bias, $0.326$, and a small SE, $0.008$.
Figure~\ref{fig: Nonlinear_boxplot_kernel} shows the box-plots of kernel-based SGM estimators and SDI estimators with different covariate dimensions and numbers of machines.
Table~\ref{table: Nonlinear_kernel} presents the bias and SE of these estimators.

\centerline{[Insert Figure \ref{fig: Nonlinear_boxplot_kernel} about here.]}

\centerline{[Insert Table \ref{table: Nonlinear_kernel} about here.]}

The simulation results for the first scenario ($d=5$) are similar to those in the linear setting. For the second scenario ($d=15$), we find that the bias of the KDI estimators has drastic change as $L$ changes and is large for most $L$'s. Combining this with the performance of the KDI estimators in the linear setting with $d = 15$, we regard the KDI method as unreliable when $d =15$ under the sample size considered in the simulation.

The box-plots of sieve-based SGM estimators and SDI estimators with different covariate dimensions and numbers of machines are presented in Fig.~\ref{fig: Nonlinear_boxplot_sieve}. 
Table~\ref{table: Nonlinear_sieve} presents the bias and SE of these estimators. 

\centerline{[Insert Figure \ref{fig: Nonlinear_boxplot_sieve} about here.]}

\centerline{[Insert Table \ref{table: Nonlinear_sieve} about here.]}

Similar phenomenons as those under the linear setting can be observed in Fig.~\ref{fig: Nonlinear_boxplot_sieve} and Table~\ref{table: Nonlinear_sieve}. 
The SDI estimators outperform the SGM estimators for all $L$ and $d$.
By comparing the performance of the SDI method under two settings, we find that the bias of SDI estimators is smaller in the linear setting.
This is consistent with our discussion at the end of Section \ref{subsec: sieve theory}.
In the linear setting, the approximation error of the basis functions concerning the true conditional mean function is zero as we use polynomials as the basis functions, and this results in a small finite sample bias. 
Further explorations about the impact of the approximation error on SDI estimators can be found in Section \ref{sec: SDI approximation error}.

Besides the estimation accuracy, we also evaluated the computation efficiency of the proposed distributed estimators. 
We compare the computing time of the KDI and SDI estimators for different numbers of machines with that of the classical non-distributed estimators which are computed using the whole data on one machine.
All computations are performed in the R Programming \citep{Rcore2016} using a windows server with a 24-core processor and 128GB RAM.
Table \ref{table: simtime} presents CPU times (in second) required to obtain the proposed distributed estimators and the classical estimators under different settings with different $d$. 

\centerline{[Insert Table \ref{table: simtime} about here.]}

Table \ref{table: simtime} shows that the computing times of KDI and SDI methods decrease as the number of machines increases. Both the KDI and SDI methods have a significantly shorter computing time compared to the classical regression imputation method. The SDI method has a shorter computing time compared to the KDI method when $L$ is no larger than $50$. When $L$ is no smaller than $100$, the KDI method is faster. This might be because the computing time of the KDI method is theoretically proportional to $n^{2} = N^{2} / L^{2}$ which decreases at the rate of $1/L^{2}$ as $L$ increases and the computing time of the SDI method does not decrease at such a fast rate.

As can be seen from Table \ref{table: simtime}, the computing time of the classical sieve regression imputation estimator is not that long. Therefore, we consider a larger $N=2\times 10^{6}$ to further demonstrate the effectiveness of the SDI method. The computing times of the SDI estimators with $L=$ $10$, $20$, $50$, $100$, $200$, $500$ are $559.08$, $428.01$, $229.62$, $80.92$,
$47.60$, $26.34$ respectively, while the computing time of the classical sieve regression imputation estimator is $4003.38$. 
The proposed SDI method can reduce the computing time from over one hour to a few minutes when the number of machines is sufficiently large. Comparing this result to those in Table \ref{table: simtime}, we can see that the improvement in terms of the computational time is more significant for a larger sample size.

\subsection{Tuning Parameter Selection}\label{subsec: sim tuning}

In this subsection, we evaluate the performance of the proposed distributed cross validation strategy. 
We consider the candidate set $\mathcal{C} = \{0.1$, $0.5$, $0.9$, $1.3$, $1.7$, $2.1\}$ for $c$. We set $L = 100$ and examine the effect of different $c$'s and $d$'s in this subsection.
Table \ref{table: time tuning parameter} presents the computing time of the KDI/SDI estimators in the nonlinear setting with $d=5$, $L = 100$ and different $c$'s in $\mathcal{C}$.

\centerline{[Insert Table \ref{table: time tuning parameter} about here.]}

From Table \ref{table: time tuning parameter}, the computing time of the KDI estimator is similar for different $c$, while that of the SDI estimator is longer for larger $c$. 
This verifies the motivation of the preference in smaller $c$ in Algorithm \ref{al:dist cv SDI}.

Further, we compare the RMSE of the KDI/SDI estimator under the selected $c$ with that of the KDI/SDI estimators under different $c$'s in $\mathcal{C}$. 
Figures \ref{fig: Linear_bws} and \ref{fig: Nonlinear_bws} present the simulation results in the linear and nonlinear setting, respectively.

\centerline{[Insert Figure \ref{fig: Linear_bws} about here.]}

\centerline{[Insert Figure \ref{fig: Nonlinear_bws} about here.]}

As can be seen, the KDI estimator with constant selected via the proposed DWCV algorithm has the smallest RMSE among estimators with different candidate constants in both the linear and nonlinear setting when $d=5$. When $d = 15$, the estimator with CV selected constant still has a performance close to the optimal candidate constant. 

The performance of the SDI method is similar for different constants in the linear setting, which is also similar to the performance of the SDI method with the constant selected by the proposed DWCV algorithm. In the nonlinear setting, the proposed DWCV algorithm is able to select the smallest constant with a nearly optimal performance, which is in line with our design intention. 

In addition, we conducted an ablation study to examine the effectiveness of weighting and distributed calculation procedure in the proposed DWCV algorithm. We delegate it to the Appendix due to limited space.

\subsection{The Approximation Error of the SDI Method}\label{sec: SDI approximation error}
In this section, we provide some further simulation evidence on the phenomenon discussed at the end of Section \ref{subsec: sieve theory}. The simulation in Section \ref{subsec: sim main} has shown that the SDI method has a small bias if $m(x)$ can be approximated well by the basis functions used. In this section, we show that this is also true if $1/\pi(x)$ can be approximated well by the basis functions used. To this end, we consider another simulation setting which is the same as the nonlinear setting considered in Section \ref{subsec: sim main} except that the missing mechanism is
\[
P(\delta = 1\mid Z) = \frac{1}{1 + (Z_{1} + Z_{2} + Z_{3} + Z_{4} + Z_{5} - 0.5 )^{2}}.
\]
Here we still consider two scenarios with $X=Z$ ($d = 5$) and $X = (Z,Z^{\prime},Z^{\prime\prime})$ ($d=15$).
Under the mechanism considered here, $1/\pi(x)$ is a polynomial of $x$ and hence the SDI method has a small approximation error under this setting according to the analysis at the end of Section \ref{subsec: sieve theory}. We evaluate the performance of the SDI estimators with $d = 5$ and $15$ based on $200$ Monte Carlo random samples. In this simulation, the bias and standard error (SE) of the oracle estimator are $0.000$ and $0.007$, respectively. The complete-case estimator has a bias $0.364$ and a SE $0.009$, which are both slightly larger than those under the nonlinear setting in Section \ref{subsec: sim main}. Moreover, the missing rate is about $55.9\%$, which is much larger than the missing rate $25.0\%$ under the nonlinear setting in Section \ref{subsec: sim main}.
The bias and SE of SDI estimators under this setting are summarized in the following table. 

\centerline{[Insert Table \ref{table: br} about here.]}

Comparing Table \ref{table: br} with Table \ref{table: Nonlinear_sieve}, we find that the SDI has a much smaller bias here especially when $d=5$ or $L = 10,20,50$. This verifies the theoretical analysis at the end of Section \ref{subsec: sieve theory}. As illustrated by Tables \ref{table: Linear_sieve} and \ref{table: br}, a pretty small finite sample bias can be expected for the SDI method as long as the basis functions used can approximate either $m(x)$ or $1/\pi(x)$ well.
Moreover, the SDI method has a larger SE compared to  Table \ref{table: Nonlinear_sieve}, which is not surprising because the missing rate is much higher under the setting considered here compared to that in Section \ref{subsec: sim main}.

\section{Real Data Analysis}
GroupLens Research has collected and made available movie rating data sets on the MovieLens website (\texttt{https://movielens.org}). In this section, we apply our method to a large-scale movie rating dataset, the \emph{ml-25m} dataset which contains $25000095$ ratings created by $162541$ users and $1093360$ tag applications across $62423$ movies. 
The parameter of interest is the average rating of the film \emph{Pulp Fiction} among all users. 
Less than half of users ($79672$) create their rating for \emph{Pulp Fiction} and ratings of the other users are missing. 
The average rating of users who create their rating for \emph{Pulp Fiction} may be a biased estimator for the parameter of interest.  
We create a $10$-dimensional covariate vector to describe each user's characteristic based on their rating histories to adjust this bias and assume the MAR missing mechanism.  
We apply our two distributed methods to compute the regression imputation estimators. The rates of tuning parameters are the same as those in Section \ref{sec: sim} and the constant $c$ is taken to be $1.3$ and $0.9$ for the KDI and the SDI method, respectively. We also compute the complete-case estimator, the classical kernel regression imputation estimator ($\hat{\mu}_{\mathbb{K}}$), and the sieve regression imputation estimator ($\hat{\mu}_{\mathbb{S}}$) for comparison. The results are $4.19$ (complete-case), $4.12$ ($\hat{\mu}_{\mathbb{K}}$) and $4.04$ ($\hat{\mu}_{\mathbb{S}}$), respectively. We use $\hat{\mu}_{\mathbb{K}}$ as the benchmark for the kernel-based estimators and use $\hat{\mu}_{\mathbb{S}}$ as the benchmark for the sieve-based estimators. In Fig \ref{fig: movie}, we plot curves of the absolute value of the differences between the complete-case, KDI, SDI estimators and their benchmarks. We also plot the absolute difference curves for the kernel and sieve-based SGM estimators for comparison.

\centerline{[Insert Figure \ref{fig: movie} about here.]}

It can be seen that the performance of all the SGM estimators is unstable. The KDI performs well when $L$ is small but the performance deteriorates when the number of machines is large. In contrast, the performance of SDI estimators stays good even when $L$ is $500$. The SDI estimators are quite stable and always outperform the SGM estimators. This again validates the ability of the SDI method to accommodate a large number of machines.

The computing time of the KDI method is $110.58, 26.72, 4.28, 1.11, 0.29, 0.06$ seconds when $L = 10, 20, 50, 100, 200, 500$, respectively. These are all much shorter than that of the classical kernel regression imputation estimator ($20896.18$ seconds). However, the reduction of computing time is achieved at the cost of some potential accuracy loss. The absolute difference between the KDI estimators and the classical kernel regression imputation estimator is at least $0.016$. Computation of the classical sieve regression imputation estimator takes $78.17$ seconds, which is brought down to $9.86, 5.72, 3.89, 2.66, 2.29, 1.99$ seconds by the SDI method when $L = 10, 20, 50, 100, 200, 500$, respectively. The SDI method has little accuracy loss in this problem as the maximum absolute difference between the SDI estimators and the classical sieve regression imputation estimator is at most $0.001$.
According to the empirical evidence in simulations and real data analyses, we recommend using the SDI method with a large number of machines concerned with the estimation accuracy and computing time.

We also evaluate the performance of the proposed tuning parameter selection procedure in the MovieLens dataset. We consider the candidate set $\mathcal{C} = \{0.1$, $0.5$, $0.9$, $1.3$, $1.7$, $2.1\}$ for $c$ and the number of machine $L=100$. 
The absolute difference between the KDI estimator under the selected $c$ and the benchmark $\hat{\mu}_{\mathbb{K}}$ is $0.038$. For comparison, we calculate the the KDI estimators under different $c$'s in $\mathcal{C}$. The absolute differences between red the KDI estimator with $c = 0.1, 0.5, 0.9, 1.3, 1.7, 2.1$ and the benchmark $\hat{\mu}_{\mathbb{K}}$ are $2.072$, $2.065$, $0.351$, $0.038$, $0.046$ and $0.052$, respectively.
It can be seen that the KDI estimator with constant selected via the proposed DWCV algorithm is closest to the benchmark $\hat{\mu}_{\mathbb{K}}$ among the KDI estimators with different candidate constants. Moreover, the computing time of the KDI estimator with CV-selected constant is $3.73$ seconds. The time is slightly larger than that of the KDI estimator with a predetermined constant which is $1.11$ when $L = 100$. The extra computing time comes from the CV procedure. The computing time is still much shorter than the classical kernel regression imputation estimator even though the CV procedure is included.

The absolute differences between the SDI estimator under the selected $c$ and the benchmark $\hat{\mu}_{\mathbb{S}}$ is $0.007$. The corresponding absolute differences with $c = 0.1, 0.5, 0.9, 1.3, 1.7, 2.1$ are $0.007$, $0.004$, $0.001$, $0.010$, $0.014$ and $0.014$, respectively.
The SDI estimators with different constants have similar performance and all of them are close to the benchmark $\hat{\mu}_{\mathbb{S}}$. As we have designed, the proposed DWCV algorithm selects the smallest $c$ in the candidate set $\mathcal{C}$ in this case to reduce the computing time. The computing time of the SDI estimator with the selected $c$ is $0.78$, which is even shorter than the computing time $2.66$ of the SDI estimator with a predetermined constant when $L = 100$.

\section{Concluding Remarks}
In this paper, we propose two distributed regression imputation methods for response mean estimation, namely the KDI and SDI methods. Compared to the classical method, our distributed methods can reduce the computational burden with large-scale data significantly. The KDI method needs extremely little communication between machines. It can achieve the minimax rate and is asymptotically normal with asymptotic variance achieving the semiparametric efficiency bound provided the number of machines $L$ is not too large. The SDI method can achieve similar statistical properties while accommodating more machines compared to the KDI method at the expense of more required communication. If the communication cost is high, the KDI method is recommended and the number of local machines should be limited to avoid the additional bias of this method.
If the communication cost is low, the SDI method is recommended and more local machines can be used to reduce computing time. Moreover, according to our simulation results, the SDI method is preferred if the number of covariates is large.



\acks{We would like to gratefully acknowledge the GroupLens for making the datasets available.
	Wang's research was supported by the National Natural Science Foundation of China (General program 11871460 and program for Innovative Research Group 61621003), a grant from the Key Lab of Random Complex Structure and Data Science, CAS.}


\newpage

\appendix
\section*{Appendix A. Some Lower Bounds}
\subsection*{Minimax Lower Bound}
Here, we establish a minimax lower bound for the response mean estimation problem with missing responses. 
For constants $M_{1}, M_{3}, M_{4} > 0$, $0 < M_{2} < 1$ and $q > d$, let
\begin{equation}\label{eq: minimax set1}
	\begin{aligned}
		\mathcal{P}_{1} = \bigg\{P:\ & \text{$P$ is an observational distribution of $(\delta, Y, X)$ and under $P$,} \\
		&\text{(i) $\pi(x)$, $f(x)$, $m(x)$ and their partial derivatives up to order $q > 0$ }\\
		&\quad\text{is bounded by $M_{1}$;}\\
		&\text{(ii) $\inf_{x}\pi(x) \geq M_{2}$; (iii) $\inf_{x}f(x) \geq M_{3}$; (iv) $\sup_{x}\sigma^{2}(x) \leq M_{4}$}
		\bigg\}
	\end{aligned}
\end{equation}
be the set of observational distributions that satisfy the regularity conditions imposed in Theorem \ref{thm:kernel}. Then we have the following minimax lower bound.
\begin{proposition}\label{prop: minimax}
	There are some universal constants $C_{\mathcal{P}_{1}}$, $\xi> 0$ such that 
	\[\inf_{\hat{\mu}}\sup_{P\in \mathcal{P}_{1}} P\left(|\hat{\mu} - \mu| \geq \frac{C_{\mathcal{P}_{1}}}{\sqrt{N}}\right)\geq \xi,\]
	where the infimum is taken over all estimators for $\mu$ based on $N$ i.i.d. observations.
\end{proposition}
\begin{proof}
	We prove the result by considering a tractable parametric subclass of $\mathcal{P}_{1}$. Let 
	\[\begin{aligned}
		\mathcal{P}_{\rm par} = \bigg\{P:\ & \text{$P$ is an observational distribution of $(\delta, Y, X)$ and, under $P$,} \\
		&\text{(i) $Y\Perp X$; (ii) $\pi(x) \equiv 1$; (iii) $X \sim  U[-1,1]$; (iv) $Y\sim N(\mu,1)$}.
		\bigg\}
	\end{aligned}
	\]
	Then $\mathcal{P}_{\rm par}$ is clearly a subset of $\mathcal{P}_{1}$. Under observational distributions in $\mathcal{P}_{\rm par}$, there is no missing responses and estimating $\mu$ becomes a normal mean estimation problem. According to theorem \ref{thm: sieve} of \cite{Chen2018Robust}, there are some universal constants $C_{\mathcal{P}_{1}}$, $\xi> 0$ such that
	\[\inf_{\hat{\mu}}\sup_{P\in \mathcal{P}_{\rm par}} P\left(|\hat{\mu} - \mu| \geq \frac{C_{\mathcal{P}_{1}}}{\sqrt{N}}\right)\geq \xi.\]
	This implies the result of this proposition because 
	\[\sup_{P\in \mathcal{P}_{1}} P\left(|\hat{\mu} - \mu| \geq \frac{C_{\mathcal{P}_{1}}}{\sqrt{N}}\right) \geq \sup_{P\in \mathcal{P}_{\rm par}} P\left(|\hat{\mu} - \mu| \geq \frac{C_{\mathcal{P}_{1}}}{\sqrt{N}}\right).\]	
\end{proof}

Let the basis functions $\{v_{k}(x\}_{k=1}^{\infty})$ be the tensor products of Chebyshev polynomials.
Similarly to \eqref{eq: minimax set1}, for constants $0 < M_{1}, M_{2} < 1$, $M_{3}, M_{4}, M_{5} > 0$ and $q > d/2$, we define the  
set of observational distributions that satisfy the regularity conditions imposed in Theorem \ref{thm: sieve}
\begin{equation}\label{eq: minimax set2}
	\begin{aligned}
		\mathcal{P}_{2} = \bigg\{P:\ & \text{$P$ is an observational distribution of $(\delta, Y, X)$ and under $P$,} \\
		&\text{(i) $M_{1}^{-1} \leq \sigma_{\rm min}(\Sigma) \leq \sigma_{\rm max}(\Sigma) \leq M_{1}$; (ii) $\inf_{x}\pi(x) \geq M_{2}$;}\\
		&\text{(iii) $E[(m(X))^{2}]\leq M_{3}$; (iv) $\sup_{x}\sigma^{2}(x) \leq M_{4}$;}\\
		&\text{(v) $\forall K$, $\exists\bar{\beta},\bar{\gamma}$ such that $\E[(m(X) - V_{K}(X)^{\T}\bar{\beta})^2] \leq M_{5} K^{-\frac{2q}{d}}$} \\
		&\text{\qquad and $\E[(\pi(X)^{-1} - V_{K}(X)^{\T}\bar{\gamma})^2]\leq M_{5} K^{-\frac{2q}{d}}$}
		\bigg\}.
	\end{aligned}
\end{equation}
Then similar minimax lower bound can be obtained by the arguments of Proposition \ref{prop: minimax}.

\subsection*{Semiparametric Efficiency Bound}
Semiparametric efficiency bound is a lower bound for the asymptotic variance of a regular asymptotically linear estimator in a semiparametric problem. See \cite{bickel1982adaptive} for more introductions and rigorous definitions. For the response mean estimation problem with missing responses, the semiparametric efficiency bound can be obtained according the results of \cite{hahn1998role}. Using the notations in this paper, the semiparametric efficiency bound is
\[\E\left[\frac{\sigma^{2}(X)}{\pi(X)}\right] + \var[m(X)].\]
Straightforward calculation can verify that $\var[\psi]$ in Theorem \ref{thm:kernel} and \ref{thm: sieve} equals to the semiparametric efficiency bound presented here.
\section*{Appendix B. Proof of Theorem \ref{thm:kernel}}
\label{app:theorem}
\begin{proof}
	Throughout this proof, we always give them the superscript ``$(l)$" for quantities defined using the data on the $l$-th machine. With some abuse of notation, we use the same subscripts $\{1,\dots, n\}$ to denote the observations on the $l$-th machine for $l=1,\dots,L$. 
	Note that
	\begin{equation}\label{eq: main}
		\begin{split}
			\sqrt{N}(\tilde{\mu}_{\mathbb{K}}-\mu)
			=&\sqrt{N}\frac{1}{L}\sum_{l=1}^{L}(\hat{\mu}^{(l)}_{\mathbb{K}}-\mu).
		\end{split}
	\end{equation}
	We begin the proof by decomposing $\hat{\mu}^{(l)}_{\mathbb{K}}$. Recall that $n = N/L$. By simple algebra, we have the following decomposition for $l = 1,2,\dots,L$
	\begin{equation}\label{eq: part}
		\begin{split}
			\hat{\mu}^{(l)}_{\mathbb{K}}
			=&\frac{1}{n}\sum^{n}_{i=1}\{\delta_iY_i+(1-\delta_i)m(X_i)\}\\
			&+\frac{1}{n}\sum^{n}_{i=1}(1-\delta_i)(\hat{m}^{(l)}_{\mathbb{K}}(X_i)-m(X_i))\\
			\eqqcolon&R^{(l)}+S^{(l)},
		\end{split}
	\end{equation}
	where
	\[\hat{m}^{(l)}_{\mathbb{K}}(x)=\frac{\sum^{n}_{i= 1}K_h(X_i-x)\delta_iY_i}{\sum^{n}_{i=1}K_h(X_i-x)\delta_i},\]
	and $K_h(\cdot) = h^{-d}K(\cdot/h )$.
	Since $R^{(l)}$ is a mean of i.i.d. random variables whose theoretical property is easy to derive, the main task is to study $S^{(l)}$. Let 
	\[\hat{t}^{(l)}(x) = \frac{1}{n}\sum^{n}_{i=1}K_h(X_i-x)\delta_iY_i\quad \text{and} \quad \hat{s}^{(l)}(x) = \frac{1}{n}\sum^{n}_{i=1}K_h(X_i-x)\delta_i.\]
	Define $t(x)=m(x)\pi(x)f(x)$ and $s(x)=\pi(x)f(x)$. Then
	\begin{equation}\label{eq: part 2}
		\begin{split}
			S^{(l)}
			&=\frac{1}{n}\sum^{n}_{i=1}(1-\delta_i)\left(\frac{\hat{t}^{(l)}(X_i)}{\hat{s}^{(l)}(X_i)}-\frac{t(X_i)}{s(X_i)}\right)\\
			&=\frac{1}{n}\sum^{n}_{i=1}(1-\delta_i)\left(\frac{\hat{t}^{(l)}(X_i) - t(X_i)}{s(X_i)}\right)\\ &\quad +\frac{1}{n}\sum^{n}_{i=1}(1-\delta_i)\hat{t}^{(l)}(X_i)\left(\frac{1}{\hat{s}^{(l)}(X_i)} - \frac{1}{s(X_i)}\right)\\
			&\eqqcolon S_{1}^{(l)} + S_{2}^{(l)}.
		\end{split}
	\end{equation}
	Note that 
	\begin{equation}\label{eq: U statistic}
		\begin{aligned}
			S_{1}^{(l)}
			&= \frac{1}{n^{2}}\sum_{i,j}\frac{1-\delta_{i}}{s(X_{i})}(\delta_{j}Y_{j}K_{h}(X_{j} - X_{i}) - t(X_{i}))\\
			&= \frac{1}{n^{2}}\sum_{i\neq j}^{n}\frac{1-\delta_{i}}{s(X_{i})}(\delta_{j}Y_{j}K_{h}(X_{j} - X_{i}) - t(X_{i}))\\
			&\quad + \frac{1}{n^{2}}\sum_{i = j}^{n}\frac{1-\delta_{i}}{s(X_{i})}(\delta_{j}Y_{j}K_{h}(X_{j} - X_{i}) - t(X_{i})) \\
			&\coloneqq U_{1}^{(l)} + V_{1}^{(l)},
		\end{aligned}
	\end{equation}
	and
	\[V_{1}^{(l)} = -\frac{1}{n^{2}}\sum_{i=1}^{n}\frac{1-\delta_{i}}{s(X_{i})}t(X_{i}).\]
	We have
	\begin{equation*}
		\E[|V_{1}^{(l)}|] = O\left(\frac{1}{n}\right)
	\end{equation*}
	by straightforward calculation.
	Denote 
	\[H_{ij} = \frac{1-\delta_{i}}{s(X_{i})}(\delta_{j}Y_{j}K_{h}(X_{j} - X_{i}) - t(X_{i})).\]
	Then
	\[U^{(l)}_{1} = \frac{1}{n^2}\sum_{i\neq j}^{n}H_{ij}\]
	is a U-statistic. Let
	\begin{equation}\label{eq: U decomp}
		W^{(l)}_{1} = U^{(l)}_{1} - \check{U}^{(l)}_{1},
	\end{equation}
	where
	\begin{equation}\label{eq: U hat}
		\check{U}^{(l)}_{1} = \E[H_{ij}]+\left\{\frac{1}{n}\sum^{n}_{i=1}\E[H_{ij}\mid Z_i]-\E[H_{ij}]\right\}+\left\{\frac{1}{n}\sum^{n}_{j=1}\E[H_{ij}\mid Z_j]-\E[H_{ij}]\right\}
	\end{equation}
	and $Z_i = (\delta_{i}, \delta_{i}Y_{i},X_{i})$. By some standard calculations in literature of nonparametric kernel regression,  we have
	\begin{equation}\label{eq: Hi}
		\E[H_{ij}\mid Z_i] = O(h^{q}),
	\end{equation}
	\begin{equation}\label{eq: Hj}
		\E[H_{ij}\mid Z_j] = \frac{\delta_j(1-\pi(X_j))Y_{j}}{\pi(X_{j})} - \E[(1-\delta)Y]+O(h^{q})
	\end{equation}
	and
	\begin{equation}\label{eq: Hij}
		\E[H_{ij}]=O(h^{q}).
	\end{equation}
	\eqref{eq: U hat}, \eqref{eq: Hi}, \eqref{eq: Hj} and \eqref{eq: Hij} imply
	\begin{equation}\label{eq: U hat 2}
		\check{U}^{(l)}_{1}
		= \frac{1}{n}\sum_{i=1}^{n}\frac{\delta_{i}(1-\pi(X_{i}))Y_{i}}{\pi(X_{i})} - \E[(1-\delta)Y] + O(h^{q}).
	\end{equation}
	By \eqref{eq: U statistic}, \eqref{eq: U decomp} and \eqref{eq: U hat 2}, we conclude
	\begin{equation}\label{eq: decomp S1}
		\begin{aligned}
			S_{1}^{(l)} 
			&=\frac{1}{n}\sum_{i=1}^{n}\frac{\delta_{i}(1-\pi(X_{i}))Y_{i}}{\pi(X_{i})} - \E[(1-\delta)Y] + W_{1}^{(l)}
			+ V_{1}^{(l)} + O(h^{q}).
		\end{aligned}
	\end{equation}
	It is clear that $\E[W_{1}^{(l)}]=0$. In addition, by standard U-statistic theory \citep[Chapter 3 in][]{Shao2003stat}, we have $\E[(W_{1}^{(l)})^{2}]=O(n^{-2}h^{-d})$. 
	
	Next, we move on to the term $S_{2}^{(l)}$.
	By Taylor's expansion, we have
	\begin{equation}
		\begin{aligned}\label{eq: decomp S2}
			S^{(l)}_{2}
			&=-\frac{1}{n}\sum_{i=1}^{n}(1-\delta_{i})\frac{\hat{t}^{(l)}(X_{i})}{s^{2}(X_{i})}(\hat{s}^{(l)}(X_{i}) - s(X_{i})) + \frac{1}{n}\sum_{i=1}^{n}(1 - \delta_{i})\frac{\hat{t}^{(l)}(X_{i})}{\tilde{s}_{i}^{3}}(\hat{s}^{(l)}(X_{i}) - s(X_{i}))^{2}\\
			&\eqqcolon -S_{2,1}^{(l)} + S_{2,2}^{(l)}
		\end{aligned}
	\end{equation}
	where $\tilde{s}_{i}$ is between $s(X_{i})$ and $\hat{s}^{(l)}(X_{i})$. 
	According to (C.4) and (C.6), by the similar arguments as in the proof of \eqref{eq: decomp S1}, we can decompose $S_{2,1}^{(l)}$ as
	\begin{equation}\label{eq: decomp S21}
		S_{2,1}^{(l)} = \frac{1}{n}\sum_{i=1}^{n}\frac{\delta_{i}(1-\pi(X_{i}))m(X_{i})}{\pi(X_{i})} - \E[(1-\delta)Y] + W_{2}^{(l)} + V_{2}^{(l)} + O(h^{q})	
	\end{equation}
	where $\E[W_{2}^{(l)}]=0$, $\E[(W_{2}^{(l)})^{2}]=O(n^{-2}h^{-2d})$ and $\E[|V_{2}^{(l)}|] = O(n^{-1}h^{-d})$. The ``$O$" are all uniform in $l$ since data are independent and identically distributed across machines.
	Combining \eqref{eq: part}, \eqref{eq: part 2}, \eqref{eq: decomp S1}, \eqref{eq: decomp S2} and \eqref{eq: decomp S21}, we have
	\begin{align*}
		\frac{1}{L}\sum_{l=1}^{L}\hat{\mu}_{\mathbb{K}}^{(l)} 
		&= 	\frac{1}{N}\sum_{i=1}^{N} \left\{\frac{\delta_{i}Y_{i}}{\pi(X_{i})} + \frac{\pi(X_{i}) -\delta_{i}}{\pi(X_{i})}m(X_{i})\right\} + \frac{1}{L}\sum_{l=1}^{L} W_{1}^{(l)} + \frac{1}{L}\sum_{l=1}^{L}W_{2}^{(l)} \\
		&\quad + \frac{1}{L}\sum_{l=1}^{L}V_{1}^{(l)} + \frac{1}{L}\sum_{l=1}^{L}V_{2}^{(l)} + \frac{1}{L}\sum_{l=1}^{L}S_{2,2}^{(l)} + O(h^{q}),
	\end{align*}
	\[
	\E\left[\left(\frac{1}{L}\sum_{l=1}^{L} W_{1}^{(l)}\right)^{2}\right] = O\left(\frac{L}{N^{2}h^{d}}\right),
	\]
	\[
	\E\left[\left(\frac{1}{L}\sum_{l=1}^{L} W_{2}^{(l)}\right)^{2}\right] = O\left(\frac{L}{N^{2}h^{2d}}\right),
	\]
	\[\E\left[\left|\frac{1}{L}\sum_{l=1}^{L}V_{1}^{(l)}\right|\right] = O\left(\frac{L}{N}\right),\]
	and 
	\[\E\left[\left|\frac{1}{L}\sum_{l=1}^{L}V_{2}^{(l)}\right|\right] = O\left(\frac{L}{Nh^{d}}\right).\]
	Thus
	\begin{equation}\label{eq: med}
		\begin{aligned}
			\frac{1}{L}\sum_{l=1}^{L}\hat{\mu}_{\mathbb{K}}^{(l)} 
			&= \frac{1}{N}\sum_{i=1}^{N} \left\{\frac{\delta_{i}Y_{i}}{\pi(X_{i})} + \frac{\pi(X_{i}) -\delta_{i}}{\pi(X_{i})}m(X_{i})\right\} + \frac{1}{L}\sum_{l=1}^{L}S_{2,2}^{(l)} \\
			&\quad+ O_{P}\left(\frac{L}{Nh^{d}}\right) + O(h^{q}).
		\end{aligned}
	\end{equation}
	So it remains to deal with the term $\sum_{l=1}^{L}S_{2,2}^{(l)}/L$. Let $s_{*} = \inf_{x}s(x)$. By (C.2) and (C.3), we have $s_{*} > 0$. For the constant $\tau$ in Lemma \ref{lem: sup exp}, define the event 
	\[E = \left\{\max_{l}\sup_{x}|\hat{s}^{(l)}(x) - s(x)| \leq \tau(a_{N} + h^{q})\right\},\]
	where $a_{N} = \sqrt{L \log N /(N h^{d})}$.
	For sufficiently large $N$, we have $\tilde{s}_{i} \geq 0.5s_{*}$ on event $\mathcal{E}$. Notice that
	\[
	\begin{aligned}
		S_{2,2}^{(l)}& = \frac{1}{n}\sum_{i=1}^{n}(1 - \delta_{i})\frac{\hat{t}^{(l)}(X_{i})}{\tilde{s}_{i}^{3}}(\hat{s}^{(l)}(X_{i}) - s(X_{i}))^{2}\\
		& = \frac{1}{n}\sum_{i=1}^{n}(1 - \delta_{i})\frac{1}{\tilde{s}_{i}^{3}}(\hat{t}^{(l)}(X_{i}) - t(X_{i}))(\hat{s}^{(l)}(X_{i}) - s(X_{i}))^{2} \\
		&\quad + \frac{1}{n}\sum_{i=1}^{n}(1 - \delta_{i})\frac{t(X_{i})}{\tilde{s}_{i}^{3}}(\hat{s}^{(l)}(X_{i}) - s(X_{i}))^{2}.
	\end{aligned}
	\]
	Thus on event $\mathcal{E}$, it holds that
	\begin{align*}
		|S_{2,2}^{(l)}| & \leq \frac{8}{s_{*}^{3}}\frac{1}{n}\sum_{i = 1}^{n}|\hat{t}^{(l)}(X_{i}) - t(X_{i})|(\hat{s}^{(l)}(X_{i}) - s(X_{i}))^{2} \\
		& \quad + \frac{8}{s_{*}^{3}}\frac{1}{n}\sum_{i=1}^{n}
		|t(X_{i})|(\hat{s}^{(l)}(X_{i}) - s(X_{i}))^{2}\\
		&\leq \frac{8}{s_{*}^{3}}\tau^{2}(a_{N} + h^{q})^{2}\frac{1}{n} \sum_{i=1}^{n}|\hat{t}^{(l)}(X_{i}) - t(X_{i})| \\
		&\quad + \frac{8}{s_{*}^{3}}\frac{1}{n}\sum_{i=1}^{n}
		|t(X_{i})|(\hat{s}^{(l)}(X_{i}) - s(X_{i}))^{2}.
	\end{align*}
	This implies
	\begin{equation}\label{eq: decomp S22}
		\begin{aligned}
			\frac{1}{L}\sum_{l=1}^{L}S_{2,2}^{(l)} &\leq \frac{8}{s_{*}^{3}}\tau^{2}(a_{N} + h^{q})^{2}\frac{1}{N} \sum_{i=1}^{N}|\hat{t}^{(l)}(X_{i}) - t(X_{i})|\\
			& \quad + \frac{8}{s_{*}^{3}}\frac{1}{N}\sum_{i=1}^{N}|t(X_{i})|(\hat{s}^{(l)}(X_{i}) - s(X_{i}))^{2}.
		\end{aligned}
	\end{equation}
	on event $\mathcal{E}$.
	By Jensen's inequality, we have
	\begin{equation}\label{eq: moment bound S22 term1}
		\begin{aligned}
			\E\left[\frac{1}{N} \sum_{i=1}^{N}|\hat{t}^{(l)}(X_{i}) - t(X_{i})|\right] 
			& = \E\left[|\hat{t}^{(l)}(X_{i}) - t(X_{i})|\right] \\
			& \leq \sqrt{\E\left[(\hat{t}^{(l)}(X_{i}) - t(X_{i}))^{2}\right]}.
		\end{aligned}
	\end{equation}
	According to (C.1), (C.4) and (C.5), we have
	\begin{equation*}
		\begin{aligned}
			& \E\left[(\hat{t}^{(l)}(X_{i}) - t(X_{i}))^{2}\right] \\
			& =  \E\left[\frac{1}{n^{2}}\sum_{j_{1}\neq j_{2}}^{n}(\delta_{j_{1}}Y_{j_{1}}K_{h}(X_{j_{1}}-X_{i}) -  t(X_{i}))(\delta_{j_{2}}Y_{j_{2}}K_{h}(X_{j_{2}}-X_{i}) - t(X_{i}))\right] \\
			& \quad + \E\left[\frac{1}{n^{2}}\sum_{j =1}^{n}(\delta_{j}Y_{j}K_{h}(X_{j}-X_{i}) - t(X_{i}))^{2}\right] \\
			& = O(h^{2q}) + O\left(\frac{1}{nh^{d}}\right).
		\end{aligned}
	\end{equation*}
	Combing this with \eqref{eq: moment bound S22 term1}, we have 
	\[\frac{1}{N} \sum_{i=1}^{N}|\hat{t}^{(l)}(X_{i}) - t(X_{i})| = O_{P}\left(h^{q} + \sqrt{\frac{L}{Nh^{d}}}\right).\]
	Then (C.6) implies
	\begin{equation}\label{eq: bound S22 term1}
		(a_{N} + h^{q})^{2}\frac{1}{N} \sum_{i=1}^{N}|\hat{t}^{(l)}(X_{i}) - t(X_{i})| = O_{P}\left(\frac{L}{Nh^{d}} + h^{2q}\right)
	\end{equation}
	By (C.1). we have $\sup_{x}|t(x)| \leq \infty$.
	Straightforward calculation can show
	\[
	\begin{aligned}
		&\E\left[\frac{1}{N}\sum_{i=1}^{N}|t(X_{i})|(\hat{s}^{(l)}(X_{i}) - s(X_{i}))^{2}\right] \\
		& =  \E\left[|t(X_{i})|(\hat{s}^{(l)}(X_{i}) - s(X_{i}))^{2}\right]\\
		& \leq \inf_{x}|t(x)| \E\left[(\hat{s}^{(l)}(X_{i}) - s(X_{i}))^{2}\right]\\
		& = O\left(\frac{L}{Nh^{d}} + h^{2q}\right).
	\end{aligned}
	\]
	Combing this with \eqref{eq: decomp S22} and \eqref{eq: bound S22 term1}, we have
	\begin{equation}\label{eq: bound S22}
		\left|\frac{1}{L}\sum_{l=1}^{L}S_{2,2}^{(l)}\right| \leq O_{P}\left(\frac{L}{Nh^{d}} + h^{2q}\right)
	\end{equation}
	on the event $\mathcal{E}$. In Lemma \ref{lem: sup exp}, we prove that $P(\mathcal{E}) \to 1$ under (C.1)--(C.6). Thus \eqref{eq: bound S22} implies
	\begin{equation}\label{eq: reminder}
		\frac{1}{L}\sum_{l=1}^{L}S_{2,2}^{(l)} = O_{P}\left(\frac{L}{Nh^{d}} + h^{2q}\right). 
	\end{equation}
	Combining \eqref{eq: med} and \eqref{eq: reminder}, we get
	\[\frac{1}{L}\sum_{l=1}^{L}\hat{\mu}_{\mathbb{K}}^{(l)} - \mu = \psi_{N} + O_{P}\left(\frac{L}{Nh^{d}}+ h^{q}\right).\]
	This equality implies \eqref{res: KDI AN} under Condition (C.7).
	The proof of Theorem \ref{thm:kernel} is then completed.
\end{proof}	

\begin{lemma}\label{lem: sup exp}
	Under (C.1)--(C.6), let $a_{N} = \sqrt{L \log N /(N h^{d})}$, then there is some universal constant $\tau > 0$ such that 
	\begin{equation*}
		P\left(\max_{l}\sup_{x}|\hat{s}^{(l)}(x) - s(x)| > \tau(a_{N} + h^{q}) \right) \to 0.
	\end{equation*}
\end{lemma}
\begin{proof}
	As in the main text, we use $C$ to denote a generic positive constant that may be different in different places.
	Recall that $\hat{s}^{(l)}(x) = \frac{1}{n}\sum^{n}_{i=1}K_h(X_{i}-x)\delta_{i}$.
	By some standard calculations in literature of nonparametric kernel regression, we have $\max_{l}\sup_{x}|\E[\hat{s}^{(l)}(x)] - s(x)| \leq \tau h^{q}$ for $\tau$ above a certain threshold. Hence
	\begin{equation}\label{eq: centralize}
		\left\{\max_{l}\sup_{x}|\hat{s}^{(l)}(x) - s(x)| > \tau(a_{N} + h^{q})\right\} \subset
		\left\{\max_{l}\sup_{x}|\hat{s}^{(l)}(x) - \E[\hat{s}^{(l)}(x)]| > \tau a_{N}\right\}.
	\end{equation}
	Under (C.5), we have $|K(x) - K(y)| \leq M\|x-y\|$ for some positive constant $M$. Hence $|\hat{s}^{(l)}(x) - \hat{s}^{(l)}(x)|\leq Mh^{-d-1}\|x - y\|$ and $|\E[\hat{s}^{(l)}(x)] - \E[\hat{s}^{(l)}(x)]|\leq Mh^{-d-1}\|x - y\|$. 
	By (C.3), $X$ has a bounded support $\mathscr{X}$. By standard results on the covering number of the bounded set, there exist $x_{1},\dots, x_{B}$ where $B\leq C\tau^{-d}a_{N}^{-d}h^{-d(d+1)}$ such that $\forall x \in \mathscr{X}$, $\exists b\in\{1,\dots,B\}$, $\|x - x_{b}\|\leq \tau a_{N}h^{d+1}/(3M)$. Hence 
	\[
	\left\{\max_{l}\sup_{x}|\hat{s}^{(l)}(x) - \E[\hat{s}^{(l)}(x)]| > \tau a_{N}\right\}
	\subset
	\left\{\max_{l}\max_{b}|\hat{s}^{(l)}(x_{b}) - \E[\hat{s}^{(l)}(x_{b})]| > \frac{\tau a_{N}}{3}\right\}.
	\]
	Combining this with \eqref{eq: centralize}, we have
	\begin{equation}\label{eq: induced set}
		\left\{\max_{l}\sup_{x}|\hat{s}^{(l)}(x) - s(x)| > \tau(a_{N} + h^{q})\right\} \subset
		\left\{\max_{l}\max_{b}|\hat{s}^{(l)}(x_{b}) - \E[\hat{s}^{(l)}(x_{b})]| > \frac{\tau a_{N}}{3}\right\}
	\end{equation}
	for sufficiently large $N$. By (C.5), we have $\bar{K}= \sup_{x}K(x) < \infty$. Thus $K_h(X_{i}-x)\delta_{i} \leq \bar{K} h^{-d}$. Moreover, $\var[K_h(X_{i}-x)\delta_{i}] \leq \E[(K_h(X_{i}-x)\delta_{i})^{2}] \leq C h^{-d}$. By Bernstein inequality \citep{wainwright2019} for bounded random variables and the union bound, we have
	\begin{equation}\label{eq: prob bound}
		\begin{aligned}
			&P\left(\max_{l}\max_{b}|\hat{s}^{(l)}(x_{b}) - \E[\hat{s}^{(l)}(x_{b})]| > \frac{\tau a_{N}}{3}\right)\\
			&\leq 2L B\exp\left(-\frac{\tau^{2}a_{N}^{2}nh^{d}}{18C}\right)
			\leq  2\exp\left(-\frac{\tau^{2}\log N}{18C}  + \log L + \log B \right).
		\end{aligned}
	\end{equation}
	for sufficiently large $N$ under (C.6).
	It is not hard to verify that the rightmost side of the inequality \eqref{eq: prob bound} converges to zero under (C.6) if we take $\tau$ to be a sufficiently large constant. This completes the proof.
\end{proof}

\section*{Appendix C. Proof of Theorem \ref{thm: sieve}}
\begin{proof}
	We define $\beta^{*} = \Sigma^{-1}\E [\delta V_{K}(X)Y]$ and $\gamma^{*} = \Sigma^{-1}\eta$ where
	$\eta = \E[\delta V_{K}(X)/\pi(X)] = \E[V_{K}(X)]$. Let $m_{\mathbb{S}}^{*}(x) = V_{K}(x)^{\T}\beta^{*}$ and $w_{\mathbb{S}}^{*}(x) = V_{K}(x)^{\T}\gamma^{*}$. Clearly, $\E[\delta(m(X) - m_{\mathbb{S}}^{*}(X))^{2}] = \min_{\beta}\E[\delta(m(X) - V_{K}(X)^{\T}\beta)^2]$ and $\E[\delta(1/\pi(X) - w_{\mathbb{S}}^{*}(X))^{2}] = \min_{\gamma}\E[\delta(1/\pi(X) - V_{K}(X)^{\T}\gamma)^2]$. Then by Condition (C.9)(ii), we have 
	\begin{equation} \label{eq: obs error m}
		\begin{aligned}
			\E[\delta(m(X) - m_{\mathbb{S}}^{*}(X))^{2}]& = \min_{\beta}\E[\delta(m(X) - V_{K}(X)^{\T}\beta)^2]\\
			&\leq \min_{\beta}\E[(m(X) - V_{K}(X)^{\T}\beta)^2]\\
			&\leq \E[(m(X) - V_{K}(X)^{\T}\bar{\beta})^2]\\
			&\leq C K^{-2r},
		\end{aligned}
	\end{equation} and 
	\begin{equation}\label{eq: obs error pi}
		\begin{aligned}
			\E\left[\delta\left(\frac{1}{\pi(X)} - w_{\mathbb{S}}^{*}(X)\right)^{2}\right] &= \min_{\gamma}\E\left[\delta\left(\frac{1}{\pi(X)} - V_{K}(X)^{\T}\gamma\right)^2\right]\\
			&\leq \min_{\gamma}\E\left[\left(\frac{1}{\pi(X)} - V_{K}(X)^{\T}\gamma\right)^2\right]\\
			&\leq \E\left[\left(\frac{1}{\pi(X)} - V_{K}(X)^{\T}\bar{\gamma}\right)^2\right]\\
			&\leq C K^{-2r}.
		\end{aligned}
	\end{equation}
	
	Let $\hat{m}_{\mathbb{S}}(x) = V_{K}(x)^{\T}\hat{\beta}$ and $\tilde{m}_{\mathbb{S}}(x) = V_{K}(x)^{\T}\tilde{\beta}$. By the definition of $\hat{\beta}$ and $v_{1}(x)\equiv 1$, we have $(\sum_{i=1}^{N}\delta_{i}Y_{i} - \sum_{i=1}^{N}\delta_{i}V_{K}(X_i)^{\T}\hat{\beta}) / N = 0$. Thus
	\[
	\hat{\mu}_{\mathbb{S}} = \frac{1}{N}\sum_{i=1}^{N}(\delta_{i}Y_{i} + (1 - \delta_{i})\hat{m}_{\mathbb{S}}(X_i))
	= \frac{1}{N}\sum_{i=1}^{N}\hat{m}_{\mathbb{S}}(X_i)
	\]
	and 
	\[\tilde{\mu}_{\mathbb{S}} - \hat{\mu}_{\mathbb{S}} = \frac{1}{N}\sum_{i=1}^{N}(1 - \delta_{i})V_{K}(X_{i})^{\T}(\hat{\beta} - \tilde{\beta}) \leq \zeta_{K}\|\hat{\beta} - \tilde{\beta}\|.\]
	Thus to prove the asymptotic normality in theorem \ref{thm: sieve}, it suffices to prove 
	\begin{equation}\label{eq: res1}
		\sqrt{N}(\hat{\mu}_{\mathbb{S}}-\mu)\stackrel{d}{\rightarrow}N(0,\var[\psi])
	\end{equation} 
	and 
	\begin{equation}\label{eq: res2}
		\|\hat{\beta} - \tilde{\beta}\| = o_{\P}\left((\zeta_{K}\sqrt{N})^{-1}\right).
	\end{equation}
	The convergence rate result in Theorem \ref{thm: sieve} can be obtained in the proof of the asymptotic normality result.
	
	\subsection*{Proof of \eqref{eq: res1}}
	Let $F(\cdot)$ be the distribution function of $X$. Note that $\mu = \int m(x)dF(x)$, thus
	we have the following decomposition of $\sqrt{N}(\hat{\mu}_{\mathbb{S}}-\mu)$,
	\begin{align*}
		\sqrt{N}(\hat{\mu}_{\mathbb{S}}-\mu) 
		&=\sqrt{N}\left(\frac{1}{N}\sum_{i=1}^{N}(\hat{m}_{\mathbb{S}}(X_{i}) - m_{\mathbb{S}}^{*}(X_{i})) - \int(\hat{m}_{\mathbb{S}}(x) - m_{\mathbb{S}}^{*}(x))dF(x)\right) \\
		&+ \sqrt{N}\left(\frac{1}{N}\sum_{i=1}^{N}(m_{\mathbb{S}}^{*}(X_{i}) - m(X_{i})) - \int(m_{\mathbb{S}}^*(x) - m(x))dF(x)\right) \\
		&+ \sqrt{N}\int (\hat{m}_{\mathbb{S}}(x) - m_{\mathbb{S}}^{*}(x))dF(x) + \sqrt{N}\int (m_{\mathbb{S}}^{*}(x) - m(x))dF(x) \\
		&+ \sqrt{N}\left(\frac{1}{N}\sum_{i=1}^{N}m(X_{i}) - \mu\right)\\
		&\eqqcolon R_1 + R_2 + R_3 + R_4 + R_{5}.
	\end{align*}
	
	First, by Lemma \ref{lem: est error} and Condition (C.11)(i), we have
	\begin{align*}
		|R_{1}| & = \left|\sqrt{N}\left(\frac{1}{N}\sum_{i=1}^{N}V_{K}(X_{i})^{\T} - \eta^{\T}\right)(\hat{\beta} - \beta^{*})\right|  \\
		& \leq \left\|\frac{1}{\sqrt{N}}\sum_{i=1}^{N}V_{K}(X_{i}) - \eta\right\|\|\hat{\beta} - \beta^{*}\| \\
		& = O_{\P}\left(\sqrt{\E[\|V_{K}(X_{i}) - \eta\|^2]}\|\hat{\beta} - \beta^{*}\|\right)\\
		& = O_{\P}(\zeta_{K}\|\hat{\beta} - \beta^{*}\|)\\
		& = O_{\P}\left(\sqrt{\frac{\zeta_{K}^4}{N}}\right)\\
		& = o_{\P}(1).
	\end{align*}
	By Condition (C.2) and \eqref{eq: obs error m}, we have
	\begin{align*}
		E[R_{2}^{2}] &= \var[m_{\mathbb{S}}^{*}(X) - m(X)] \\
		&\leq \E[(m_{\mathbb{S}}^{*}(X) - m(X))^2] \\
		&= \E\left[\frac{\delta}{\pi(X)}(m_{\mathbb{S}}^{*}(X) - m(X))^2\right] \\
		& \leq C\E\left[\delta(m_{\mathbb{S}}^{*}(X) - m(X))^2\right] \\
		&  \leq C^{2}K^{-2r}.
	\end{align*}
	This implies $R_{2} = O_{P}(K^{-r}) = o_{P}(1)$. 
	
	For $R_{3}$, we have
	\begin{align*}
		R_{3} = &\sqrt{N}\int (\hat{m}_{\mathbb{S}}(x) - m_{\mathbb{S}}^{*}(x))dF(x)\\
		& = \sqrt{N}\eta^{\T}(\hat{\beta} - \beta^{*}) \\
		& = \frac{1}{\sqrt{N}}\sum_{i=1}^{N}\eta^{\T}\hat{\Sigma}^{-1} \delta_{i}V_{K}(X_{i})(Y_{i} - m_{\mathbb{S}}^{*}(X_{i})).
	\end{align*}
	Note that
	\begin{equation}\label{eq: decomposition coef}
		\begin{aligned}
			\|\hat{\Sigma}^{-1}\eta - \Sigma^{-1}\eta\|
			& =  \|\hat{\Sigma}^{-1}(\eta - \hat{\Sigma}\Sigma^{-1}\eta)\| \\
			& \leq \|\hat{\Sigma}^{-1}\|\|\eta - \hat{\Sigma}\Sigma^{-1}\eta\| 
		\end{aligned}
	\end{equation}
	and
	\begin{equation}\label{eq: linear coef}
		\begin{aligned}
			\eta - \hat{\Sigma}\Sigma^{-1}\eta & = \frac{1}{N}\sum_{i=1}^{N}(\eta - \delta_{i}V_{K}(X_{i})V_{K}(X_{i})^{\T}\Sigma^{-1}\eta).
		\end{aligned}
	\end{equation}
	Straightforward calculation can show
	\begin{equation}\label{eq: coef expect}
		\E\left[\eta - \delta_{i}V_{K}(X_{i})V_{K}(X_{i})^{\T}\Sigma^{-1}\eta\right] = 0
	\end{equation}
	and 
	\[
	\begin{aligned}
		\E\left[\left\|\eta - \delta_{i}V_{K}(X_{i})V_{K}(X_{i})^{\T}\Sigma^{-1}\eta\right\|^{2}\right] &\leq \E\left[\left\|
		\delta_{i}V_{K}(X_{i})V_{K}(X_{i})^{\T}\Sigma^{-1}\eta\right\|^{2}\right] \\
		& = \E\left[\delta \|V_{K}(X)\|^{2}\eta^{\T}\Sigma^{-1}V_{K}(X)V_{K}(X)^{\T}\Sigma^{-1}\eta\right] \\
		& \leq \zeta_{K}^{2}\eta^{\T}\Sigma^{-1}\eta \\
		& = \zeta_{K}^{2}\gamma^{*\T}\Sigma\gamma^{*} \\
		& = \zeta_{K}^{2}E[(w_{\mathbb{S}}^{*}(X))^{2}] \\
		& \leq \zeta_{K}^{2}E[(\pi(X))^{-2}].
	\end{aligned}
	\]
	Then Condition (C.2) implies $E[(\pi(X))^{-2}]$ is bounded and hence 
	\[
	\E\left[\left\|\eta - \delta_{i}V_{K}(X_{i})V_{K}(X_{i})^{\T}\Sigma^{-1}\eta\right\|^{2}\right] = O(\zeta_{K}^{2}).
	\]
	Combining this with \eqref{eq: coef expect} and \eqref{eq: linear coef}, we have
	\begin{equation}\label{eq: bound coef}
		E[\|\eta - \hat{\Sigma}\Sigma^{-1}\eta\|^{2}] = O\left(\frac{\zeta_{K}^{2}}{N}\right)
	\end{equation}
	Lemma \ref{lem: est error} and Conditions (C.8), (C.10) implies $\hat{\Sigma}^{-1} = O_{P}(1)$. This together with \eqref{eq: decomposition coef} and \eqref{eq: bound coef} implies $ \|\hat{\Sigma}^{-1}\eta - \Sigma^{-1}\eta\| = O_{P}\left(\sqrt{\zeta_{K}^{2}/N}\right)$.
	Then because
	\[\left\|\frac{1}{\sqrt{N}}\sum_{i=1}^{N}\delta_{i}V_{K}(X_{i})(Y_{i} - m_{\mathbb{S}}^{*}(X_{i}))\right\|
	= O_{\P}\left(\sqrt{\E[\delta\|V_{K}(X)\|^{2}(Y - m_{\mathbb{S}}^{*}(X))^{2}]}\right) = O_{\P}(\zeta_{K}),\]
	we have
	\begin{align*}
		&\left|R_{3} -  \frac{1}{\sqrt{N}}\sum_{i=1}^{N}\eta^{\T}\Sigma^{-1} \delta_{i}V_{K}(X_{i})(Y_{i} - m_{\mathbb{S}}^{*}(X_{i}))\right|\\
		&\leq \|\hat{\Sigma}^{-1}\eta - \Sigma^{-1}\eta\|
		\left\|\frac{1}{\sqrt{N}}\sum_{i=1}^{N}\delta_{i}V_{K}(X_{i})(Y_{i} - m_{\mathbb{S}}^{*}(X_{i}))\right\| \\
		& = O_{\P}\left(\sqrt{\frac{\zeta_{K}^4}{N}}\right)\\
		& = o_{\P}(1)
	\end{align*}
	according to Condition (C.11)(i).
	Moreover, by Conditions (C.4), (C.9)(ii) and (C.11)(i), we have 
	\[
	\zeta_{K}^{2}K^{-2r} \leq \frac{1}{2}\left(\frac{\zeta_{K}^{4}}{N} + NK^{-4r} \right) \to 0,
	\]
	and hence
	\begin{align*}
		&\E\left[\left(\frac{1}{\sqrt{N}}\sum_{i=1}^{N}\eta^{\T}\Sigma^{-1} \delta_{i}V_{K}(X_{i})(Y_{i} - m_{\mathbb{S}}^{*}(X_{i})) - \frac{1}{\sqrt{N}}\sum_{i=1}^{N}\frac{\delta_i}{\pi(X_{i})}(Y_{i} - m(X_{i}))\right)^{2}\right]\\
		& = \E\left[\left(\delta w_{\mathbb{S}}^{*}(X)(Y - m_{\mathbb{S}}^{*}(X)) - \frac{\delta}{\pi(X)}(Y - m(X))\right)^{2}\right]\\
		& = \E\left[\left\{\left(\delta w_{\mathbb{S}}^{*}(X) - \frac{\delta}{\pi(X)}\right)(Y - m(X)) + \delta w_{\mathbb{S}}^{*}(X)(m(X) - m_{\mathbb{S}}^{*}(X))\right\}^{2}\right]\\
		&\leq 2 \E\left[\left(\delta w_{\mathbb{S}}^{*}(X) - \frac{\delta}{\pi(X)}\right)^{2}(Y - m(X))^{2}\right] + 2\E\left[\left\{\delta w_{\mathbb{S}}^{*}(X)(m(X) - m_{\mathbb{S}}^{*}(X))\right\}^{2}\right] \\
		&\leq C\left\{ \E\left[\left(\delta w_{\mathbb{S}}^{*}(X) - \frac{\delta}{\pi(X)}\right)^{2}\right] + \zeta_{K}^{2}\E\left[(m(X) - m_{\mathbb{S}}^{*}(X))^{2}\right]\right\} \\
		&\leq C K^{-2r} + C\zeta_{K}^{2}K^{-2r} \to 0.
	\end{align*}
	Thus
	\begin{align*}
		&\frac{1}{\sqrt{N}}\sum_{i=1}^{N}\eta^{\T}\Sigma^{-1} \delta_{i}V_{K}(X_{i})(Y_{i} - m_{\mathbb{S}}^{*}(X_{i})) - \frac{1}{\sqrt{N}}\sum_{i=1}^{N}\frac{\delta_i}{\pi(X_{i})}(Y_{i} - m(X_{i})) \\
		&= O_{\P}\left(\sqrt{\E\left[\left(\frac{1}{\sqrt{N}}\sum_{i=1}^{N}\eta^{\T}\Sigma^{-1} \delta_{i}V_{K}(X_{i})(Y_{i} - m_{\mathbb{S}}^{*}(X_{i})) - \frac{1}{\sqrt{N}}\sum_{i=1}^{N}\frac{\delta_i}{\pi(X_{i})}(Y_{i} - m(X_{i}))\right)^2\right]}\right)\\
		&= O_{P}(\zeta_{K}K^{-r}) \\
		& = o_{\P}(1)
	\end{align*}
	according to Condition (C.11)(i).
	
	For the term $R_{4}$, we have
	\begin{align*}
		R_{4}& = \sqrt{N}\int (m_{\mathbb{S}}^{*}(x) - m(x))dF(x)\\
		& = \sqrt{N}\E \left[\frac{\delta}{\pi(X)}(m_{\mathbb{S}}^{*}(X) - m(X))\right] \\
		& = \sqrt{N}\E\left[\delta\left(\frac{1}{\pi(X)} - w_{\mathbb{S}}^{*}(X)\right)(m_{\mathbb{S}}^{*}(X) - m(X))\right] + \sqrt{N}\E[\delta w_{\mathbb{S}}^{*}(X)(m_{\mathbb{S}}{*}(X) - m(X))].
	\end{align*}
	By the definition of $\beta^{*}$ and $w_{\mathbb{S}}^{*}(x)$, we have
	\[\E[\delta w_{\mathbb{S}}^{*}(X)(m_{\mathbb{S}}{*}(X) - m(X))] = \E[\delta \gamma^{*\T}V_{K}(X)(V_{K}(X)^{\T}\beta^{*} - Y)] = 0.\]
	Thus by Conditions (C.9)(ii) and (C.11)(i), we have
	\begin{align*}
		|R_{4}| & = \left|\sqrt{N}\E\left[\delta\left(\frac{1}{\pi(X)} - w_{\mathbb{S}}^{*}(X)\right)(m_{\mathbb{S}}^{*}(X) - m(X))\right]\right| \\ 
		& \leq \sqrt{N}\sqrt{\E\left[\delta\left(\frac{1}{\pi(X)} - w_{\mathbb{S}}^{*}(X)\right)^{2}\right]}\sqrt{\E[\delta(m_{\mathbb{S}}^{*}(X) - m(X))^{2}]} \\
		& \leq \sqrt{N}K^{-2r} \to 0.
	\end{align*} 
	So far, we have proved 
	\[
	\begin{aligned}
		R_{1} + R_{2} + R_{3} + R_{4} &= \frac{1}{\sqrt{N}}\sum_{i=1}^{N}\left(\frac{\delta_i}{\pi(X_i)}(Y_i-m(X_i)) + m(X_i) - \mu\right) \\
		&\quad + O_{P}\left(\sqrt{\frac{\zeta_{K}^4}{N}} + \zeta_{K}K^{-r}+\sqrt{N}K^{-2r}\right) \\
		& = \sqrt{N}\psi_{N} + O_{P}\left(\sqrt{\frac{\zeta_{K}^4}{N}}+\sqrt{N}K^{-2r}\right),
	\end{aligned}
	\]
	where the second equality is due to $2\zeta_{K}K^{-r} \leq \sqrt{\zeta_{K}^4/N}+\sqrt{N}K^{-2r}$.
	Then under Condition (C.11)(i), \eqref{eq: res1} follows from the central limit theorem.
	\subsection*{Proof of \eqref{eq: res2}}
	For $t = 1,\dots, T$,
	\begin{align*}
		\beta_{t} 
		& = \beta_{t-1} + (\tilde{\Sigma} + \alpha I)^{-1}(\hat{\Gamma} - \hat{\Sigma}\beta_{t-1})\\
		& = \beta_{t-1} + (\tilde{\Sigma} + \alpha I)^{-1}\hat{\Sigma}(\hat{\beta} - \beta_{t-1}).
	\end{align*}
	Then we have
	\begin{align*}
		\beta_{t} - \hat{\beta} 
		& = (I - (\tilde{\Sigma} + \alpha I)^{-1}\hat{\Sigma})(\beta_{t-1} - \hat{\beta}) \\
		& = (I - (\tilde{\Sigma} - \hat{\Sigma} + \hat{\Sigma} + \alpha I)^{-1}\hat{\Sigma})(\beta_{t-1} - \hat{\beta}) \\
		& = (I - ((\hat{\Sigma} + \alpha I)^{-1}(\tilde{\Sigma} - \hat{\Sigma}) + I)^{-1} (I + \alpha\hat{\Sigma}^{-1})^{-1})(\beta_{t-1} - \hat{\beta}).
	\end{align*}
	
	Let $A = (\hat{\Sigma} + \alpha I)^{-1}(\tilde{\Sigma} - \hat{\Sigma})$ and $H = 
	((\hat{\Sigma} + \alpha I)^{-1}(\tilde{\Sigma} - \hat{\Sigma}) + I)^{-1} (I + \alpha\hat{\Sigma}^{-1})^{-1}$, then 
	\[
	\|\beta_{t} - \hat{\beta}\| \leq \max\{1 - \lambda_{\rm min}(H), \lambda_{\rm max}(H) - 1\}\|\beta_{t-1} - \hat{\beta}\|.
	\]
	
	If $\|A\| < 1/2$, by the relationships
	\[(I + A)^{-1} = I - A + (I + A)^{-1}A^2\]
	and
	\[\|(I + A)^{-1}\| \leq (1 - \|A\|)^{-1},\]
	we have
	\[\|(I+A)^{-1} - I\| \leq \|A\| + \|A\|^{2}(1 - \|A\|)^{-1} \eqqcolon h(\|A\|) < 1,\]
	\[(1 - h(\|A\|)) I \preceq (I + A)^{-1} \preceq (1 + h(\|A\|))I,\]
	and
	\[(1 - h(\|A\|)) (I + \alpha \hat{\Sigma}^{-1})^{-1} \preceq H \preceq (1 + h(\|A\|))(I + \alpha \hat{\Sigma}^{-1})^{-1}.\]
	Here for two symmetric matrices $A$ and $B$, $A \preceq B$ means $B - A$ is positive semi-definite. Then
	\[\lambda_{\rm min}(H) \geq \frac{1 - h(\|A\|)}{1 + \alpha \lambda_{\rm min}^{-1}(\hat{\Sigma})}, \qquad \lambda_{\rm max}(H) \leq \frac{1 + h(\|A\|)}{1 + \alpha\lambda_{\rm max}^{-1}(\hat{\Sigma})}.\]
	Thus
	\[\max\{1 - \lambda_{\rm min}(H), \lambda_{\rm max}(H) - 1\} \leq \max\left\{\frac{\alpha\lambda_{\rm min}^{-1}(\hat{\Sigma}) + h(\|A\|)}{1 + \alpha \lambda_{\rm min}^{-1}(\hat{\Sigma})}, \frac{h(\|A\|) - \alpha\lambda_{\rm max}^{-1}(\hat{\Sigma})}{1 + \alpha \lambda_{\rm max}^{-1}(\hat{\Sigma})}\right\}.\]
	By Lemma \ref{lem: est error}, we have
	\[\|\tilde{\Sigma} - \Sigma\| = O_{\P}\left(\sqrt{\frac{\zeta_{K}^{2}\log K}{n}}\right)\]
	and
	\[\|\hat{\Sigma} - \Sigma\| = O_{\P}\left(\sqrt{\frac{\zeta_{K}^{2}\log K}{N}}\right).\]
	Because $\alpha \asymp \log^2 K \sqrt{\frac{\zeta_{K}^{2}}{n}}$, we have
	\begin{align*}
		\|A\| & \leq \frac{\|\tilde{\Sigma} - \hat{\Sigma}\|}{\alpha + \lambda_{\rm min}(\hat{\Sigma})}\\
		& \leq \frac{\|\tilde{\Sigma} - \Sigma\| + \|\hat{\Sigma} - \Sigma\|}{\alpha + \lambda_{\rm min}(\Sigma) - \|\hat{\Sigma} - \Sigma\|}\\
		& \leq \frac{\|\tilde{\Sigma} - \Sigma\| + \|\hat{\Sigma} - \Sigma\|}{\alpha + C_{\rm L} - \|\hat{\Sigma} - \Sigma\|} \\
		& = o_{\P}\left(\min\left\{\alpha, \frac{1}{\log K}\right\}\right)
	\end{align*}
	and $h(\|A\|) = o_{\P}(\min\{\alpha, 1/\log K\})$.
	Under Condition (C.10), by Lemma \ref{lem: est error}, $\lambda_{\rm min}(\hat{\Sigma}) \geq 1/2 \lambda_{\rm min}(\Sigma)$ and $\lambda_{\rm max}(\hat{\Sigma}) \leq 2 \lambda_{\rm max}(\Sigma)$ with probability tending to 1. Thus the inequalities
	\[\lambda_{\rm max}(H) - 1 \leq 0\]
	and
	\begin{equation}\label{eq: contraction}
		\max\{1 - \lambda_{\rm min}(H), \lambda_{\rm max}(H) - 1\} \leq 1 - \frac{1 - \min\left\{\alpha, \frac{1}{\log K}\right\}}{1 + 2\alpha\lambda_{\rm min}(\Sigma)} \leq C_{N,K}
	\end{equation}
	hold with probability tending to 1.
	By Condition (C.8) and the definition of $\beta^{*}$, we have
	\begin{equation}\label{eq: bound beta*}
		\begin{aligned}
			\|\beta^{*}\|^{2}
			&\leq C_{\rm L}^{-1}\beta^{*\T}\Sigma\beta^{*} \\
			& = C_{\rm L}^{-1}\E[\delta\beta^{*\T} V_{K}(X)V_{K}(X)^{\T}\beta^{*}]\\
			& = C_{\rm L}^{-1}\E[\delta(m_{\mathbb{S}}^{*}(X))^2]\\
			& \leq C_{\rm L}^{-1}\E[\delta(m(X))^2] \\
			& \leq  C_{\rm L}^{-1}\E[(m(X))^2].
		\end{aligned}
	\end{equation}
	Hence  Condition (C.9)(i) implies $\|\beta^{*}\|$ is bounded and we have $\|\hat{\beta}\| = O_{\P}(1)$ according to Condition (C.10) and Lemma \ref{lem: est error}. Combining this with \eqref{eq: contraction}, we have $\|\tilde{\beta} - \hat{\beta}\| = O_{\P}(C_{N,K}^{T})$.
	So far, the result \eqref{res: sieve rate} in Theorem \ref{thm: sieve} has been established. Under Condition (C.11)(ii), we have $0.5\log N + \log \zeta_{K} -T\log C_{N,K}\to -\infty$ and hence $\|\tilde{\beta} - \hat{\beta}\| = o_{\P}\left((\zeta_{K}\sqrt{N})^{-1}\right)$. This proves \eqref{eq: res2}, which completes the proof.
\end{proof}

\begin{lemma}\label{lem: est error}
	Under Conditions (C.8) and (C.10), we have
	\[
	\|\hat{\Sigma} - \Sigma\| = O_{\P}\left(\sqrt{\frac{\zeta_{K}^{2}\log K}{N}}\right)
	\]
	and
	\[
	\|\hat{\beta} - \beta^{*}\| = O_{\P}\left(\sqrt{\frac{\zeta_{K}^2}{N}}\right).
	\]
\end{lemma}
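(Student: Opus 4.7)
The plan is to handle the two assertions separately, dealing with the matrix concentration bound first and then obtaining the parameter bound by a normal-equations argument together with Part~1.

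\textbf{Part 1: bound on $\|\hat\Sigma-\Sigma\|$.} Write $\hat\Sigma-\Sigma = N^{-1}\sum_{i=1}^N Z_i$ where $Z_i = \delta_i V_K(X_i)V_K(X_i)^\T - \Sigma$ are i.i.d.\ mean-zero symmetric $K\times K$ matrices. The key ingredients for matrix Bernstein are the uniform spectral bound and the matrix variance. From the definition of $\zeta_K$, $\|\delta_i V_K(X_i)V_K(X_i)^\T\|\le \zeta_K^2$ a.s., and by (C.7), $\|\Sigma\|\le C_{\rm H}$, so $\|Z_i\|\le \zeta_K^2+C_{\rm H}\le 2\zeta_K^2$ for large $K$. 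For the variance,
\[
\E[Z_i^2]\preceq \E[\delta_i\|V_K(X_i)\|^2 V_K(X_i)V_K(X_i)^\T]\preceq \zeta_K^2\,\Sigma,
\]
so $\|\E[Z_i^2]\|\le \zeta_K^2 C_{\rm H}$. Applying the matrix Bernstein inequality to $N^{-1}\sum_i Z_i$ yields, for any $t>0$,
\[
\P\bigl(\|\hat\Sigma-\Sigma\|\ge t\bigr)\le 2K\exp\!\left(-\frac{N t^2/2}{\zeta_K^2 C_{\rm H}+2\zeta_K^2 t/3}\right),
\]
and choosing $t\asymp \sqrt{\zeta_K^2\log K/N}$ (which is $o(1)$ by (C.9)) gives $\|\hat\Sigma-\Sigma\|=O_\P\bigl(\sqrt{\zeta_K^2\log K/N}\bigr)$, as claimed.

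\textbf{Part 2: bound on $\|\hat\beta-\beta^*\|$.} The identity $\hat\beta-\beta^* = \hat\Sigma^{-1}(\hat\omega-\hat\Sigma\beta^*)$ holds on the event that $\hat\Sigma$ is invertible, which, by Part~1 and (C.7) together with (C.9), occurs with probability tending to one, and on this event $\|\hat\Sigma^{-1}\|\le 2/C_{\rm L}$. It therefore remains to control $\hat\omega-\hat\Sigma\beta^* = N^{-1}\sum_{i=1}^N \delta_i V_K(X_i)(Y_i-\ms^*(X_i))$. The defining equation $\Sigma\beta^*=\E[\delta V_K(X)Y]$ gives the orthogonality relation $\E[\delta V_K(X)(Y-\ms^*(X))]=0$, so the summands are i.i.d.\ mean-zero vectors. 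Computing the second moment,
\[
\E\|\hat\omega-\hat\Sigma\beta^*\|^2 = \frac{1}{N}\E\bigl[\delta\|V_K(X)\|^2(Y-\ms^*(X))^2\bigr]\le \frac{\zeta_K^2}{N}\,\E[\delta(Y-\ms^*(X))^2].
\]
Since $\beta^*$ minimises $\beta\mapsto \E[\delta(Y-V_K(X)^\T\beta)^2]$, one has $\E[\delta(Y-\ms^*(X))^2]\le \E[\delta(Y-m(X))^2]+\E[\delta(m(X)-\ms^*(X))^2]\le \sup_x\var(Y\mid X=x)+CK^{-r}=O(1)$ by (C.4) and (C.8). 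Hence $\|\hat\omega-\hat\Sigma\beta^*\|=O_\P(\zeta_K/\sqrt{N})$ by Markov, and combining with the bound $\|\hat\Sigma^{-1}\|=O_\P(1)$ produces $\|\hat\beta-\beta^*\|=O_\P(\sqrt{\zeta_K^2/N})$.

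The main obstacle is Part~1: invoking matrix Bernstein is the nontrivial step because a naive Hoeffding-style bound on $\|\hat\Sigma-\Sigma\|$ via entrywise control would lose a factor of $K$ rather than $\log K$. Once Part~1 is in place, Part~2 reduces to elementary second-moment calculations combined with the orthogonality $\E[\delta V_K(X)(Y-\ms^*(X))]=0$ and the uniform invertibility of $\hat\Sigma$.
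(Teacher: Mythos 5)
Your proof is correct and follows essentially the same route as the paper: the paper gets the bound on $\|\hat{\Sigma}-\Sigma\|$ by citing Lemma 6.2 of \cite{belloni2015some}, which is exactly the kind of matrix-concentration statement you re-derive via matrix Bernstein, and then bounds $\|\hat{\beta}-\beta^{*}\|$ through the identical identity $\hat{\beta}-\beta^{*}=\hat{\Sigma}^{-1}N^{-1}\sum_{i}\delta_{i}V_{K}(X_{i})(Y_{i}-\ms^{*}(X_{i}))$ combined with a second-moment calculation. The only difference is that you make explicit the orthogonality $\E[\delta V_{K}(X)(Y-\ms^{*}(X))]=0$ and the $O(1)$ bound on $\E[\delta(Y-\ms^{*}(X))^{2}]$ via (C.4) and (C.8), which the paper leaves implicit.
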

\begin{proof}
	Under Conditions (C.8) and (C.10), we have $E[\|\hat{\Sigma} - \Sigma\|] = O\left(\sqrt{\zeta_{K}^{2}\log K/N}\right)$ according to Lemma 6.2 in \cite{belloni2015some}. This implies
	\[
	\|\hat{\Sigma} - \Sigma\| = O_{\P}\left(\sqrt{\frac{\zeta_{K}^{2}\log K}{N}}\right).
	\]
	The second statement of the lemma follows from
	\begin{align*}
		\|\hat{\beta} - \beta^{*}\| &= \left\|\hat{\Sigma}^{-1}\frac{1}{N}\sum_{i=1}^{N}\delta_{i}V_{K}(X_{i})(Y_{i} - m_{\mathbb{S}}^{*}(X_{i}))\right\|\\
		& \leq \|\hat{\Sigma}^{-1}\|O_{\P}\left(\sqrt{\frac{1}{N}\E[\|V_{K}(X_{i})\|^{2}(Y - m_{\mathbb{S}}(X))^{2}]}\right) \\
		& = O_{\P}\left(\sqrt{\frac{\zeta_{K}^2}{N}}\right).
	\end{align*}
\end{proof}

\section*{Appendix C. Additional Simulation Results}
\subsection*{Ablation Study of the DWCV Algorithm}
In this section, we conducted an ablative study to evaluate the effect of weights in the DWCV algorithm. For simplicity, we use ``DCV" to indicate that the weight in the DWCV algorithm is replaced by the uniform weight.
We adopt the simulation settings in Section \ref{subsec: sim tuning}, and compare the RMSE and computing time of the proposed estimators with the constant selected by the DCV and DWCV algorithms.
The simulation results are presented in Table \ref{table: ablation}.

\centerline{[Insert Table \ref{table: ablation} about here.]}

Table \ref{table: ablation} shows that the additional time due to the calculation of the weights is almost negligible for all estimators in all settings. The accuracy of the estimators based on the DWCV algorithm is never worse than that based on the DCV algorithm. The DWCV algorithm can improve the accuracy of the KDI estimators compared to the DCV algorithm. The RMSE of the KDI estimator based on the DCV algorithm is more than $3.5$ times larger than that of the KDI estimator based on the DWCV algorithm when $d=5$ in the linear setting or $d=15$ in the nonlinear setting.



%
%


\vskip 0.2in
\bibliography{dc}

\begin{thebibliography}{36}
\providecommand{\natexlab}[1]{#1}
\providecommand{\url}[1]{\texttt{#1}}
\expandafter\ifx\csname urlstyle\endcsname\relax
  \providecommand{\doi}[1]{doi: #1}\else
  \providecommand{\doi}{doi: \begingroup \urlstyle{rm}\Url}\fi

\bibitem[Ai and Chen(2003)]{ai2003efficient}
Chunrong Ai and Xiaohong Chen.
\newblock Efficient estimation of models with conditional moment restrictions
  containing unknown functions.
\newblock \emph{Econometrica}, 71\penalty0 (6):\penalty0 1795--1843, 2003.

\bibitem[Battey et~al.(2018)Battey, Fan, Liu, Lu, and
  Zhu]{battey2018distributed}
Heather Battey, Jianqing Fan, Han Liu, Junwei Lu, and Ziwei Zhu.
\newblock Distributed testing and estimation under sparse high dimensional
  models.
\newblock \emph{The Annals of Statistics}, 46\penalty0 (3):\penalty0 1352,
  2018.

\bibitem[Belloni et~al.(2015)Belloni, Chernozhukov, Chetverikov, and
  Kato]{belloni2015some}
Alexandre Belloni, Victor Chernozhukov, Denis Chetverikov, and Kengo Kato.
\newblock Some new asymptotic theory for least squares series: Pointwise and
  uniform results.
\newblock \emph{Journal of Econometrics}, 186\penalty0 (2):\penalty0 345--366,
  2015.

\bibitem[Berlinet(1993)]{Berlinet1993kernel}
Alain Berlinet.
\newblock Hierarchies of higher order kernels.
\newblock \emph{Probability Theory and Related Fields}, 94:\penalty0 489--504,
  1993.

\bibitem[Bickel(1982)]{bickel1982adaptive}
Peter~J Bickel.
\newblock On adaptive estimation.
\newblock \emph{The Annals of Statistics}, pages 647--671, 1982.

\bibitem[Chan et~al.(2016)Chan, Yam, and Zhang]{chan2016globally}
Kwun Chuen~Gary Chan, Sheung Chi~Phillip Yam, and Zheng Zhang.
\newblock Globally efficient non-parametric inference of average treatment
  effects by empirical balancing calibration weighting.
\newblock \emph{Journal of the Royal Statistical Society: Series B (Statistical
  Methodology)}, 78\penalty0 (3):\penalty0 673--700, 2016.

\bibitem[Chen et~al.(2018)Chen, Gao, and Ren]{Chen2018Robust}
Mengjie Chen, Chao Gao, and Zhao Ren.
\newblock {Robust covariance and scatter matrix estimation under Huber’s
  contamination model}.
\newblock \emph{The Annals of Statistics}, 46\penalty0 (5):\penalty0 1932 --
  1960, 2018.

\bibitem[Chen and Haziza(2017)]{chen2017finite}
Sixia Chen and David Haziza.
\newblock Multiply robust imputation procedures for the treatment of item
  nonresponse in surveys.
\newblock \emph{Biometrika}, 104\penalty0 (2):\penalty0 439--453, 2017.

\bibitem[Chen(2007)]{chen2007large}
Xiaohong Chen.
\newblock Large sample sieve estimation of semi-nonparametric models.
\newblock \emph{Handbook of Econometrics}, 6:\penalty0 5549--5632, 2007.

\bibitem[Chen and Pouzo(2012)]{chen2012estimation}
Xiaohong Chen and Demian Pouzo.
\newblock Estimation of nonparametric conditional moment models with possibly
  nonsmooth generalized residuals.
\newblock \emph{Econometrica}, 80\penalty0 (1):\penalty0 277--321, 2012.

\bibitem[Cheng(1994)]{Cheng1994nonparametric}
Philip~E. Cheng.
\newblock Nonparametric estimation of mean functionals with data missing at
  random.
\newblock \emph{Journal of the American Statistical Association}, 89\penalty0
  (425):\penalty0 81--87, 1994.

\bibitem[Fan et~al.(2014)Fan, Han, and Liu]{fan2014challenges}
Jianqing Fan, Fang Han, and Han Liu.
\newblock Challenges of big data analysis.
\newblock \emph{National science review}, 1\penalty0 (2):\penalty0 293--314,
  2014.

\bibitem[Fan et~al.(2021)Fan, Guo, and Wang]{fan2021communication}
Jianqing Fan, Yongyi Guo, and Kaizheng Wang.
\newblock Communication-efficient accurate statistical estimation.
\newblock \emph{Journal of the American Statistical Association}, pages 1--11,
  2021.

\bibitem[Hahn(1998)]{hahn1998role}
Jinyong Hahn.
\newblock On the role of the propensity score in efficient semiparametric
  estimation of average treatment effects.
\newblock \emph{Econometrica}, 66\penalty0 (2):\penalty0 315--331, 1998.

\bibitem[Hansen(2008)]{hansen2008uniform}
B.~E. Hansen.
\newblock Uniform convergence rates for kernel estimation with dependent data.
\newblock \emph{Econometric Theory}, 24:\penalty0 726--748, 2008.

\bibitem[Hirano et~al.(2003)Hirano, Imbens, and Ridder]{Hirano2003efficient}
Keisuke Hirano, Guido~W. Imbens, and Geert Ridder.
\newblock Efficient estimation of average treatment effects using the estimated
  propensity score.
\newblock \emph{Econometrica}, 71\penalty0 (4):\penalty0 1161--1189, 2003.

\bibitem[Hu et~al.(2012)Hu, Follmann, and Qin]{Hu2012core}
Zonghui Hu, Dean~A. Follmann, and Jing Qin.
\newblock Semiparametric double balancing score estimation for incomplete data
  with ignorable missingness.
\newblock \emph{Journal of the American Statistical Association}, 107\penalty0
  (497):\penalty0 247--257, 2012.

\bibitem[Imbens et~al.(2005)Imbens, Newey, and Ridder]{imbens2005mean}
Guido~W Imbens, Whitney~K Newey, and Geert Ridder.
\newblock Mean-square-error calculations for average treatment effects.
\newblock \emph{IEPR Working paper}, 2005.

\bibitem[Jordan et~al.(2019)Jordan, Lee, and Yang]{jordan2019communication}
Michael~I Jordan, Jason~D Lee, and Yun Yang.
\newblock Communication-efficient distributed statistical inference.
\newblock \emph{Journal of the American Statistical Association}, 114\penalty0
  (526):\penalty0 668--681, 2019.

\bibitem[Khan and Tamer(2010)]{Khan2010irregular}
Shakeeb Khan and Elie Tamer.
\newblock Irregular identification, support conditions, and inverse weight
  estimation.
\newblock \emph{Econometrica}, 78\penalty0 (6):\penalty0 2021--2042, 2010.

\bibitem[Lee et~al.(2017)Lee, Liu, Sun, and Taylor]{Lee2017dist}
Jason~D. Lee, Qiang Liu, Yuekai Sun, and Jonathan~E. Taylor.
\newblock Communication-efficient sparse regression.
\newblock \emph{The Journal of Machine Learning Research}, 18\penalty0
  (5):\penalty0 1--30, 2017.

\bibitem[Li et~al.(2011)Li, Zhu, and Zhu]{Li2011dim}
Lexin Li, Liping Zhu, and Lixing Zhu.
\newblock Inference on the primary parameter of interest with the aid of
  dimension reduction estimation.
\newblock \emph{Journal of the Royal Statistical Society: Series B (Statistical
  Methodology)}, 73\penalty0 (1):\penalty0 59--80, 2011.

\bibitem[Li et~al.(2017)Li, Wang, Zhu, and Ding]{Li2017mean}
Yongjin Li, Qihua Wang, Liping Zhu, and Xiaobo Ding.
\newblock Mean response estimation with missing response in the presence of
  high-dimensional covariates.
\newblock \emph{Communications in Statistics - Theory and Methods}, 46\penalty0
  (2):\penalty0 628--643, 2017.

\bibitem[Lorentz(1986)]{lorentz1986approximation}
GG~Lorentz.
\newblock \emph{Approximation of Functions}.
\newblock American Mathematical Soc., 1986.

\bibitem[Ma et~al.(2019)Ma, Zhu, Zhang, Tsai, and Carroll]{Ma2019efficient}
Shujie Ma, Liping Zhu, Zhiwei Zhang, Chih-Ling Tsai, and Raymond~J. Carroll.
\newblock A robust and efficient approach to causal inference based on sparse
  sufficient dimension reduction.
\newblock \emph{The Annals of Statistics}, 47\penalty0 (3):\penalty0
  1505--1535, 2019.

\bibitem[Mcdonald et~al.(2009)Mcdonald, Mehryar, Nathan, Walker, and
  Mann]{Mcdonald2009dist}
Ryan Mcdonald, Mohri Mehryar, Silberman Nathan, Dan Walker, and Gideon~S. Mann.
\newblock Efficient large-scale distributed training of conditional maximum
  entropy models.
\newblock \emph{Advances in Neural Information Processing Systems 22}, pages
  1231--1239, 2009.

\bibitem[Newey(1994)]{newey1994asymptotic}
Whitney~K Newey.
\newblock The asymptotic variance of semiparametric estimators.
\newblock \emph{Econometrica: Journal of the Econometric Society}, pages
  1349--1382, 1994.

\bibitem[{R Core Team}(2016)]{Rcore2016}
{R Core Team}.
\newblock A language and environment for statistical computing, 2016.
\newblock R Foundation for Statistical Computing, Vienna, Austria. URL
  \texttt{https://www.R-project.org}.

\bibitem[Shamir et~al.(2014)Shamir, Srebro, and Zhang]{shamir2014communication}
Ohad Shamir, Nati Srebro, and Tong Zhang.
\newblock Communication-efficient distributed optimization using an approximate
  newton-type method.
\newblock In \emph{Proceedings of the 31st International Conference on Machine
  Learning}, pages 1000--1008, 2014.

\bibitem[Shao(2003)]{Shao2003stat}
Jun Shao.
\newblock Mathematical statistics.
\newblock 2003.

\bibitem[Tang et~al.(2020)Tang, Zhou, and Song]{tang2020distributed}
Lu~Tang, Ling Zhou, and Peter X-K Song.
\newblock Distributed simultaneous inference in generalized linear models via
  confidence distribution.
\newblock \emph{Journal of Multivariate Analysis}, 176, 2020.

\bibitem[Volgushev et~al.(2019)Volgushev, Chao, and
  Cheng]{Volgushev2019quantile}
Stanislav Volgushev, Shih-Kang Chao, and Guang Cheng.
\newblock {Distributed inference for quantile regression processes}.
\newblock \emph{The Annals of Statistics}, 47\penalty0 (3):\penalty0 1634 --
  1662, 2019.

\bibitem[Wainwright(2019)]{wainwright2019}
Martin~J. Wainwright.
\newblock \emph{High-dimensional statistics: A non-asymptotic viewpoint}.
\newblock Cambridge Series in Statistical and Probabilistic Mathematics.
  Cambridge University Press, 2019.

\bibitem[Wang et~al.(2016)Wang, Chen, Schifano, Wu, and
  Yan]{wang2016statistical}
Chun Wang, Ming-Hui Chen, Elizabeth Schifano, Jing Wu, and Jun Yan.
\newblock Statistical methods and computing for big data.
\newblock \emph{Statistics and Its Interface}, 9\penalty0 (4):\penalty0 399,
  2016.

\bibitem[Wang and Rao(2002)]{Wang2002ELMissing}
Qihua Wang and J.~N.~K. Rao.
\newblock {Empirical likelihood-based inference under imputation for missing
  response data}.
\newblock \emph{The Annals of Statistics}, 30\penalty0 (3):\penalty0 896 --
  924, 2002.

\bibitem[Zhang et~al.(2013)Zhang, Duchi, and Wainwright]{Zhang2013dist}
Yuchen Zhang, John~C. Duchi, and Martin~J. Wainwright.
\newblock Communication-efficient algorithms for statistical optimization.
\newblock \emph{The Journal of Machine Learning Research}, 14\penalty0
  (68):\penalty0 3321--3363, 2013.

\end{thebibliography}
\newpage

\begin{figure}[H]
	\centering
	\subfigure[Kernel-based SGM estimators with $d=5$.]{
		\includegraphics[scale = 0.45]{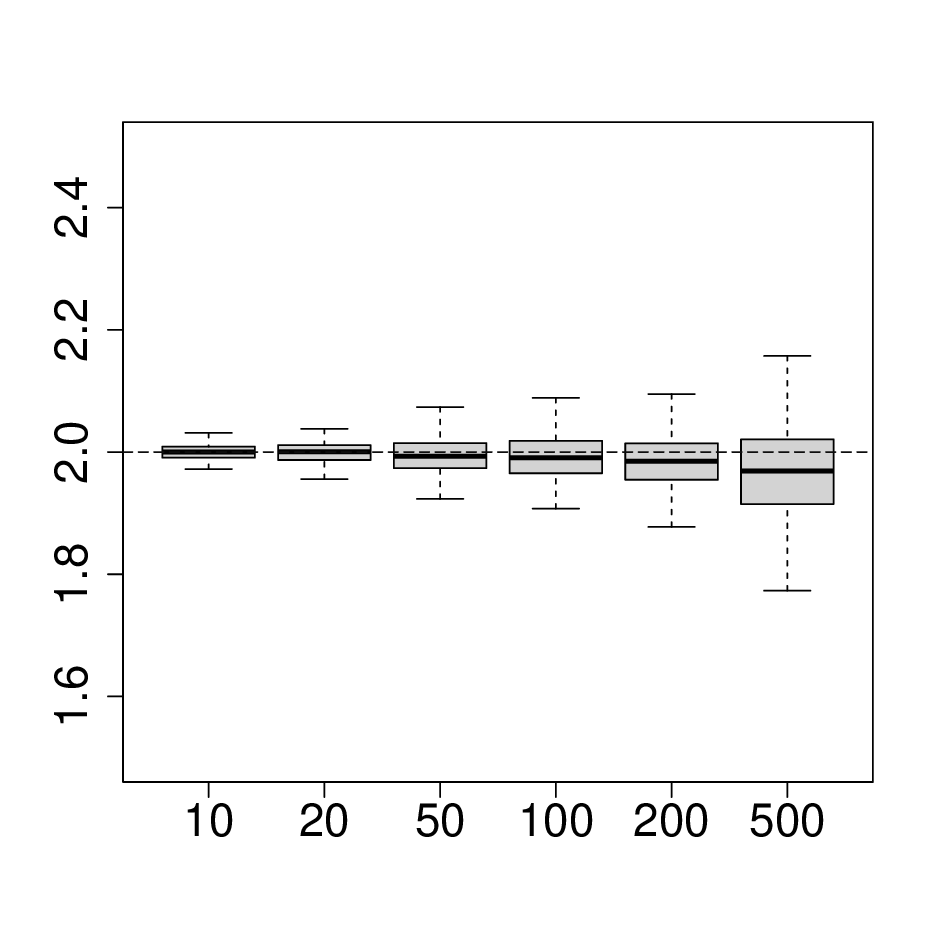}
	}
	\subfigure[KDI estimators with $d=5$.]{
		\includegraphics[scale = 0.45]{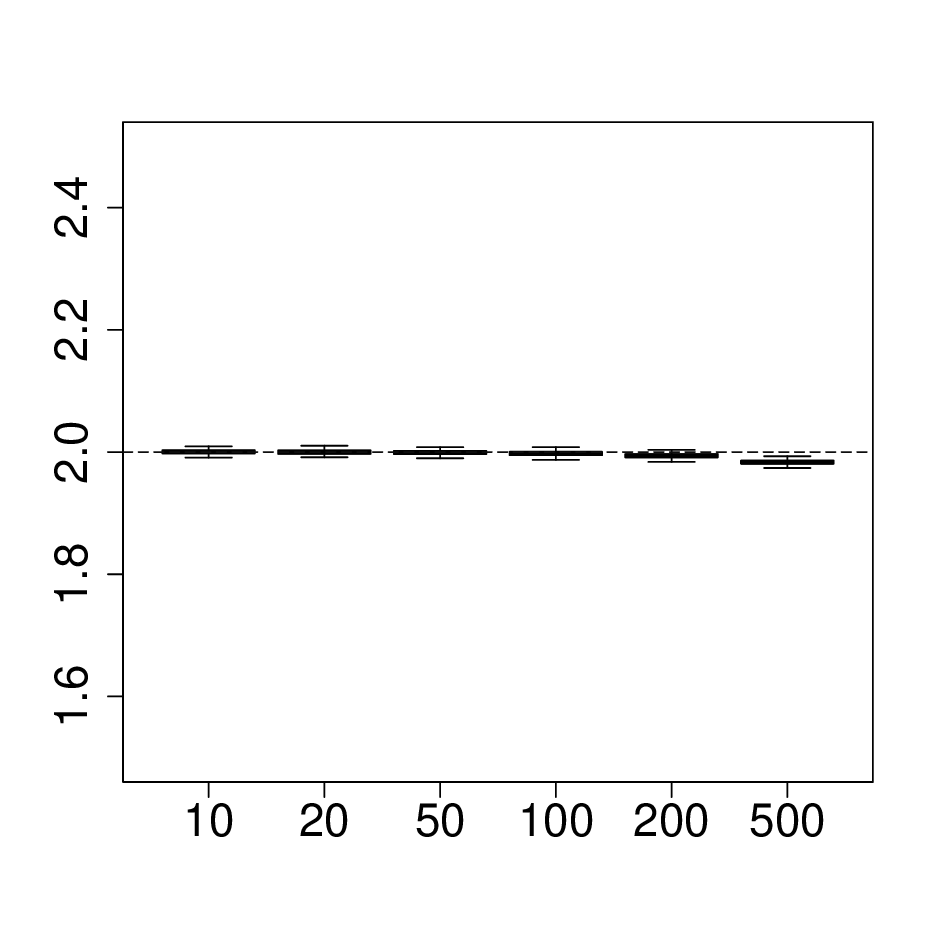}
	}
	\\
	\subfigure[Kernel-based SGM estimators with $d=15$.]{
		\includegraphics[scale = 0.45]{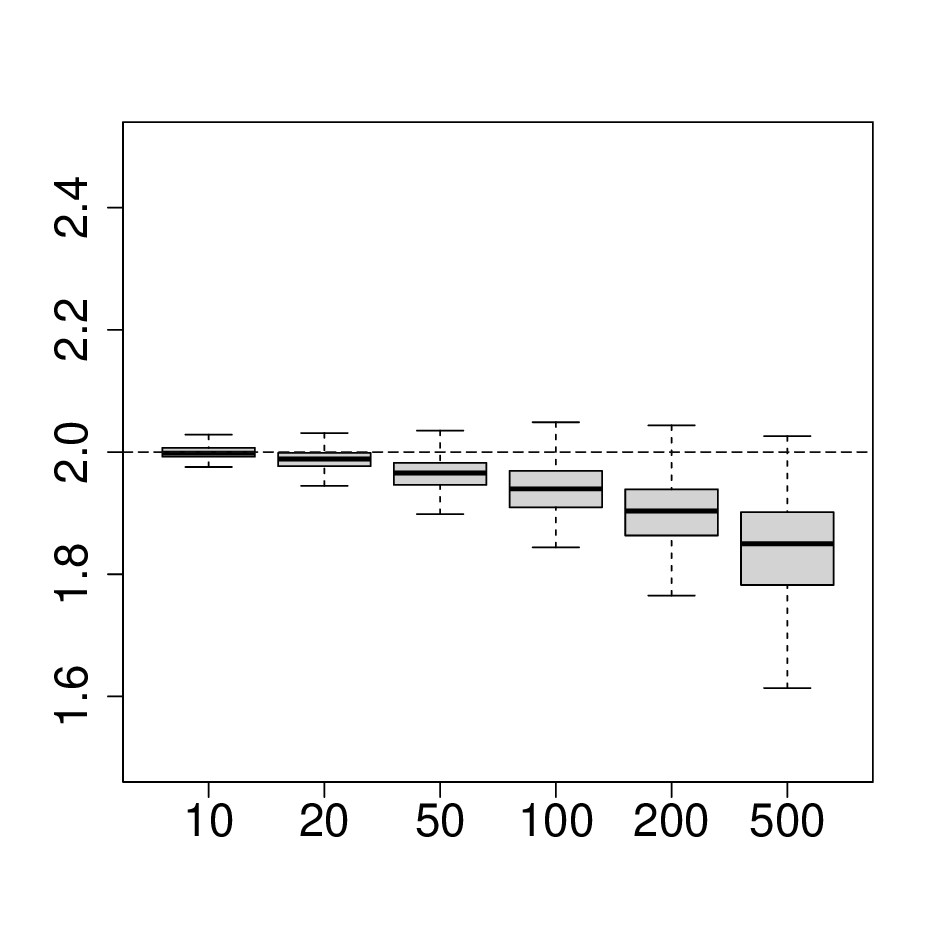}
	}
	\subfigure[KDI estimators with $d=15$.]{
		\includegraphics[scale = 0.45]{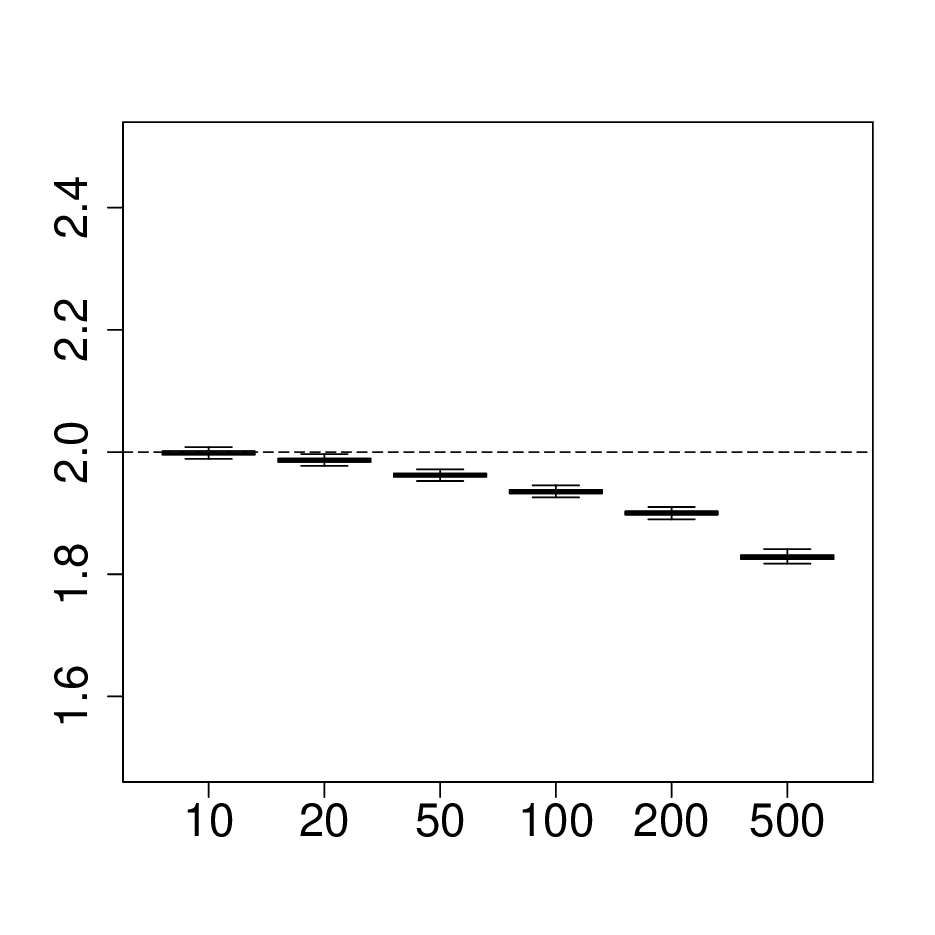}
	}
	\caption{Box-plots of the kernel-based estimators. Dashed line: the true parameter $\mu = 2$.}\label{fig: Linear_boxplot_kernel}
\end{figure}

\begin{table}[H]
	\centering
	\begin{tabular}{cccccc}
		\toprule
		& & \multicolumn{2}{c}{SGM} & \multicolumn{2}{c}{KDI}\\
		\cmidrule{3-6}
		$d$ &$L$ &\small{Bias} & \small{SE} & \small{Bias} & \small{SE}\\
		\midrule
		\multirow{6}{*}{5}
		&10  &0.000 &0.013  &0.000 &0.004\\
		&20 &-0.001 &0.017  &0.000 &0.004\\
		&50 &-0.006 &0.028 &-0.001 &0.004\\
		&100 &-0.008 &0.038 &-0.002 &0.004\\
		&200 &-0.016 &0.047 &-0.006 &0.004\\
		&500 &-0.035 &0.080 &-0.017 &0.004\\
		\midrule
		\multirow{6}{*}{15}
		&10 &-0.001 &0.011 &-0.001 &0.004\\
		&20 &-0.013 &0.015 &-0.013 &0.004\\
		&50 &-0.035 &0.025 &-0.038 &0.004\\
		&100 &-0.059 &0.039 &-0.065 &0.004\\
		&200 &-0.098 &0.054 &-0.100 &0.004\\
		&500 &-0.156 &0.086 &-0.172 &0.004\\
		\bottomrule
	\end{tabular}
	\caption{\label{table: Linear_kernel}The bias and SE of the kernel-based estimators.}
\end{table}

\begin{figure}[H]
	\centering
	\subfigure[Sieve-based SGM estimators with $d=5$.]{
		\includegraphics[scale = 0.45]{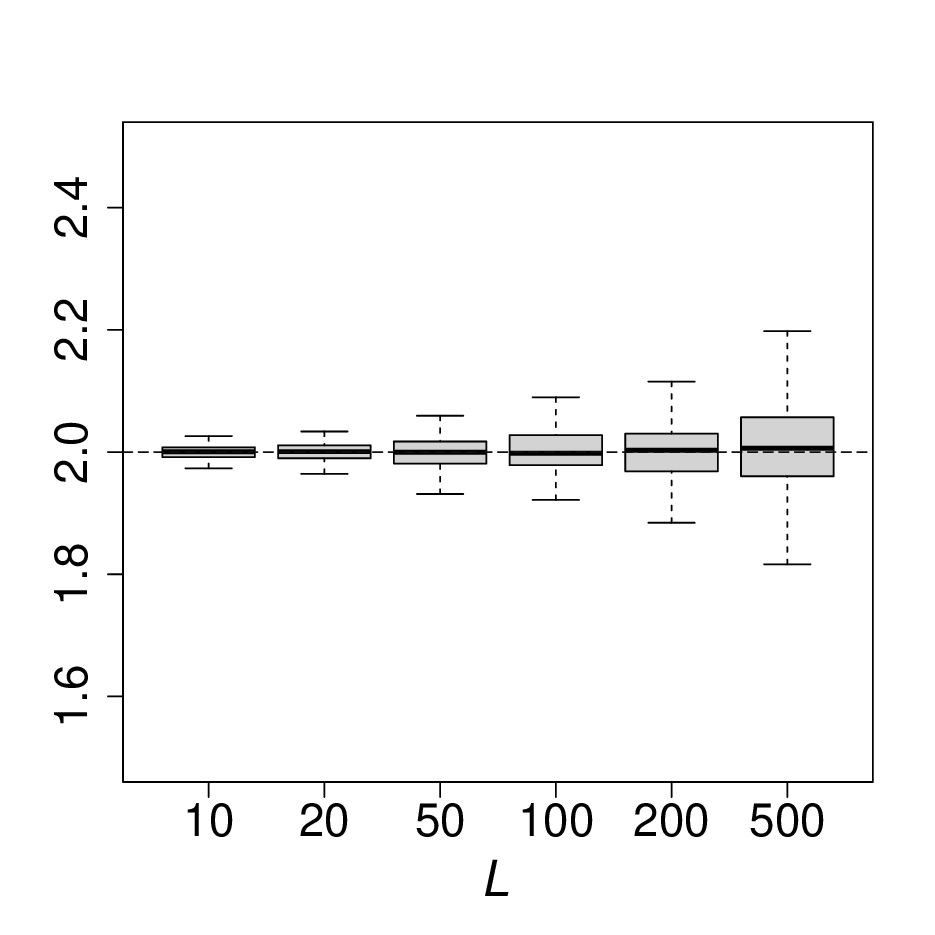}
	}
	\subfigure[SDI estimators with $d=5$.]{
		\includegraphics[scale = 0.45]{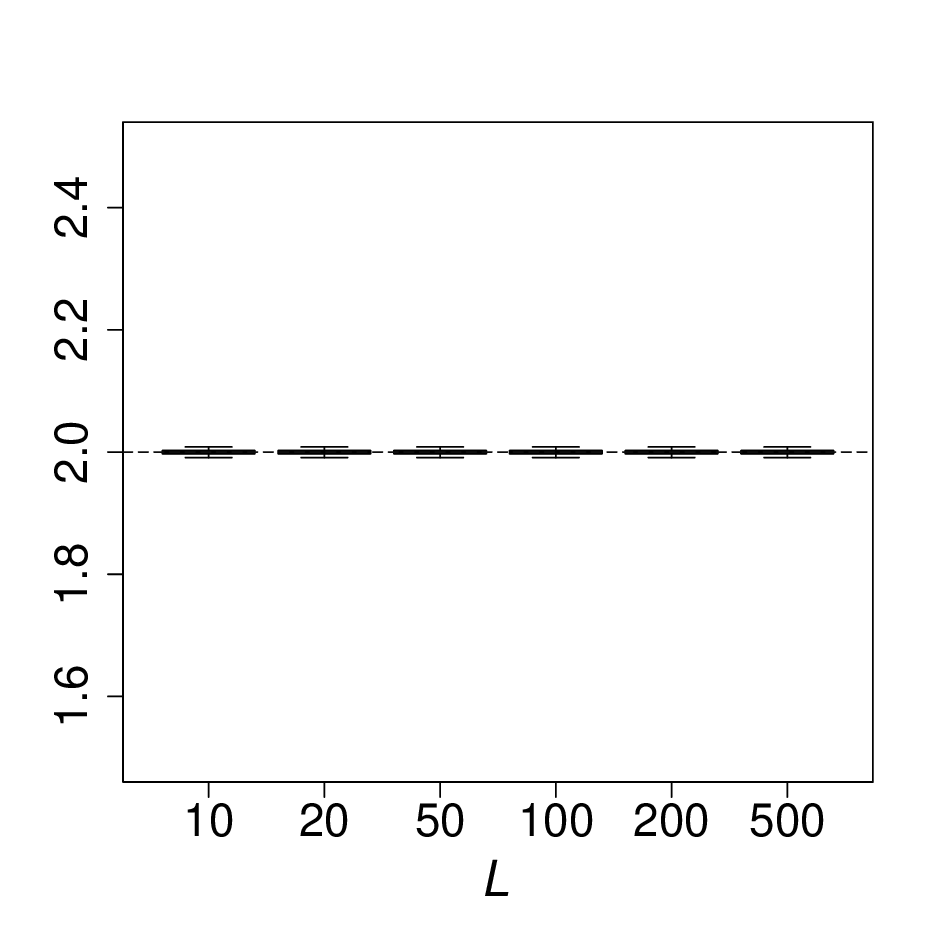}
	}
	\\
	\subfigure[Sieve-based SGM estimators with $d=15$.]{
		\includegraphics[scale = 0.45]{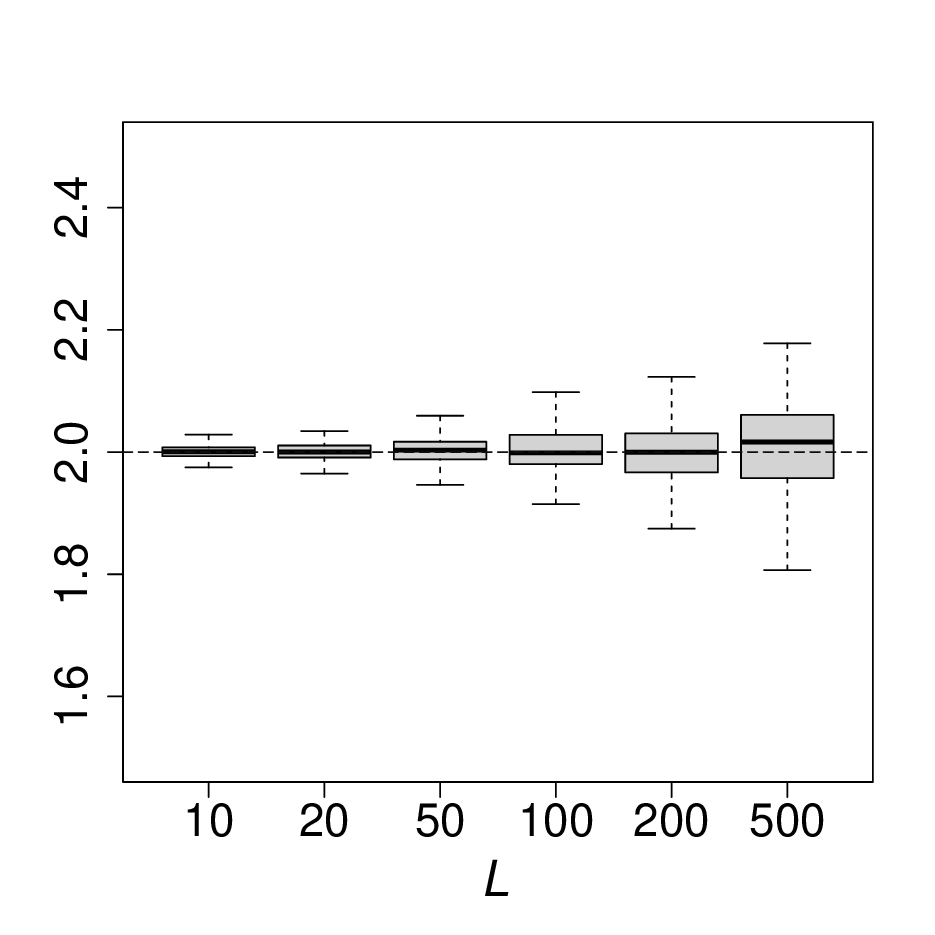}
	}
	\subfigure[SDI estimators with $d=15$.]{
		\includegraphics[scale = 0.45]{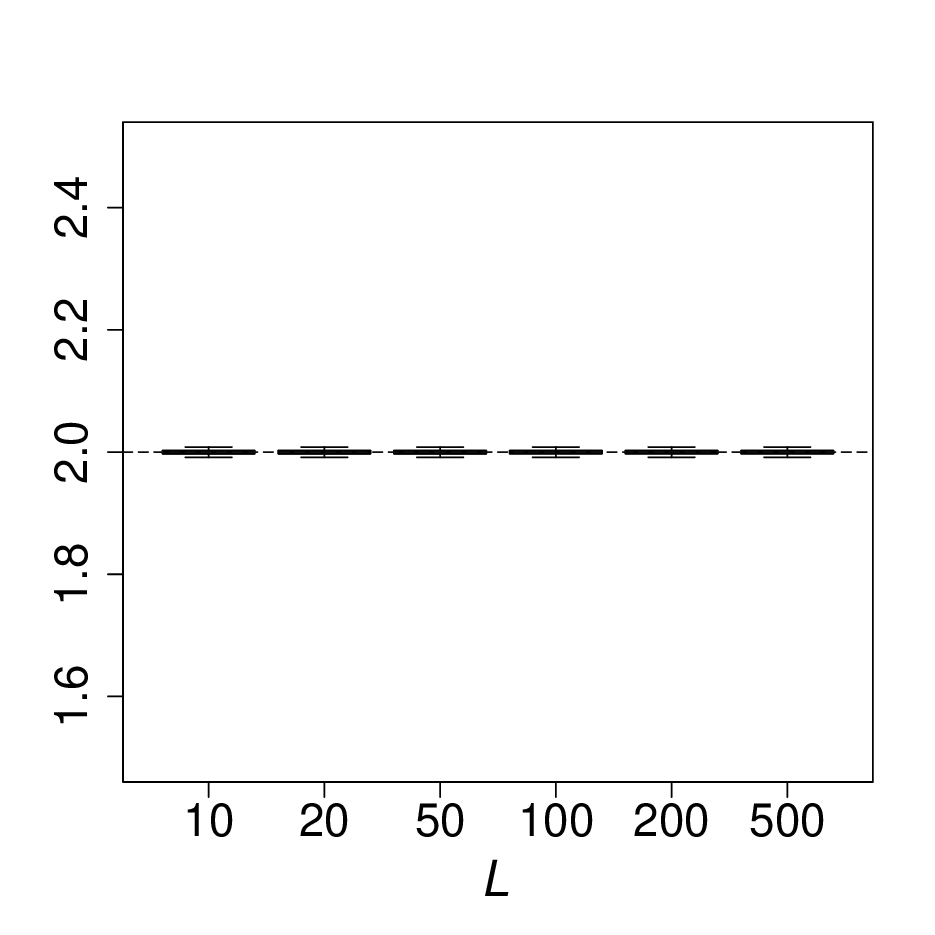}
	}\caption{Box-plots of the sieve-based estimators. Dashed line: the true parameter $\mu = 2$.}\label{fig: Linear_boxplot_sieve}
\end{figure}

\begin{table}[H]
	\centering
	\begin{tabular}{cccccc}
		\toprule
		& & \multicolumn{2}{c}{SGM} & \multicolumn{2}{c}{SDI}\\
		\cmidrule{3-6}
		$d$ &$L$ &\small{Bias} & \small{SE} & \small{Bias} & \small{SE}\\
		\midrule
		\multirow{6}{*}{5}
		&10  &0.000 &0.011    &0.000 &0.004\\
		&20  &0.000 &0.015    &0.000 &0.004\\
		&50 &-0.001 &0.026    &0.000 &0.004\\
		&100  &0.001 &0.036    &0.000 &0.004\\
		&200  &0.000 &0.046    &0.000 &0.004\\
		&500  &0.007 &0.074    &0.000 &0.004\\
		\midrule
		\multirow{6}{*}{15}
		&10  &0.000 &0.011    &0.000 &0.003\\
		&20  &0.001 &0.014    &0.000 &0.003\\
		&50  &0.002 &0.023    &0.000 &0.003\\
		&100  &0.004 &0.036    &0.000 &0.003\\
		&200 &-0.002 &0.049    &0.000 &0.003\\
		&500  &0.009 &0.077    &0.000 &0.003\\
		\bottomrule
	\end{tabular}
	\caption{\label{table: Linear_sieve}The bias and SE of the sieve-based estimators.}
\end{table}

\begin{figure}[H]
	\centering
	\subfigure[Kernel-based SGM estimators with $d=5$.]{
		\includegraphics[scale = 0.45]{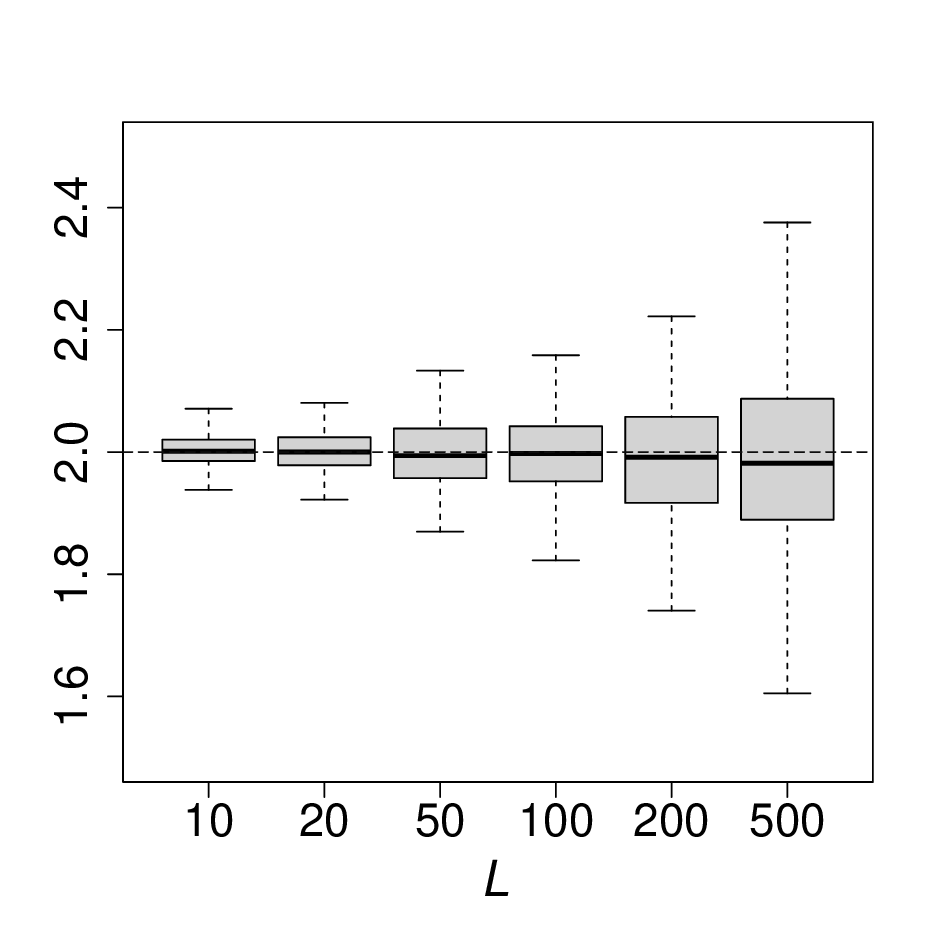}
	}
	\subfigure[KDI estimators with $d=5$.]{
		\includegraphics[scale = 0.45]{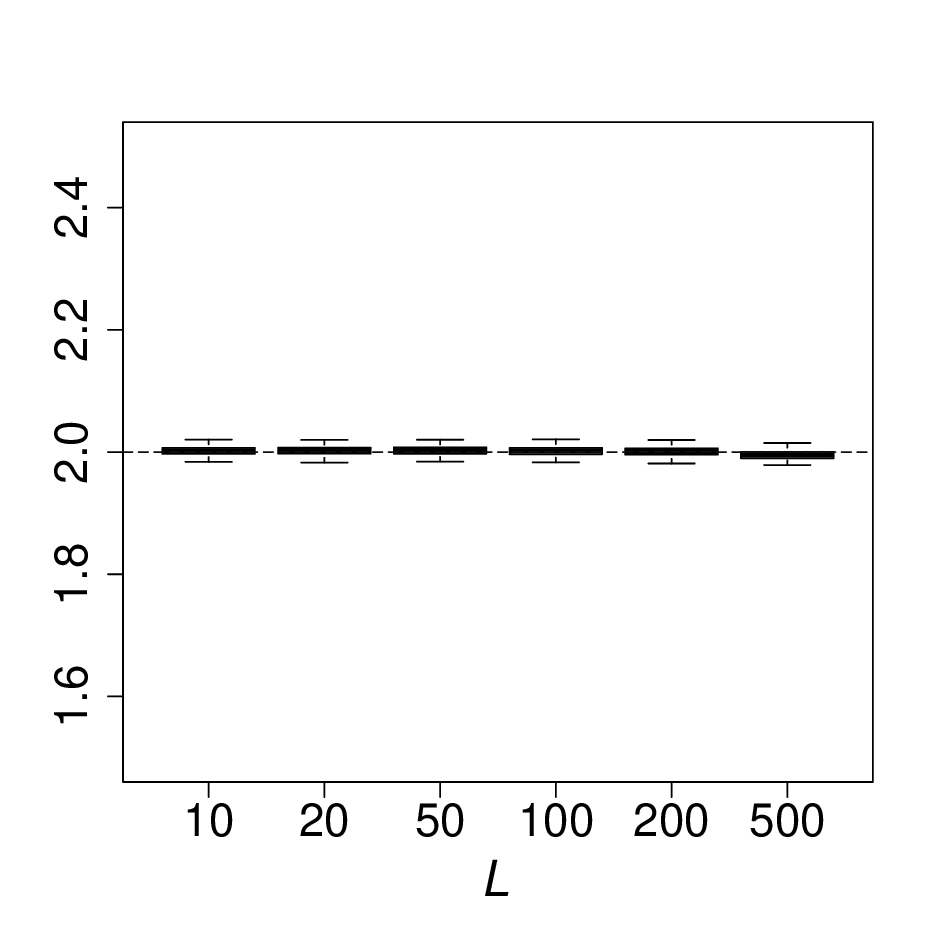}
	}
	\\
	\subfigure[Kernel-based SGM estimators with $d=15$.]{
		\includegraphics[scale = 0.45]{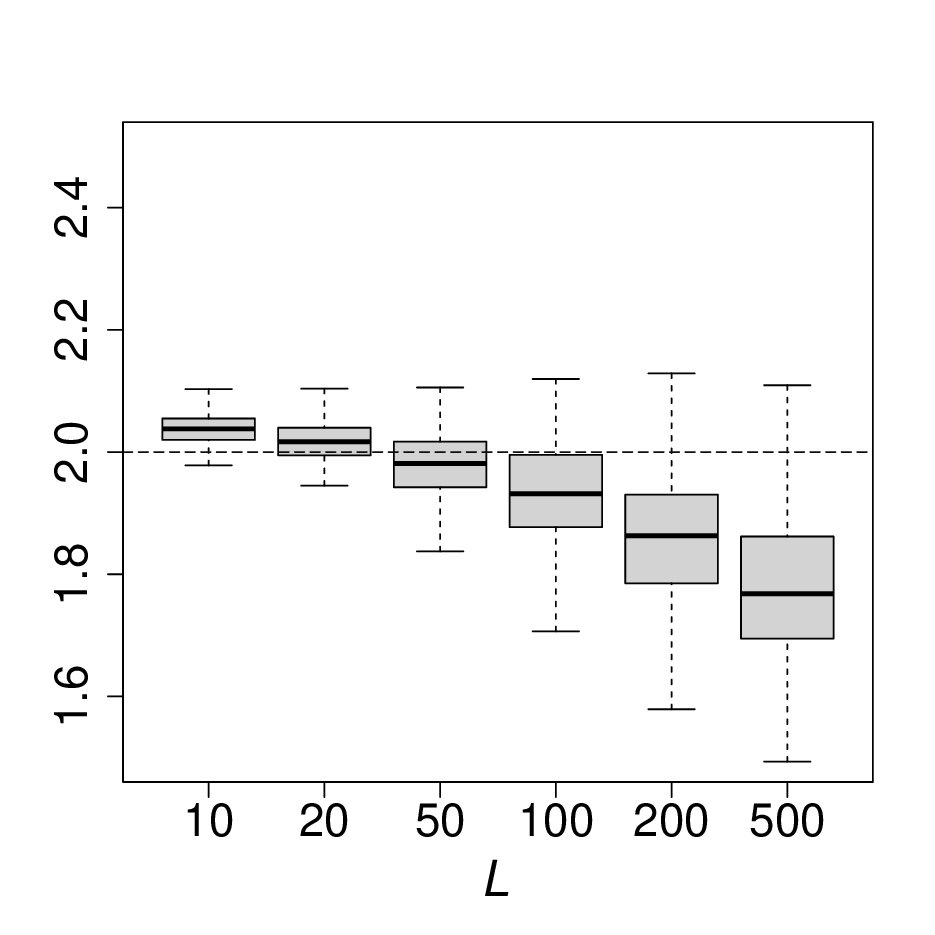}
	}
	\subfigure[KDI estimators with $d=15$.]{
		\includegraphics[scale = 0.45]{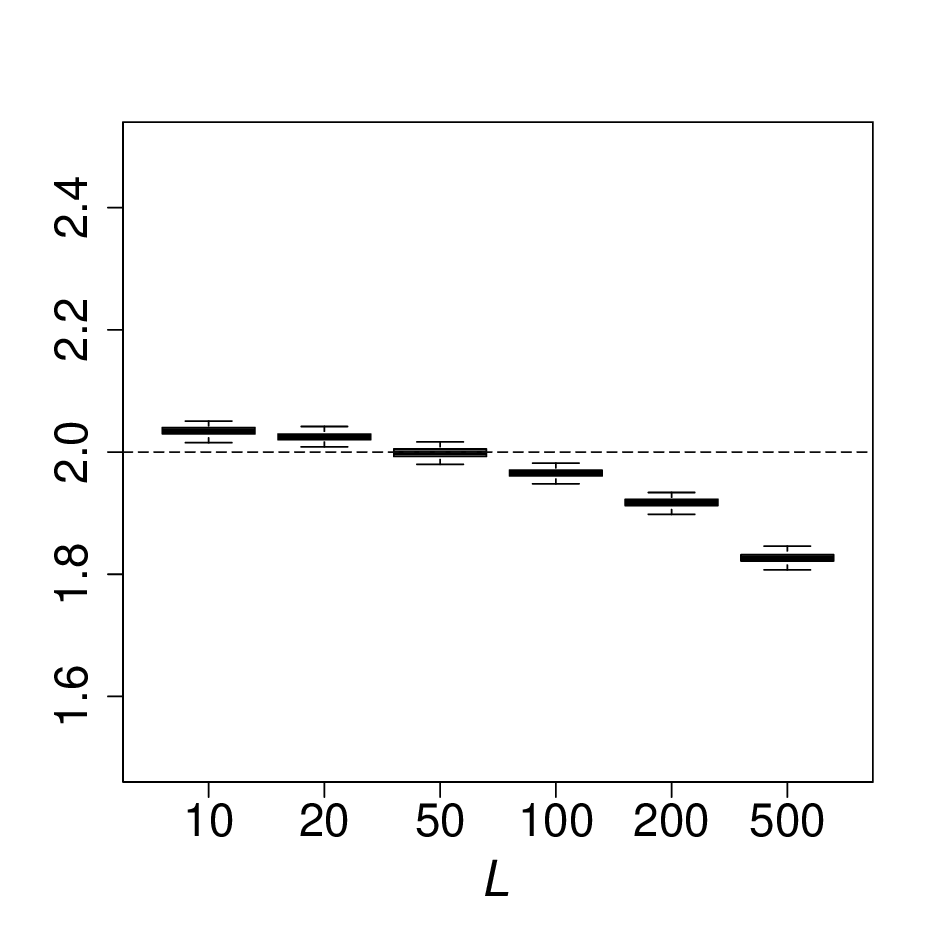}
	}
	\caption{Box-plots of the kernel-based estimators in the nonlinear setting. Dashed line: the true parameter $\mu = 2$.}\label{fig: Nonlinear_boxplot_kernel}
\end{figure}

\begin{table}[H]
	\centering
	\begin{tabular}{cccccc}
		\toprule
		& & \multicolumn{2}{c}{SGM} & \multicolumn{2}{c}{KDI}\\
		\cmidrule{3-6}
		$d$ &$L$ &\small{Bias} & \small{SE} & \small{Bias} & \small{SE}\\
		\midrule
		\multirow{6}{*}{5}
		&10  &0.002 &0.024  &0.002 &0.007\\
		&20  &0.001 &0.031  &0.002 &0.007\\
		&50 &-0.003 &0.055  &0.002 &0.007\\
		&100 &-0.004 &0.071  &0.002 &0.007\\
		&200 &-0.010 &0.097  &0.001 &0.007\\
		&500 &-0.009 &0.157 &-0.005 &0.007\\
		\midrule
		\multirow{6}{*}{15}
		&10  &0.036 &0.024  &0.035 &0.007\\
		&20  &0.020 &0.033  &0.025 &0.007\\
		&50 &-0.018 &0.055 &-0.001 &0.008\\
		&100 &-0.068 &0.080 &-0.034 &0.007\\
		&200 &-0.143 &0.106 &-0.083 &0.007\\
		&500 &-0.217 &0.133 &-0.173 &0.007 \\
		\bottomrule
	\end{tabular}
	\caption{\label{table: Nonlinear_kernel}The bias and SE of the kernel-based estimators in the nonlinear setting.}
\end{table}

\begin{figure}[H]
	\centering
	\subfigure[Sieve-based SGM estimators with $d=5$.]{
		\includegraphics[scale = 0.45]{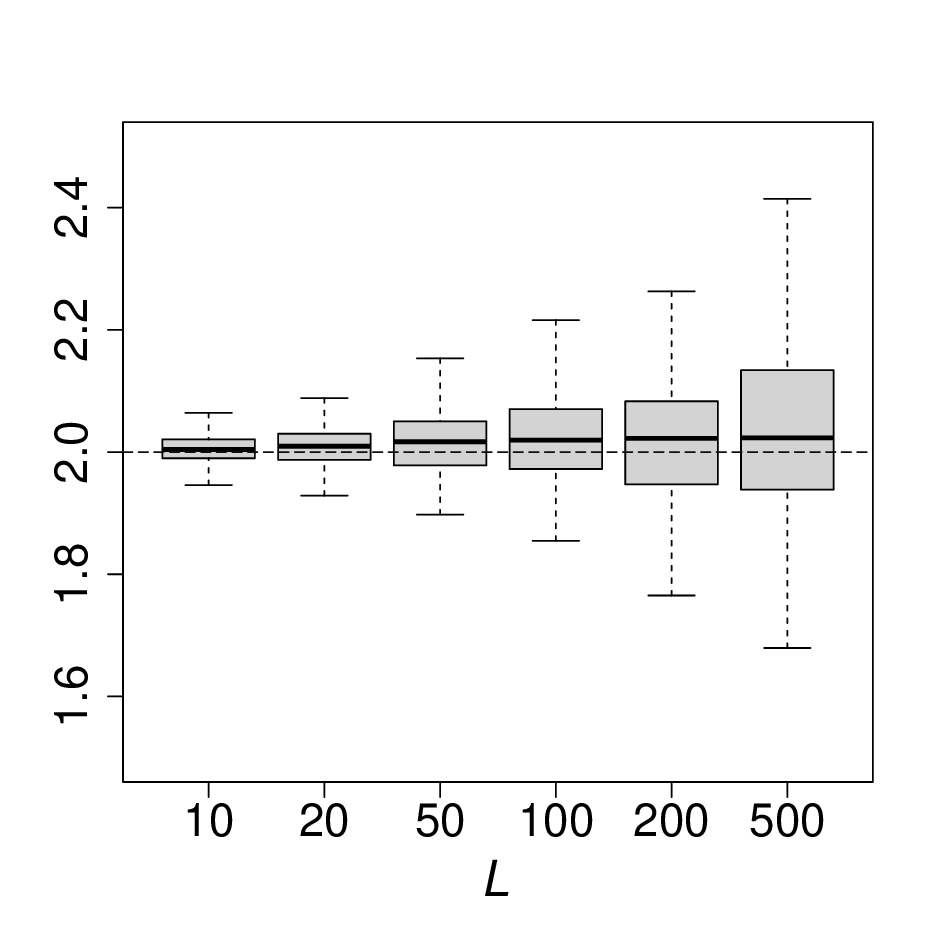}
	}
	\subfigure[SDI estimators with $d=5$.]{
		\includegraphics[scale = 0.45]{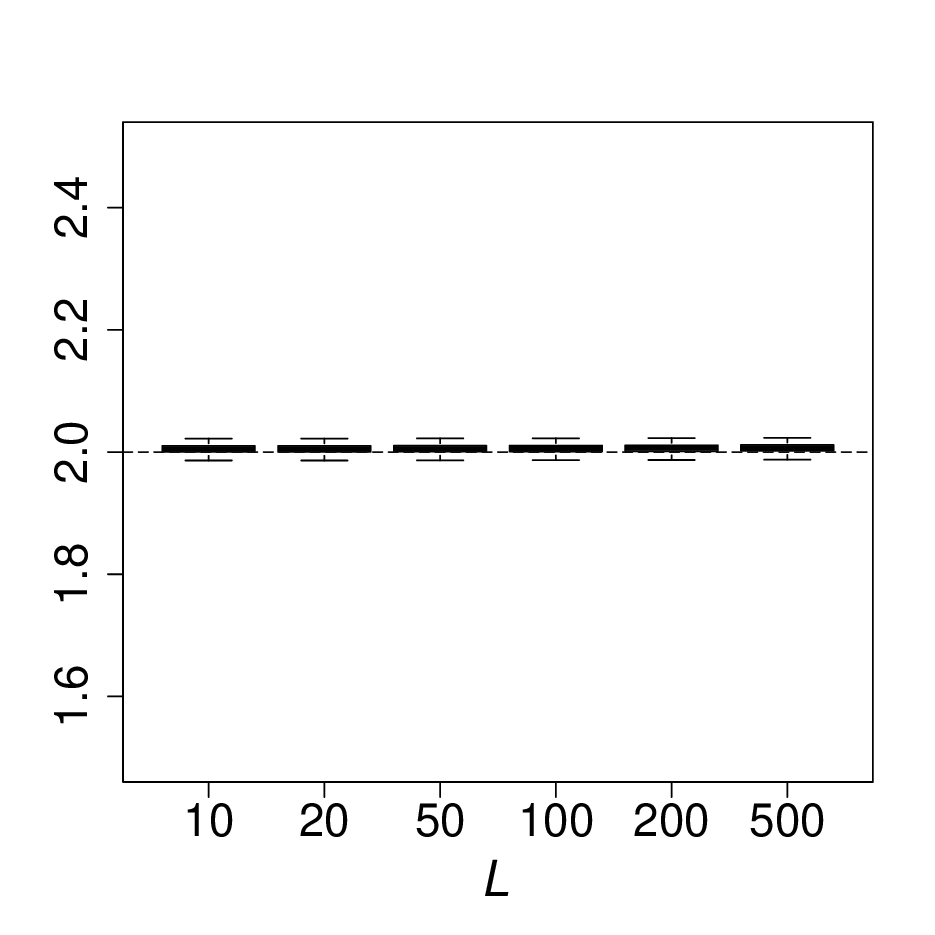}
	}
	\\
	\subfigure[Sieve-based SGM estimators with $d=15$.]{
		\includegraphics[scale = 0.45]{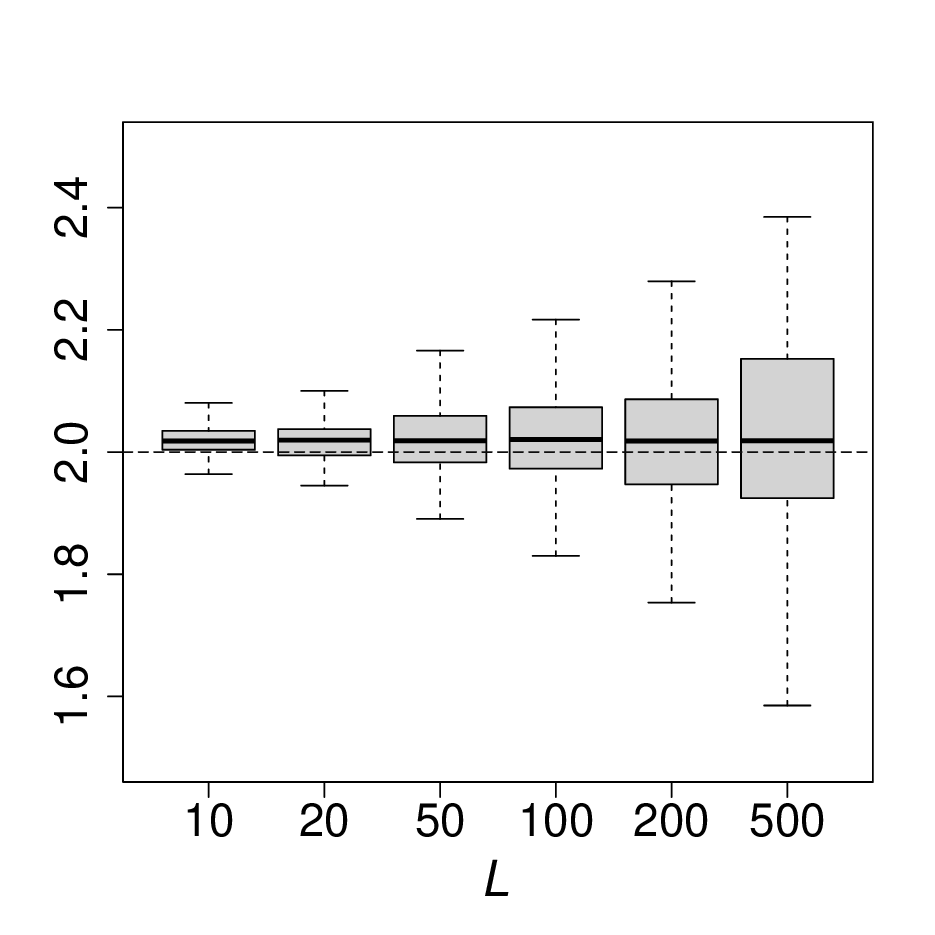}
	}
	\subfigure[SDI estimators with $d=15$.]{
		\includegraphics[scale = 0.45]{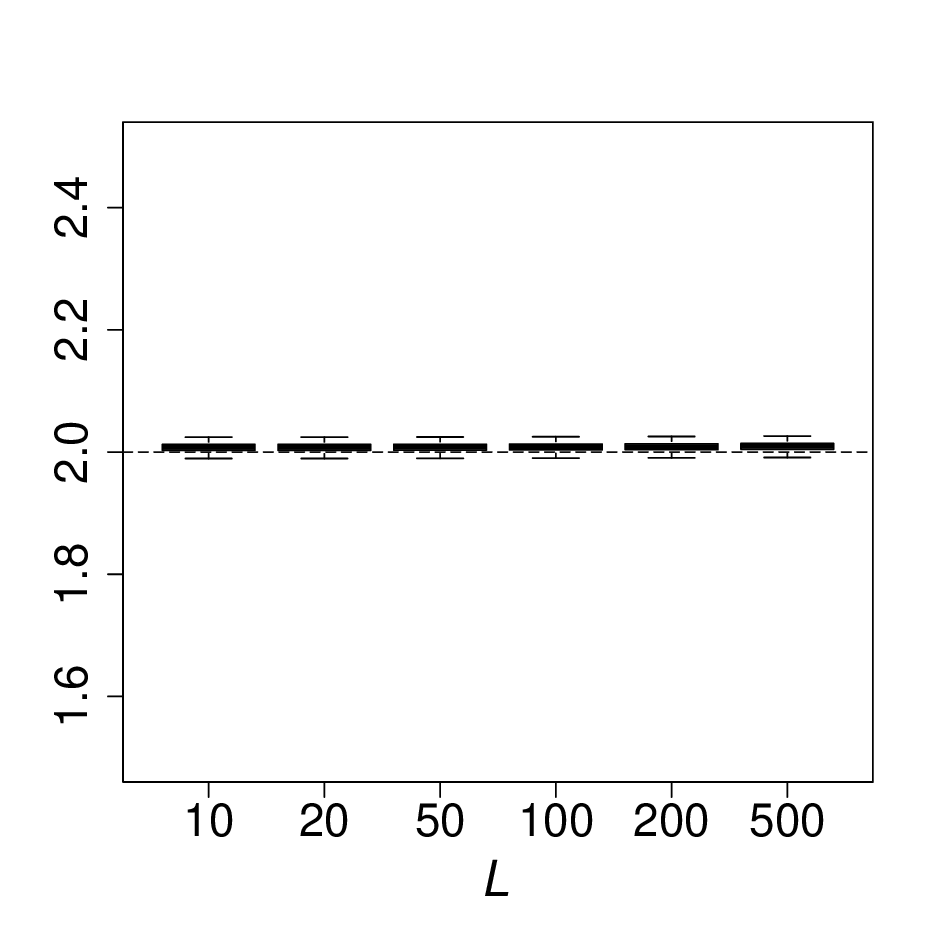}
	}\caption{Box-plots of the sieve-based estimators in the nonlinear setting. Dashed line: the true parameter $\mu = 2$.}\label{fig: Nonlinear_boxplot_sieve}
\end{figure}

\begin{table}[H]
	\centering
	\begin{tabular}{cccccc}
		\toprule
		& & \multicolumn{2}{c}{SGM} & \multicolumn{2}{c}{SDI}\\
		\cmidrule{3-6}
		$d$ &$L$ &\small{Bias} & \small{SE} & \small{Bias} & \small{SE}\\
		\midrule
		\multirow{6}{*}{5}
		&10  &0.032 &0.153 &0.007 &0.007\\
		&20 &0.016 &0.095 &0.006 &0.007\\
		&50 &0.023 &0.071 &0.006 &0.007\\
		&100 &0.016 &0.051 &0.005 &0.007\\
		&200 &0.009 &0.031 &0.005 &0.007\\
		&500 &0.004 &0.023 &0.005 &0.007\\
		\midrule
		\multirow{6}{*}{15}
		&10  &0.033 &0.156 &0.009 &0.007\\
		&20 &0.016 &0.099 &0.009 &0.007\\
		&50 &0.023 &0.073 &0.008 &0.007\\
		&100 &0.021 &0.054 &0.008 &0.007\\
		&200 &0.018 &0.031 &0.008 &0.007\\
		&500 &0.018 &0.023 &0.008 &0.007\\
		\bottomrule
	\end{tabular}
	\caption{\label{table: Nonlinear_sieve}The bias and SE of the sieve-based estimators in the nonlinear setting.}
\end{table}

\begin{table}[H]
	
	\begin{center}
		\begin{tabular}{*{6}{c}}
			\toprule
			&&\multicolumn{2}{c}{Linear} & \multicolumn{2}{c}{Noninear}\\
			\cmidrule{3-6} 
			&\small{$L$}& \small{$d=5$} & \small{$d=15$} & \small{$d=5$} & \small{$d=15$} \\
			\midrule
			\multirow{5}{20pt}{KDI} 
			&$10$&       310.10     &  1037.62     &   220.99    &     995.49\\
			&$20$&        57.44     &   236.55     &    37.33    &     229.99\\
			&$50$&         6.58     &    18.58     &     4.28    &      18.30\\
			&$100$&         1.68    &      3.16    &      1.11   &        3.18\\
			&$200$&         0.45    &      0.79    &      0.30   &        0.80\\
			&$500$&         0.09   &       0.14    &      0.06   &        0.14\\
			\specialrule{0em}{-3pt}{-3pt}\\
			\multicolumn{2}{l}{Non-Dist} 
			&42582.79 &131884.11  &45299.34 &132325.88 \\
			\midrule
			\multirow{5}{20pt}{SDI} 
			&$10$&         10.80   &   34.20   &    10.82 &       30.97\\
			&$20$&       7.025  &     19.08   &      7.01   &      19.66\\
			&$50$&        3.30   &     10.35   &     3.32    &       10.25 \\
			&$100$&        2.10     &    6.37    &     2.04    &      6.54\\
			&$200$&       1.15   &      3.89   &     1.16  &       3.69\\
			&$500$&         0.57     &      2.47  &     0.54    &     2.31 \\
			\specialrule{0em}{-3pt}{-3pt}\\
			\multicolumn{2}{l}{Non-Dist} 
			&   50.42 & 122.88  & 49.99 & 122.22\\
			\bottomrule
		\end{tabular}
	\end{center}
	\caption{CPU times of the classical non-distributed kernel/sieve estimators and the KDI and SDI estimators with different $L$.}\label{table: simtime}
\end{table}

\begin{table}[H]
	\begin{center}
		\begin{tabular}{*{7}{c}}
			\toprule
			\small{$c$}& \small{$0.1$} & \small{$0.5$} & \small{$0.9$} & \small{$1.3$} & \small{$1.7$} & \small{$2.1$}\\
			\midrule
			KDI&    1.54 &1.54 &1.55 &1.56 &1.56 &1.51\\
			SDI&   0.32 & 1.79 & 3.45&  6.01 & 9.51 &12.56\\
			\bottomrule
		\end{tabular}
	\end{center}
	\caption{CPU times of the KDI/SDI estimators in the nonlinear setting with $d=5$, $L = 100$ and different $c$'s in $\mathcal{C}$.}\label{table: time tuning parameter}
\end{table}

 ~\\
\newpage

\begin{figure}[H]
	\centering
	\subfigure[KDI estimator with $d=5$.]{
		\includegraphics[scale = 0.4]{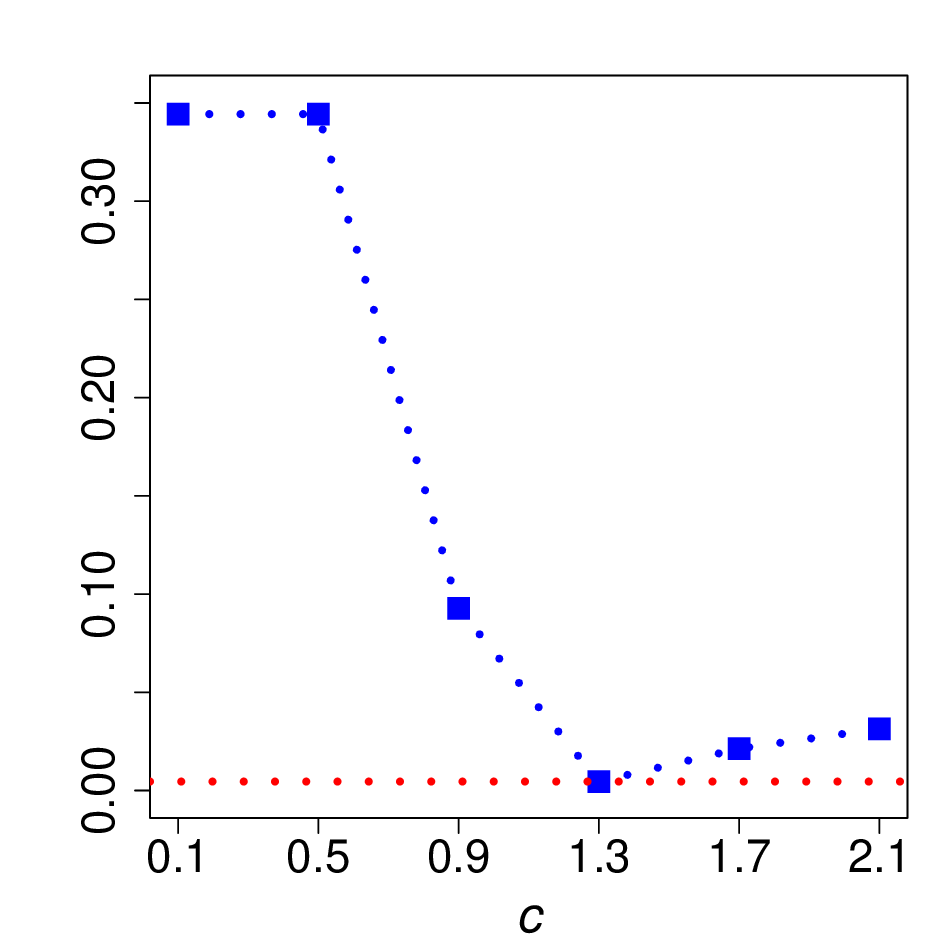}
	}
	\subfigure[KDI estimator with $d=15$.]{
		\includegraphics[scale = 0.4]{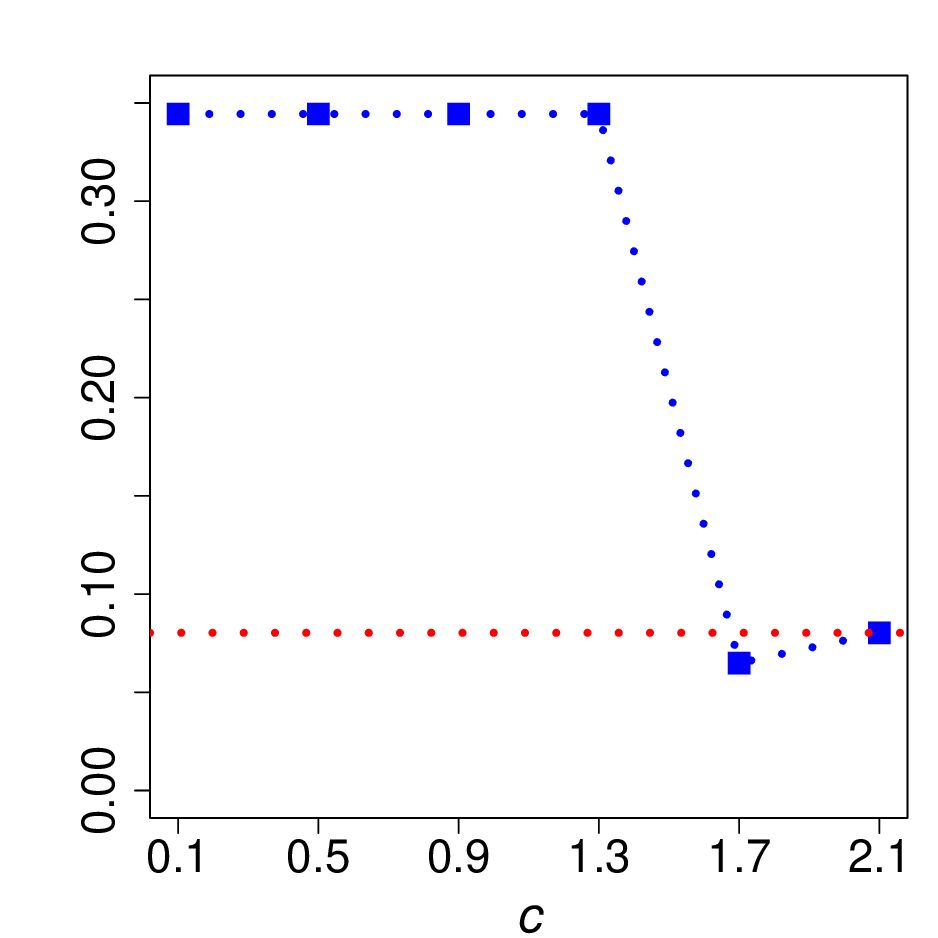}
	}
	\subfigure[SDI estimator with $d=5$.]{
		\includegraphics[scale = 0.4]{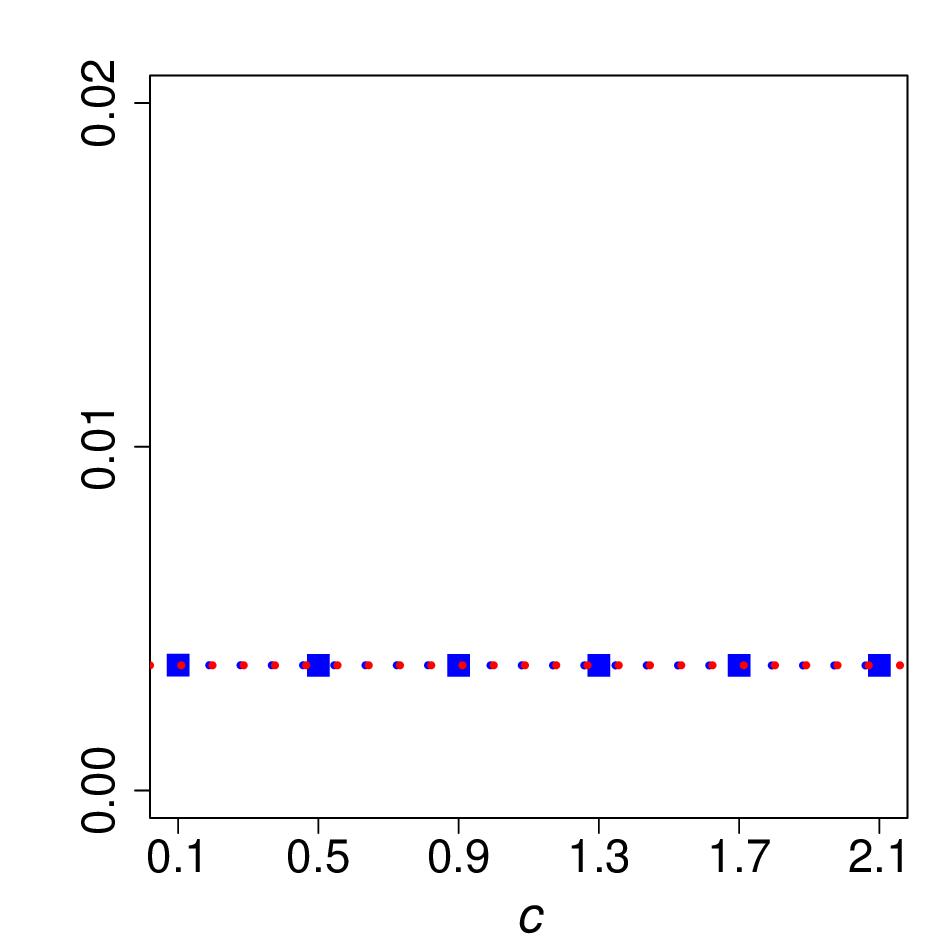}
	}
	\subfigure[SDI estimator with $d=15$.]{
		\includegraphics[scale = 0.4]{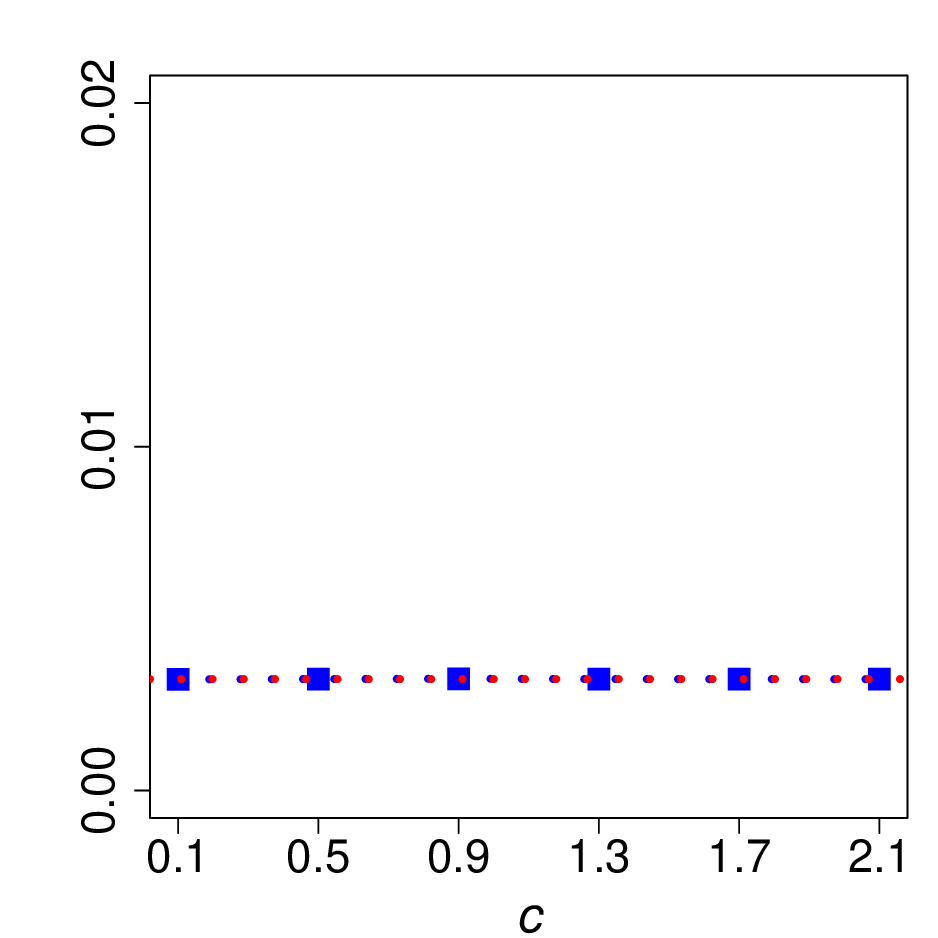}
	}
	\caption{Comparison of RMSE in the linear setting with $L = 100$. Solid line: the RMSE of KDI and SDI estimators with different $c$ respectively in the linear setting. Dashed line: the RMSE of KDI and SDI estimators estimator with selected bandwidth by the proposed DWCV algorithm.}\label{fig: Linear_bws}
\end{figure}

\begin{figure}[H]
	\centering
	\subfigure[KDI estimator with $d=5$.]{
		\includegraphics[scale = 0.4]{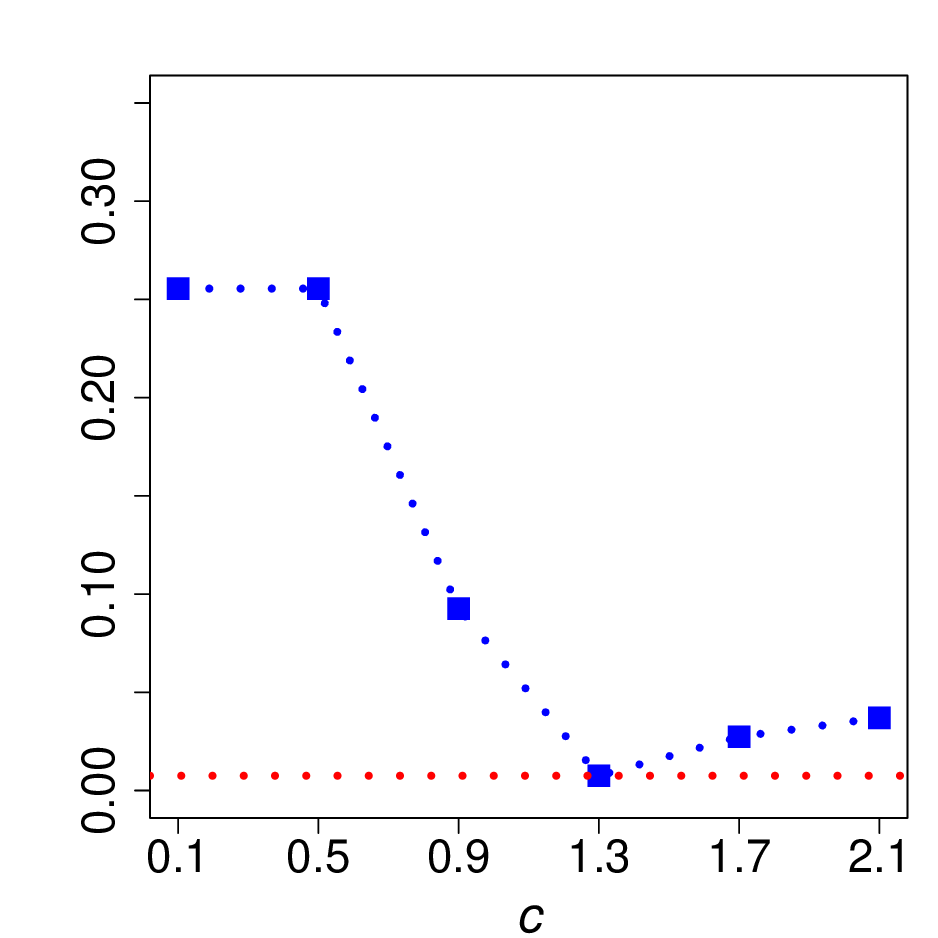}
	}
	\subfigure[KDI estimator with $d=15$.]{
		\includegraphics[scale = 0.4]{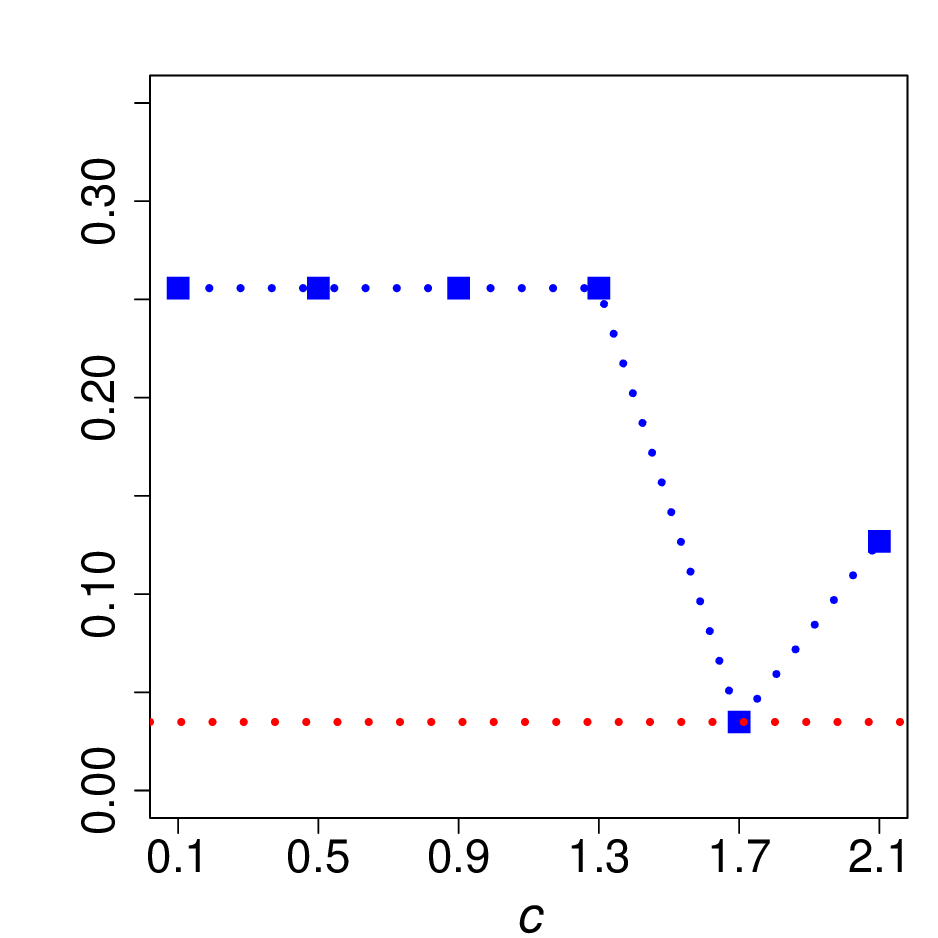}
	}
	\subfigure[SDI estimator with $d=5$.]{
		\includegraphics[scale = 0.4]{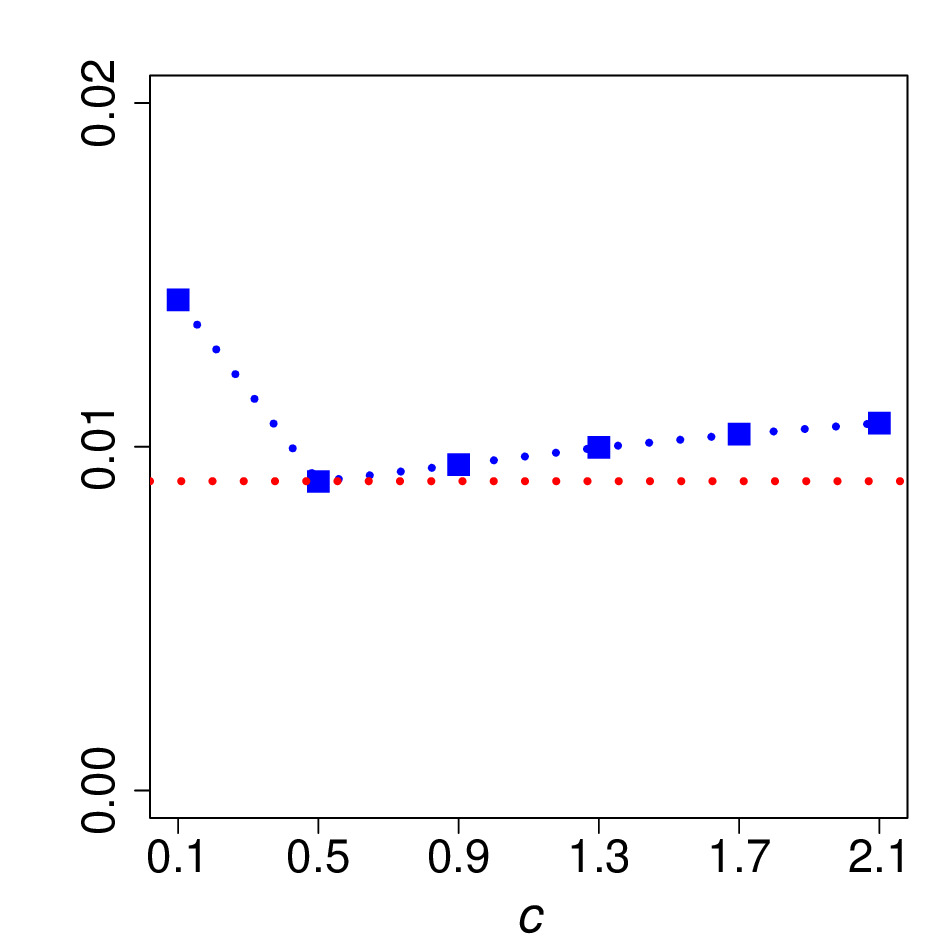}
	}
	\subfigure[SDI estimator with $d=15$.]{
		\includegraphics[scale = 0.4]{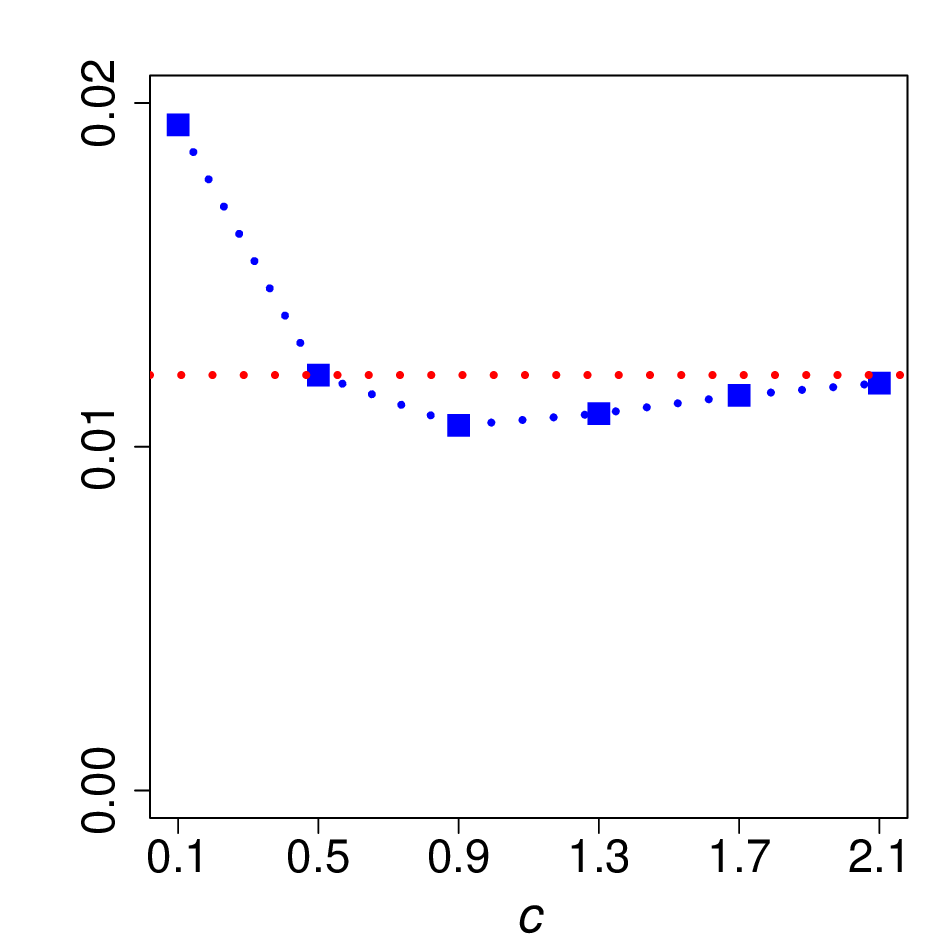}
	}
	\caption{Comparison of RMSE in the nonlinear setting with $L = 100$. Solid line: the RMSE of KDI and SDI estimators with different $c$ in the nonlinear setting. Dashed line: the RMSE of KDI and SDI estimators with selected bandwidth by the proposed DWCV algorithm.}\label{fig: Nonlinear_bws}
\end{figure}

\begin{table}[H]
	\centering
	\begin{tabular}{cccccc}
		\toprule
		&\multicolumn{2}{c}{$d=5$} & \multicolumn{2}{c}{$d=15$}\\
		\cmidrule{2-5}
		$L$ &\small{Bias} & \small{SE}  &\small{Bias} & \small{SE}\\
		\midrule
		10  &0.000 &0.010 &0.002 &0.010\\
		20  &0.000 &0.010 &0.002 &0.010\\
		50  &0.000 &0.010 &0.003 &0.010\\
		100 &0.000 &0.010 &0.004 &0.010\\
		200 &0.000 &0.010 &0.005 &0.010\\
		500 &0.001 &0.010 &0.007 &0.010\\
		\bottomrule
	\end{tabular}
	\caption{\label{table: br}The bias and SE of the SDI method when $1/\pi(x)$ can be approximated well.}
\end{table}

\begin{figure}[h]
	\centering
	\subfigure[Kernel-based estimators]{
		\includegraphics[scale = 0.45]{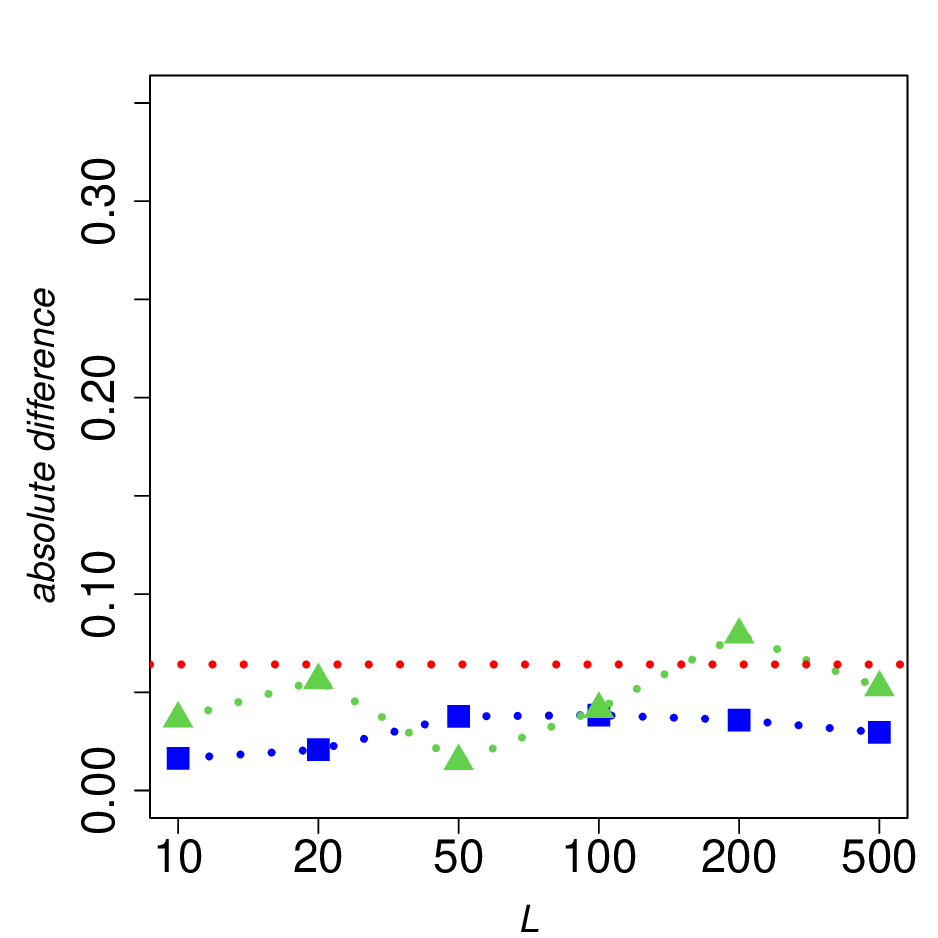}
	}
	\subfigure[Sieve-based estimators]{
		\includegraphics[scale = 0.45]{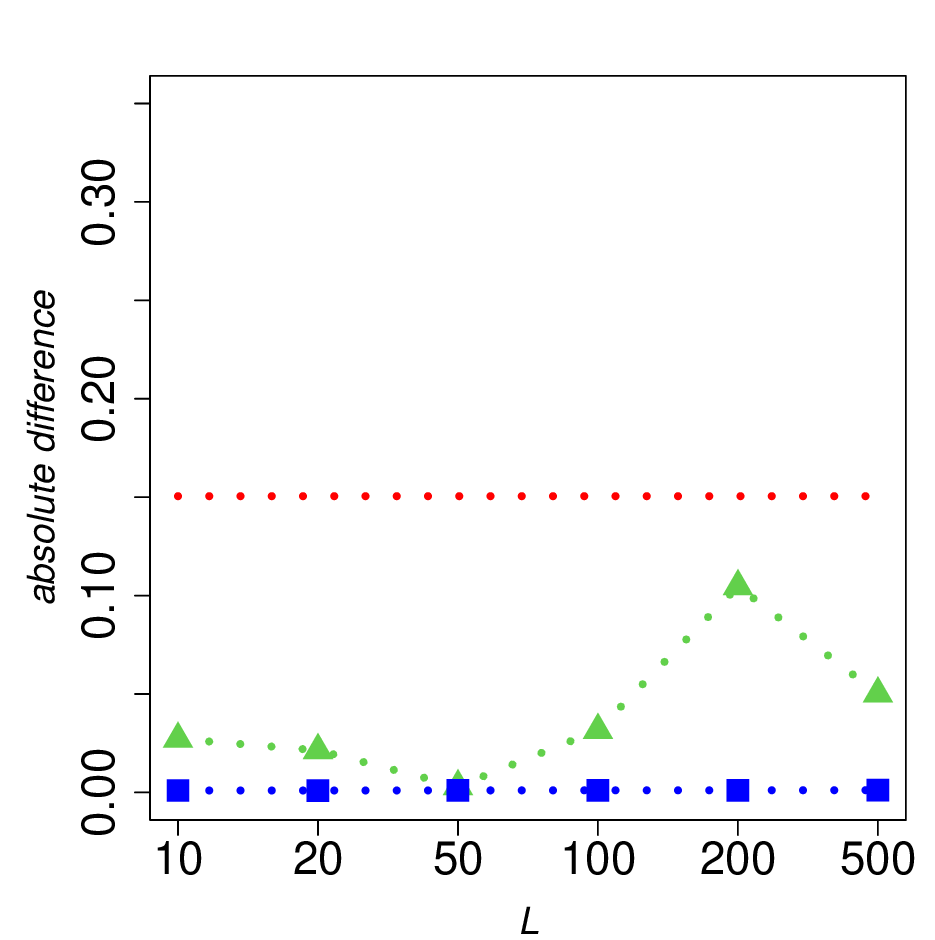}
	}
	\caption{The absolute value curves of the differences 
		between the estimators and their benchmarks. Red dotted lines are for the complete-case estimator, dotted lines with green 
		triangle are for SGM estimators and dotted lines with blue square are for distributed imputation estimators.
		The benchmark is $\hat{\mu}_{\mathbb{K}}$ in (a) and $\hat{\mu}_{\mathbb{S}}$ in (b), respectively.}\label{fig: movie}
\end{figure}


\begin{table}[H]
	\centering
	\begin{tabular}{ccccccc}
		\toprule
		& & & \multicolumn{2}{c}{DCV} & \multicolumn{2}{c}{DWCV}\\
		\cmidrule{4-5}\cmidrule{6-7}
		Method & Setting &$d$ &\small{RMSE} & \small{time} & \small{RMSE} & \small{time}\\
		\midrule
		\multirow{4}{*}{KDI} &\multirow{2}{*}{Linear}
		&5 &0.021 &8.45 &0.005 &8.49\\
		& &15 &0.080 &24.38 &0.077 &24.68\\
		\cmidrule{2-7}
		& \multirow{2}{*}{Nonlinear}
		&5  &0.008 &8.46  &0.008 &8.49 \\
		& &15 &0.127 &24.63  &0.036 &24.78\\
		\midrule
		\multirow{4}{*}{SDI} & \multirow{2}{*}{Linear}
		&5  &0.004 &3.50  &0.004 &2.34 \\
		& &15 &0.004 &4.74  &0.004 &4.34\\
		\cmidrule{2-7}
		& \multirow{2}{*}{Nonlinear}
		&5 &0.009 &2.03 &0.009 &2.33\\
		& &15 &0.012 &3.58 &0.012 &4.32\\
		\bottomrule
	\end{tabular}
	\caption{\label{table: ablation} RMSE and computing time of the KDI and SDI estimators.}
\end{table}

\end{document}